\newtheorem{definition}{Definition}[section]
\newtheorem{prop}{Proposition}[section]
\newtheorem{theorem}{Theorem}[section]
\newtheorem{lemma}{Lemma}[section]
\newtheorem{remark}{Remark}[section]
\newtheorem{problem}{Problem}[section]
\date{}
\newcommand\blfootnote[1]{%
  \begingroup
  \renewcommand\thefootnote{}\footnote{#1}%
  \addtocounter{footnote}{-1}%
  \endgroup
}
\newcommand{\Addresses}{{
  \bigskip
  \footnotesize

  Jian-Feng Cai, \textsc{Department of Mathematics, The Hong Kong University of Science and Technology, Clear Water Bay, Kowloon, Hong Kong SAR, China}\par\nopagebreak
  \textit{E-mail address}, Jian-Feng Cai: \texttt{jfcai@ust.hk}

  \medskip

  Zhiqiang Xu , \textsc{LSEC, Inst.~Comp.~Math., Academy of
Mathematics and System Science,  Chinese Academy of Sciences, Beijing, 100091, China
\newline
School of Mathematical Sciences, University of Chinese Academy of Sciences, Beijing 100049, China}\par\nopagebreak
  \textit{E-mail address}, Zhiqiang Xu: \texttt{xuzq@lsec.cc.ac.cn}

  \medskip

  Zili Xu, \textsc{Department of Mathematics, The Hong Kong University of Science and Technology, Clear Water Bay, Kowloon, Hong Kong SAR, China}\par\nopagebreak
  \textit{E-mail address}, Zili Xu: \texttt{xuzili@ust.hk}
}}
\begin{document}
\baselineskip 14pt
\bibliographystyle{plain}

\title{Interlacing Polynomial Method for the Column Subset Selection Problem
\blfootnote{The work of Jian-Feng Cai was supported in part by the Hong Kong Research Grants Council GRF under Grants 16310620 and 16306821, and in part by Hong Kong Innovation and Technology Fund MHP/009/20. The work of Zhiqiang Xu is supported by the National Science Fund for Distinguished Young Scholars (12025108) and the National Nature Science Foundation of China  (12021001, 12288201).}
}

\author{Jian-Feng Cai, Zhiqiang Xu, and Zili Xu}

\maketitle

\begin{abstract}This paper investigates the spectral norm version of the column subset selection problem. Given a matrix $\mathbf{A}\in\mathbb{R}^{n\times d}$ and a positive integer $k\leq\text{rank}(\mathbf{A})$, the objective is to select exactly $k$ columns of $\mathbf{A}$ that minimize the spectral norm of the residual matrix after projecting $\mathbf{A}$ onto the space spanned by the selected columns. We use the method of interlacing polynomials introduced by Marcus-Spielman-Srivastava to derive a new upper bound on the minimal approximation error. This new bound is asymptotically sharp when the matrix $\mathbf{A}\in\mathbb{R}^{n\times d}$ obeys a spectral power-law decay. 
The relevant expected characteristic polynomials can be written as an extension of the expected polynomial for the restricted invertibility problem, incorporating two extra variable substitution operators.
Finally, we propose a deterministic polynomial-time algorithm that achieves this error bound up to a computational error.

\end{abstract}

\section{Introduction}

\subsection{The column subset selection problem}

The column subset selection has gained significant attention recently due to its wide applications in machine learning \cite{GE03,BMD09}, scientific computing  \cite{CH92,DMM08}, and signal processing \cite{BRN10}. This problem involves approximating a data matrix by selecting a limited number of columns from the matrix. Specifically, it can be formulated as follows:
\begin{problem}{\rm (The column subset selection problem)}\label{problem1.1}
Given a matrix $\mathbf{A}\in\mathbb{R}^{n\times d}$ of rank $t\leq \min\{n,d\}$ and a positive integer $k\leq t$, find a subset $S\subset[d]:=\{1,\ldots,d\}$ of size $k$ such that the following residual
\begin{equation*}
\Vert\mathbf{A}-\mathbf{A}_S\mathbf{A}_S^{\dagger}\mathbf{A} \Vert_{\xi}= \min\limits_{ \mathbf{X}\in\mathbb{R}^{k\times d}} \Vert\mathbf{A}-\mathbf{A}_S\mathbf{X} \Vert_{\xi}
\end{equation*}
is minimized over all possible $\binom{d}{k}$ choices for the subset $S$. Here, $\mathbf{A}_S\in \mathbb{R}^{n\times k}$ denotes the column submatrix of $\mathbf{A}$ consisting of columns indexed in the set $S$, $\mathbf{A}_S^{\dagger}\in \mathbb{R}^{k\times n}$ denotes the Moore-Penrose pseudoinverse of $\mathbf{A}_S$, and  $\xi=2$ or $\rm F$ denotes the spectral or Frobenius norm.
\end{problem}

Problem \ref{problem1.1} aims to construct the \emph{best} low-rank approximation $\mathbf{A}_S\mathbf{A}_S^{\dagger}\mathbf{A}$ of an input matrix $\mathbf{A}$ by projecting $\mathbf{A}$ onto the space spanned by a small number of carefully selected columns in $\mathbf{A}$. Note that the columns of $\mathbf{A}_S\mathbf{A}_S^{\dagger}\mathbf{A}$ are linear combinations of the columns of $\mathbf{A}_S$. Thus, compared with other low-rank approximation methods, such as truncated singular value decomposition, the column subset selection can take advantage of the sparsity of the input matrix and produce results that are easily interpretable in terms of the input matrix. Various algorithms have been proposed to address this problem  \cite{GE96,DRVW06,DR10,BMD09}.

\subsection{Our contributions}


In this paper, we investigate the column subset selection problem from a theoretical perspective, with particular emphasis on the spectral norm, i.e., $\xi=2$. Our primary result for Problem \ref{problem1.1} is presented in the following theorem.

\begin{theorem}\label{thm1.1}
Let $\mathbf{A}=[\mathbf{a}_1,\ldots,\mathbf{a}_d]\in\mathbb{R}^{n\times d}$ be a matrix of rank $t\leq \min\{d,n\}$. For each $1\leq i\leq t$, let $\lambda_i$ be the $i$-th largest eigenvalue value of $\mathbf{A}^{\rm T}\mathbf{A}$. Assume that $\lambda_t<\lambda_1$. Let $\alpha$ and $\beta$ be two real numbers such that
\begin{equation*}
\alpha^{-1}=\frac{1}{t} \sum_{i=1}^t \lambda_i^{-1}\quad\text{\rm and}\quad  \beta=\frac{\lambda_t^{-1}-\alpha^{-1}}{\lambda_t^{-1}-\lambda_1^{-1}}\in(0,1).
\end{equation*}
Then for any positive integer $k$ satisfying $ \beta\cdot t\leq k< t$, there exists a subset $S\subset[d]$ of size $k$ such that $\text{\rm rank}(\mathbf{A}_S)=k$ and
\begin{equation}\label{thm1.1:eq1}
\Vert\mathbf{A}-\mathbf{A}_{S}\mathbf{A}_{S}^{\dagger}\mathbf{A} \Vert_2^2\leq \frac{1}{1+(\Vert\mathbf{A}\Vert_2^2\cdot \alpha^{-1}-1)\cdot \gamma_{\mathbf{A},k}}\cdot \Vert\mathbf{A}\Vert_2^2,
\end{equation}
where $\gamma_{\mathbf{A},k}:=\bigg(\sqrt{\frac{k}{t}}-\sqrt{\frac{\beta}{1-\beta}\cdot (1-\frac{k}{t})} \bigg)^2\in[0,1)$.
\end{theorem}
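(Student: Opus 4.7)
The plan is to apply the Marcus--Spielman--Srivastava (MSS) interlacing polynomial method, sampling the $k$-subsets $S$ with volume probability $\Pr(S)\propto\det(\mathbf{A}_S^T\mathbf{A}_S)$. First I would reduce the problem via the thin SVD $\mathbf{A}=\mathbf{U}\mathbf{\Sigma}\mathbf{V}^T$ with $\mathbf{\Sigma}=\mathrm{diag}(\sqrt{\lambda_1},\ldots,\sqrt{\lambda_t})$: setting $w_i=\mathbf{\Sigma}\mathbf{V}^Te_i\in\mathbb{R}^t$, one checks $\mathbf{A}_S\mathbf{A}_S^\dagger=\mathbf{U}Q_S\mathbf{U}^T$ where $Q_S$ is the orthogonal projection onto $\mathrm{span}\{w_i:i\in S\}$, so
\[
\|\mathbf{A}-\mathbf{A}_S\mathbf{A}_S^\dagger\mathbf{A}\|_2^2=\lambda_{\max}\bigl(\mathbf{\Sigma}(I-Q_S)\mathbf{\Sigma}\bigr).
\]
The goal is therefore to bound the largest root of $p_S(x)=\det(xI_t-\mathbf{\Sigma}(I-Q_S)\mathbf{\Sigma})$, equivalently of the degree-$(t-k)$ polynomial $\tilde p_S(x)=p_S(x)/x^k$.

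Second, I would compute the expected characteristic polynomial explicitly. A Schur-complement identity on a $(t+k)\times(t+k)$ block matrix yields $\det(W_S^TW_S)\,p_S(x)=x^k\det(xI-\mathbf{\Sigma}^2)\,\det\!\bigl(W_S^T(xI-\mathbf{\Sigma}^2)^{-1}W_S\bigr)$, where $W_S=[w_i]_{i\in S}$. Two applications of Cauchy--Binet --- the second exploiting $\mathbf{V}^T\mathbf{V}=I_t$ to collapse the sum over $S$ --- give
\[
\sum_{|S|=k}\det(W_S^TW_S)\,\tilde p_S(x)=\sum_{\substack{T\subseteq[t]\\|T|=k}}\Bigl(\prod_{a\in T}\lambda_a\Bigr)\prod_{a\notin T}(x-\lambda_a)=:P(x)=[z^k]\prod_{a=1}^{t}\bigl(x-\lambda_a(1-z)\bigr),
\]
so under volume sampling $\mathbb{E}_{\mathrm{vol}}[\tilde p_S]=P(x)/e_k(\lambda_1,\ldots,\lambda_t)$, and every sampled $S$ automatically satisfies $\mathrm{rank}(\mathbf{A}_S)=k$. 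I would then verify that $\{\tilde p_S\}_{|S|=k}$ is an interlacing family in the MSS sense: the average arises from the generating polynomial $\prod_a(x-\lambda_a+\lambda_a y_a)$ acted on by two commuting families of substitution operators, one for each $\mathbf{\Sigma}$ factor flanking $Q_S$ --- these are the ``two extra variable substitution operators'' advertised in the abstract --- each of which preserves real-rootedness. The MSS interlacing family theorem then produces a subset $S_0$ with $\lambda_{\max}(\mathbf{\Sigma}(I-Q_{S_0})\mathbf{\Sigma})$ at most the largest root of $P(x)$.

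The main obstacle is a sharp bound on the largest root of $P(x)$. I would proceed by a barrier-function argument in the style of MSS: declaring $r$ to equal the right-hand side of \eqref{thm1.1:eq1}, show $P(r)>0$ by controlling a rational barrier of the form $\Phi(r)=\sum_a\lambda_a/(r-\lambda_a)$ or a weighted variant. The harmonic mean $\alpha^{-1}=t^{-1}\sum_a\lambda_a^{-1}$ and the ratio $\beta=(\lambda_t^{-1}-\alpha^{-1})/(\lambda_t^{-1}-\lambda_1^{-1})$ should emerge as the optimal dual parameters when one optimizes the barrier against the spectrum $(\lambda_a)_a$, and the hypothesis $\beta t\leq k<t$ is precisely what ensures $\gamma_{\mathbf{A},k}\in[0,1)$ together with the existence of a valid $r\in(\lambda_t,\lambda_1)$ for which the barrier estimate is tight. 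Matching the implicit barrier inequality to the explicit closed form in \eqref{thm1.1:eq1} is a convex-style optimization, parallel to but more intricate than the Ravichandran--Srivastava treatment of restricted invertibility, because the present barrier is rational rather than polynomial as a consequence of the two extra $\mathbf{\Sigma}$-substitutions.
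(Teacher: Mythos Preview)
Your overall strategy---expected characteristic polynomial under volume sampling, interlacing, then a root bound---matches the paper, and your formula $P(x)=\sum_{|T|=k}\bigl(\prod_{a\in T}\lambda_a\bigr)\prod_{a\notin T}(x-\lambda_a)$ is correct: up to the factor $k!\,x^{d-t+k}$ it is exactly what the paper writes as $(x^2\partial_x-d\cdot x)^k\det[xI_d-\mathbf{A}^{\rm T}\mathbf{A}]$. For the interlacing step, however, the paper proceeds differently from your sketch. Rather than treating $\{p_S\}_{|S|=k}$ as a single interlacing family via a multiaffine generating polynomial (awkward here because the family is indexed by $S\subset[d]$, not $T\subset[t]$, so a generating polynomial in $t$ auxiliary variables does not obviously encode it), the paper argues greedily: for fixed $S$ the polynomials $(x^2\partial_x-d\cdot x)^{k-|S|-1}p_{S\cup\{i\}}$ over $i\notin S$ have a common interlacing, since the $p_{S\cup\{i\}}$ are characteristic polynomials of rank-one perturbations of a fixed symmetric matrix and both $\partial_x$ and the flip $\mathcal{R}^+_{x,d}: p\mapsto x^d p(1/x)$ preserve interlacing.

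The root bound is where you miss the paper's key simplification. The ``two extra variable substitution operators'' in the abstract are \emph{not} the two $\mathbf{\Sigma}$-factors flanking $Q_S$; they are the two flips in the identity $(x^2\partial_x-d\cdot x)^k=(-1)^k\,\mathcal{R}^+_{x,d}\,\partial_x^k\,\mathcal{R}^+_{x,d}$. This identity shows that the largest root of your $P$ equals the reciprocal of the \emph{smallest} root of $\partial_y^k q$, where $q(y)=\prod_{a=1}^t(y-\lambda_a^{-1})$. Under the affine change $y=\lambda_t^{-1}-(\lambda_t^{-1}-\lambda_1^{-1})x$ the roots $\lambda_a^{-1}$ are sent to $b_a=(\lambda_t^{-1}-\lambda_a^{-1})/(\lambda_t^{-1}-\lambda_1^{-1})\in[0,1]$ with average exactly $\beta$, and Ravichandran's existing bound on $\partial_x^k$ of a polynomial with roots in $[0,1]$ (the paper's Lemma~3.3, i.e.\ \cite[Theorem~4.2]{ravi1}) yields \eqref{thm1.1:eq1} in one line. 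No new barrier analysis is required: the ``more intricate'' rational barrier you anticipate dissolves once you pass to the reciprocal variable, and the hypothesis $k\ge\beta t$ is precisely the hypothesis of Ravichandran's lemma.
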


 The right-hand side of \eqref{thm1.1:eq1} can be expressed as a weighted harmonic mean of $\Vert\mathbf{A}\Vert_2^2$ and $\alpha$, i.e., $1/(\frac{1-\gamma_{\mathbf{A},k}}{\Vert\mathbf{A}\Vert_2^2}+\frac{\gamma_{\mathbf{A},k}}{\alpha})$. It is easy to observe that $\gamma_{\mathbf{A},k}$ increases from $0$ to $1$ as $k$ increases from $\beta\cdot t$ to $t$. Consequently, the bound in \eqref{thm1.1:eq1} gradually decreases from $\Vert\mathbf{A}\Vert_2^2$ to $\alpha$ as $k$ increases from $\beta\cdot t$ to $t$. In Proposition \ref{prop3.1}, we will demonstrate that $\alpha$ is the expected value of $\Vert \mathbf{A}-\mathbf{A}_S\mathbf{A}_S^{\dagger}\mathbf{A} \Vert_2^2$ over the volume sampling distribution on the $(t-1)$-subsets of $[d]$.
This implies that $\alpha$ is a sharp upper bound on the minimum residual $\Vert \mathbf{A}-\mathbf{A}_S\mathbf{A}_S^{\dagger}\mathbf{A} \Vert_2^2$ for $|S|=t-1$, so the approximation error produced by Theorem \ref{thm1.1} is well upper bounded when $k$ is close to $t-1$.

To gain an understanding of the bound presented in Theorem 1.1, let's consider  the following two commonly encountered scenarios.
The first of these is a deterministically challenging problem that was constructed in \cite{BDM14}. The other scenario is when working with real-world data matrices that display a power-law decay of singular values.
\begin{enumerate}  
\item[(i)] Consider the matrix
\begin{equation}\label{thm1m1-remark:eq1}
\mathbf{A}=[\mathbf{e}_1+\delta\cdot\mathbf{e}_2,\ldots,\mathbf{e}_1+\delta\cdot\mathbf{e}_{d+1}]\in \mathbb{R}^{(d+1)\times d},
\end{equation}
where $\delta>0$ is a constant and $\mathbf{e}_1,\ldots,\mathbf{e}_{d+1}$ are the standard basis vectors in $ \mathbb{R}^{d+1}$.
This matrix is constructed in \cite{BDM14} as an example to demonstrate the error lower bound of the column subset selection problem.  A simple calculation shows that the rank of $\mathbf{A}$ is $d$ and $\mathbf{A}^{\rm T}\mathbf{A}=\delta^2\cdot \mathbf{I}_d+\mathbf{J}_d$, where $\mathbf{I}_d$ is the identity matrix and $\mathbf{J}_d$ is the all-ones square matrix of size $d$.  Furthermore, we have
\[
\Vert\mathbf{A}\Vert_2^2=\lambda_1=d+\delta^2, \lambda_2=\cdots=\lambda_d=\delta^2, \alpha=\frac{\delta^2(d+\delta^2)}{d-1+\delta^2},\,\, \beta=\frac{1}{d}.
\]
 Then, for any integer $1\leq k<d$ we can rewrite \eqref{thm1.1:eq1} as
\begin{equation}\label{thm1m1-remark:eq2}
\Vert\mathbf{A}-\mathbf{A}_{S}\mathbf{A}_{S}^{\dagger}\mathbf{A} \Vert_2^2\leq \frac{1}{1+\frac{1}{\delta^2}\cdot\bigg(\sqrt{k-\frac{k}{d}}-\sqrt{1-\frac{k}{d}} \bigg)^2}\cdot \Vert\mathbf{A}\Vert_2^2.
\end{equation}
On the other hand, as stated in \cite[Theorem 9.1]{BDM14}, for any subset $S\subset[d]$ of size $k<d$, we have
\begin{equation}\label{thm1m1-remark:eq3}
\Vert\mathbf{A}-\mathbf{A}_{S}\mathbf{A}_{S}^{\dagger}\mathbf{A} \Vert_2^2\geq \frac{\delta^2}{k+\delta^2}\cdot (d+\delta^2)=\frac{1}{1+\frac{1}{\delta^2}\cdot k}\cdot \Vert\mathbf{A}\Vert_2^2.
\end{equation}
Then we see that the upper bound in \eqref{thm1m1-remark:eq2} asymptotically matches the lower bound in \eqref{thm1m1-remark:eq3} when $d \rightarrow \infty$, suggesting that Theorem \ref{thm1.1} is asymptotically sharp for the matrix $\mathbf{A}$ defined in \eqref{thm1m1-remark:eq1}.

\item[(ii)] Consider the case where the matrix $\mathbf{A}\in\mathbb{R}^{n\times d}$ obeys a spectral power-law decay, that is, $\frac{c_1}{i^s}\leq \lambda_i\leq \frac{c_2}{i^{s}},i=1,\ldots,t$, where $c_1,c_2$ are two positive constants and $s>1$ is the decay rate exponent. The spectral power-law decay has been consistently observed across a wide range of real-world datasets of interest \cite{ADMMWG15,DKM20}. Assume that $1<\frac{c_2}{c_1}<1+\frac{1}{s+1}$. We use  $\sum_{i=1}^t i^{s}=\frac{t^{s+1}}{s+1}+o(t^{s+1})$ \cite{MP07} to obtain that $\frac{t^{s}}{c_2(s+1)}+o(t^{s})\leq \alpha^{-1}\leq \frac{t^{s}}{c_1(s+1)}+o(t^{s})$. Therefore, when $t\to\infty$, we have
\begin{equation*}
\beta=\frac{\lambda_t^{-1}-\alpha^{-1}}{\lambda_t^{-1}-\lambda_1^{-1}} \leq\frac{\frac{t^s}{c_1}-(\frac{t^s}{c_2(s+1)}+o(t^{s}))}{\frac{t^s}{c_2}-c_1}\leq\frac{c_2}{c_1}-\frac{1}{s+1}+\varepsilon=:\beta_1,
\end{equation*}
where $\varepsilon>0$ is a small enough constant. Let $k=l\cdot t$ where $l\in(\beta_1,1)$ is a constant. Then $\gamma_{\mathbf{A},k}$ can be regarded as a constant when $t\to\infty$, thus the bound given in \eqref{thm1.1:eq1} simplifies to $O(\frac{\alpha}{\gamma_{\mathbf{A},k}})$. It should be noted that the best rank $k$ approximation of $\mathbf{A}$ under the spectral norm is $\mathbf{A}_k=\sum_{i=1}^{k}\sqrt{\lambda_i} \mathbf{u}_i\mathbf{v}_i^{\rm T}$, where $\mathbf{u}_i$ and $\mathbf{v}_i$ are the left and right singular vectors corresponding to the singular value $\sqrt{\lambda_i}$. A simple calculation shows that when $t\to\infty$,
\begin{equation*}
\left(\frac{c_1}{c_2(s+1)l^s}+o(1)\right)\alpha\leq {\Vert\mathbf{A}-\mathbf{A}_k \Vert_2^2}={\lambda_{k+1}}\leq \left(\frac{c_2}{c_1(s+1)l^s}+o(1)\right)\alpha.
\end{equation*}
This indicates that, as $t\to\infty$, the upper bound $O(\frac{\alpha}{\gamma_{\mathbf{A},k}})$ asymptotically matches the lower bound $\Vert\mathbf{A}-\mathbf{A}_k \Vert_2^2$, up to a constant factor. Given that the spectrum of real-world data usually decays rapidly \cite{ADMMWG15,DKM20}, Theorem \ref{thm1.1} provides a promising approach to column subset selection and its various applications.

\end{enumerate}

The proof of Theorem \ref{thm1.1} is based on the method of interlacing polynomials, which was introduced by Marcus, Spielman and Srivastava. They used this method to provide a positive solution to the Kadison-Singer problem \cite{inter2}, a renowned problem in $C^*$ algebras posed by Richard Kadison and Isadore Singer in  \cite{KS59}.
Many combinatorial questions are known to be equivalent to the Kadison-Singer problem, including the Anderson Paving Conjecture \cite{And1,And2,And3}, Weaver's ${\rm{KS}}_r$ Conjecture \cite{Weaver}, and the Bourgain-Tzafriri Restricted Invertibility Conjecture \cite{BT89,CT}.

Inspired by the proof of Theorem \ref{thm1.1}, we propose a deterministic polynomial-time algorithm to solve Problem \ref{problem1.1}, as outlined in Algorithm \ref{alg1} in Section \ref{section3.4}.  The following theorem states that the output of Algorithm \ref{alg1} can attain the bound of Theorem \ref{thm1.1} up to a certain computational error.

\begin{theorem}\label{thm3.2}
 Let $\mathbf{A}$ be a matrix in $\mathbb{R}^{n\times d}$, and let $t$, $\alpha$, $\beta$, and $\gamma_{\mathbf{A},k}$ be defined as defined in Theorem \ref{thm1.1}. For each integer $k$ satisfying $\beta\cdot t\leq k<t$, Algorithm \ref{alg1} can output a subset $S=\{j_1,\ldots,j_k\}\subset [d]$ such that
\begin{equation}\label{alg:eq1}
\Vert\mathbf{A}-\mathbf{A}_{S}\mathbf{A}_{S}^{\dagger}\mathbf{A} \Vert_2^2\,\,\leq \,\, 2k\varepsilon+\frac{\Vert\mathbf{A}\Vert_2^2}{1+({\Vert\mathbf{A}\Vert_2^2}\cdot{\alpha}^{-1}-1)\cdot \gamma_{\mathbf{A},k}}.
\end{equation}
If $d<n$, then the  time complexity of Algorithm  \ref{alg1} is $O(kdn^2+kd^{w+1}\log(d\vee\varepsilon^{-1}))$. Otherwise,   if $d\geq n$, the  time complexity is $O(kdn^w\log(n\vee\varepsilon^{-1}))$. Here, $a\vee b:=\max\{a,b\}$ for two real numbers $a,b$, and $w\in  (2,2.373)$ is the matrix multiplication complexity exponent.
\end{theorem}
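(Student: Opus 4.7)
The plan is to derandomize the interlacing-family existence argument underlying Theorem~\ref{thm1.1} by greedy descent, with the tolerance $\varepsilon$ accounting for numerical root-finding. Recall that the proof of Theorem~\ref{thm1.1} constructs an interlacing family whose leaves are indexed by $k$-subsets $S\subset[d]$ and whose root polynomial has largest root upper-bounded by the right-hand side of \eqref{thm1.1:eq1}; by the defining property of interlacing families, every internal node has at least one child whose conditional expected polynomial has largest root no larger than that of its parent. Algorithm~\ref{alg1} realizes the greedy descent along this tree: starting from $S_0 = \emptyset$, at step $i=1,\ldots,k$ it enumerates the candidates $j\in[d]\setminus S_{i-1}$, forms the conditional expected polynomial $p_j$ obtained by extending the prefix $S_{i-1}$ by $j$ and averaging over the remaining $k-i$ picks, and selects the $j_i$ with the smallest approximate largest root.

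For correctness I would induct on $i$, showing that the largest root of the conditional polynomial along the chosen path at step $i$ exceeds the bound of \eqref{thm1.1:eq1} by at most $2i\varepsilon$. The exact-arithmetic version of this statement is precisely the interlacing-family descent from the proof of Theorem~\ref{thm1.1}; the slack at each step comes from two sources: (i) the conditional expected polynomials are real-rooted, a consequence of the interlacing-family construction, so bisection computes each largest root to additive tolerance $\varepsilon$, and (ii) picking the candidate with smallest approximate largest root differs from the exact greedy choice by at most $2\varepsilon$ per level. At the terminal leaf, the largest root of the conditional polynomial upper-bounds $\Vert\mathbf{A}-\mathbf{A}_S\mathbf{A}_S^{\dagger}\mathbf{A}\Vert_2^2$, exactly as exploited in Theorem~\ref{thm1.1}, so the cumulative slack of $2k\varepsilon$ yields \eqref{alg:eq1}.

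For the runtime I would account for the $k$ outer iterations, the at most $d$ candidates per iteration, and the per-candidate cost of updating and evaluating $p_j$. The explicit form of the root polynomial obtained in the proof of Theorem~\ref{thm1.1}, namely a characteristic polynomial of an auxiliary matrix to which the two variable-substitution operators are applied, means that committing a new column amounts to a rank-one update of this auxiliary matrix followed by one characteristic-polynomial computation. In the regime $d\geq n$, the natural $n\times n$ representation gives per-candidate cost $O(n^w)$, and bisection on the resulting polynomial (of degree at most $n$) to accuracy $\varepsilon$ contributes a $\log(n\vee\varepsilon^{-1})$ factor, giving the stated $O(kdn^w\log(n\vee\varepsilon^{-1}))$ bound. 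In the regime $d<n$ the update splits into an $O(n^2)$ projection maintenance per candidate (the $kdn^2$ term) and a $d\times d$ characteristic-polynomial plus root-finding step costing $O(d^w\log(d\vee\varepsilon^{-1}))$ per candidate (the $kd^{w+1}\log$ term).

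The main obstacle I anticipate is packaging the conditional expected polynomial so that the per-candidate update is truly cheap: the two variable-substitution operators from the proof of Theorem~\ref{thm1.1} must be shown to interact cleanly with the rank-one update corresponding to committing a new column, so that at each step it suffices to maintain one small auxiliary matrix and its characteristic polynomial rather than recomputing from scratch. Once this matrix-level representation is in place, both the additive $2k\varepsilon$ correctness bound and the complexity estimates reduce to standard bisection analysis for real-rooted polynomials together with standard matrix-multiplication and determinant complexities.
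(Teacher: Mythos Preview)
Your proposal is correct and matches the paper's proof essentially step for step: the $2\varepsilon$-per-level slack from approximate argmin, its inductive accumulation to $2k\varepsilon$ via Lemma~\ref{lemma3.2}, and the $d<n$ versus $d\ge n$ case split for the complexity are exactly what the paper does. The obstacle you anticipate does not arise, because the conditional polynomial at level $l$ is simply $(x^2\partial_x - d\cdot x)^{k-l}$ applied to $p_{S\cup\{i\}}(x)=\det[x\cdot\mathbf{I}_d - \mathbf{A}^{\rm T}\mathbf{Q}_{S\cup\{i\}}\mathbf{A}]$; one computes the characteristic polynomial of the rank-one updated matrix and then applies the operator to its coefficient vector via \eqref{prop2.1:eq1} at negligible cost, so no interaction between the operator and the update needs to be analyzed.
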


\subsection{Comparison with related works}

\subsubsection{The column subset selection problem}

Let $\mathbf{A}$ be a matrix in $\mathbb{R}^{n\times d}$ of rank $t\leq \min\{n,d\}$. Assume that $\mathbf{A}$ has the singular value decomposition $\mathbf{A}=\sum_{i=1}^{t}\sqrt{\lambda_i} \mathbf{u}_i\mathbf{v}_i^{\rm T}$, where $\lambda_1\geq\cdots\geq\lambda_t>0$ are the positive eigenvalues of $\mathbf{A}^{\rm T}\mathbf{A}$, $\mathbf{u}_i\in\mathbb{R}^n$ and $\mathbf{v}_i\in\mathbb{R}^d$ are the associated left and right singular vectors respectively. For the spectral norm version of Problem \ref{problem1.1}, prior work mainly focuses on finding a subset $S\subset[d]$ of size $k<t$ that satisfies the following multiplicative bound:
\begin{equation*}
\Vert \mathbf{A}-\mathbf{A}_S \mathbf{A}_S^{\dagger}\mathbf{A}\Vert_2^2\leq p(k,d)\cdot \Vert \mathbf{A}-\mathbf{A}_k\Vert_2^2=p(k,d)\cdot \lambda_{k+1},
\end{equation*}
where $p(k,d)$ is a function on $k$ and $d$, and $\mathbf{A}_k$ is the best rank-$k$ approximation of $\mathbf{A}$ under the spectral norm, i.e. $\mathbf{A}_k=\sum_{i=1}^{k}\sqrt{\lambda_i} \mathbf{u}_i\mathbf{v}_i^{\rm T}$.
 Early work on the column subset selection problem was largely centered around rank-revealing decompositions \cite{Gol65,GE96}.
 In practice, the algorithm based on rank-revealing QR (RRQR) decomposition \cite{GE96,HP92} provides an efficient deterministic algorithm with the multiplicative bound
\begin{equation*}
\Vert \mathbf{A}-\mathbf{A}_S \mathbf{A}_S^{\dagger}\mathbf{A}\Vert_2^2\leq (1+c^2k(d-k))\cdot \Vert \mathbf{A}-\mathbf{A}_k\Vert_2^2,
\end{equation*}
and it runs in $O(ndk\log_cd)$ time where  $c>1$ is a constant. A comprehensive summary of the algorithms based on RRQR decomposition can be found in Table 2 of \cite{BMD09}.

A class of randomized methods is also proposed for the column subset selection problem \cite{FKV04,BMD09,DRVW06,RV07,DR10,BBC20}. These methods select columns by sampling from certain distributions over the columns of the input matrix $\mathbf{A}$.
A two-stage algorithm is provided in \cite{BMD09}, which combines RRQR based algorithms and $k$-leverage score sampling. Specifically, the $i$-th column of $\mathbf{A}$ was chosen with probability proportional to its $k$-leverage score $l_i^{(k)}:=\sum_{j=1}^{k}v_{j,i}^2$ where $v_{j,i}$ is the $i$-th entry of the vector $\mathbf{v}_j\in\mathbb{R}^d$.
Their algorithm produces a subset $S\subset[d]$ of size $k$ with a probability of at least 0.7, ensuring that  the following multiplicative bound is satisfied:
\begin{equation*}
\Vert \mathbf{A}-\mathbf{A}_S \mathbf{A}_S^{\dagger}\mathbf{A}\Vert_2^2\leq O(k^{\frac{3}{2}}(d-k)^{\frac12}\log k)\Vert \mathbf{A}-\mathbf{A}_k\Vert_2^2.
\end{equation*}
Deshpande, Rademacher, Vempala, and Wang \cite{DRVW06} gave an alternative approach to the column subset selection by using the volume sampling, i.e., picking a subset $S\subset[d]$ with probability proportional to $\det[\mathbf{A}_S^{\rm T} \mathbf{A}_S ]$. As a result, they proved that
\begin{equation*}
\mathbb{E}_S\ \Vert \mathbf{A}-\mathbf{A}_S \mathbf{A}_S^{\dagger}\mathbf{A}\Vert_2^2\leq (d-k)(k+1)\cdot \Vert \mathbf{A}-\mathbf{A}_k\Vert_2^2,
\end{equation*}
where $S$ is selected under the volume sampling on all $k$-subsets of $[d]$.
A polynomial time algorithm for the volume sampling was later provided by Deshpande and Rademacher  \cite{DR10}. Recently, Belhadji, Bardenet, and Chainais \cite{BBC20} proposed sampling from a projection determinantal point process and obtained an improved multiplicative bound:
\begin{equation*}
\mathbb{E}_S\ \Vert \mathbf{A}-\mathbf{A}_S \mathbf{A}_S^{\dagger}\mathbf{A}\Vert_2^2\leq (1+k(\tilde{d}-k))\cdot \Vert \mathbf{A}-\mathbf{A}_k\Vert_2^2,
\end{equation*}
where $\tilde{d}:=\# \{i\in[d]: l_i^{(k)}>0 \}$ is the number of the nonzero $k$-leverage scores.
For a comprehensive review of existing randomized sampling methods for the column subset selection problem, please refer to \cite{BBC20}.

Our upper bound in \eqref{thm1.1:eq1} offers at least two advantages over the existing multiplicative bounds from a theoretical point of view. First, our bound is strictly less than $ \Vert \mathbf{A}\Vert_2^2$ for $k>\beta\cdot \text{\rm rank}(\mathbf{A})$, while existing multiplicative bounds can be greater than $\Vert \mathbf{A}\Vert_2^2$ if $\lambda_{k+1}$ is sufficiently large.
Second, our bound in \eqref{thm1.1:eq1} is often much tighter than existing multiplicative bounds.
 For example, consider $\mathbf{A}$ defined as in \eqref{thm1m1-remark:eq1}, and let $k,d$ tend to infinity with the ratio $k/d$ fixed.
 Recall that in this case we have $\Vert \mathbf{A}-\mathbf{A}_k\Vert_2^2=\delta^2$. The multiplicative bounds mentioned above become $\Vert \mathbf{A}-\mathbf{A}_S \mathbf{A}_S^{\dagger}\mathbf{A}\Vert_2^2\leq O(k(d-k))\cdot \delta^2$, while, a simple calculation shows that  our bound in \eqref{thm1m1-remark:eq2} asymptotically  converges to  the lower bound $\Vert \mathbf{A}-\mathbf{A}_S \mathbf{A}_S^{\dagger}\mathbf{A}\Vert_2^2\geq \frac{d+\delta^2}{k+\delta^2}\cdot \delta^2=O(1)\cdot\delta^2$ presented in \eqref{thm1m1-remark:eq3}.

It is worth noting that in certain situations, existing multiplicative bounds may outperform our bound. Specifically, when both the dimensions $d$ and $t$ tend to infinity, and if the smallest nonzero eigenvalue $\lambda_t$ approaches zero at a faster rate compared to any other nonzero eigenvalue $\lambda_i$, then the value of $\beta$ tends to 1. Consequently, in this scenario, our bound \eqref{thm1.1:eq1} is not applicable for the majority of $k<\text{\rm rank}(\mathbf{A})$.

A large number of papers have been devoted to selecting a subset $S\subset[d]$ such that the submatrix $\mathbf{A}_S$ satisfies one of the following relative-error bounds:
\begin{align*}
&\Vert \mathbf{A}-\mathbf{A}_S \mathbf{A}_S^{\dagger}\mathbf{A}\Vert_{\xi}\leq (1+\varepsilon)\cdot \Vert \mathbf{A}-\mathbf{A}_k\Vert_{\xi},\\
&\Vert \mathbf{A}-\mathbf{A}_S \mathbf{A}_S^{\dagger}\mathbf{A}\Vert_{\xi}\leq \Vert \mathbf{A}-\mathbf{A}_k\Vert_{\xi}+\varepsilon \Vert \mathbf{A} \Vert_{\xi},
\end{align*}
where $\varepsilon>0$ is a given error parameter and $\xi\in\{\rm F,2\}$ \cite{DRVW06,RV07,DMM08,BDM14,WS18}. To achieve the relative error bound, the size of $S$ often needs to be larger than $k$ and  it is dependent on the error parameter $\varepsilon$.
In this paper, we primarily focus on deriving an upper bound for the residual $\Vert \mathbf{A}-\mathbf{A}_S \mathbf{A}_S^{\dagger}\mathbf{A}\Vert_2$ when the size of $S$ is specified. Therefore, the discussion of the relative-error bounds is beyond the scope of this paper.

\subsubsection{Approximating Matrices by Column Selection}

Friedland and Youssef investigated the problem of approximating matrices via a variant of column selection \cite{You3}. We recall that the stable rank of a matrix $\mathbf{B}$ is defined by $\text{\rm srank}(\mathbf{B}):=\Vert \mathbf{B}\Vert_{\rm F}^2/\Vert \mathbf{B}\Vert_2^2$. Based on Marcus-Spielman-Srivastava's solution to the Kadison-Singer problem, Friedland and Youssef \cite[Theorem 1.1]{You3} prove that for any $\varepsilon>0$ and any matrix $\mathbf{A}\in\mathbb{R}^{n\times d}$, there exists a nonnegative diagonal matrix $\mathbf{D}\in\mathbb{R}^{n\times n}$ with at most $\text{\rm srank}(\mathbf{A})/c\varepsilon^2$ nonzero entries such that
\begin{equation*}
\Vert  \mathbf{A}^{\rm T} \mathbf{A}-\mathbf{A}^{\rm T}\mathbf{D}\mathbf{A}  \Vert_2 	\leq \varepsilon \Vert \mathbf{A}\Vert_2^2.
\end{equation*}
This implies that the spectrum of $\mathbf{D}^{\frac12}\mathbf{A}\in\mathbb{R}^{n\times d}$ is close to that of $\mathbf{A}$ (see also \cite{inter0}).

The similarity between Theorem \ref{thm1.1} and \cite[Theorem 1.1]{You3} is that they both produce a sketch matrix $\widetilde{\mathbf{A}}$ of an input matrix $\mathbf{A}\in\mathbb{R}^{n\times d}$ such that the approximation error $\Vert  \mathbf{A}^{\rm T} \mathbf{A}-\widetilde{\mathbf{A}}^{\rm T}\widetilde{\mathbf{A}}\Vert_2$ is well upper bounded by a constant times $\Vert\mathbf{A}\Vert_2^2$, i.e.,
\begin{equation}\label{section1.4.2:eq1}
\Vert  \mathbf{A}^{\rm T} \mathbf{A}-\widetilde{\mathbf{A}}^{\rm T}\widetilde{\mathbf{A}}  \Vert_2\leq\varepsilon_{\mathbf{A},k}\cdot  \Vert\mathbf{A}\Vert_2^2,
\end{equation}
where $\varepsilon_{\mathbf{A},k}<1$ is a constant that depends on the matrix $\mathbf{A}$ and the sampling size $k$. Specifically, Theorem \ref{thm1.1} produces a sketch matrix $\widetilde{\mathbf{A}}=\mathbf{A}_S\mathbf{A}_S^{\dagger}\mathbf{A}$ by projecting $\mathbf{A}$ onto the space spanned by a small number of the columns in $\mathbf{A}$. Note that
\begin{equation*}
\Vert \mathbf{A}-\mathbf{A}_S\mathbf{A}_S^{\dagger}\mathbf{A}\Vert_2^2=\Vert  \mathbf{A}^{\rm T} \mathbf{A}- (\mathbf{A}_S\mathbf{A}_S^{\dagger}\mathbf{A})^{\rm T}(\mathbf{A}_S\mathbf{A}_S^{\dagger}\mathbf{A}) \Vert_2,
\end{equation*}
so the sketch matrix $\widetilde{\mathbf{A}}=\mathbf{A}_S\mathbf{A}_S^{\dagger}\mathbf{A}$ produced by Theorem \ref{thm1.1} satisfies the bound \eqref{section1.4.2:eq1} with $\varepsilon_{\mathbf{A},k}=1/(1+(\Vert\mathbf{A}\Vert_2^2\cdot{\alpha}^{-1}-1)\cdot\gamma_{\mathbf{A},k})$. On the other hand, \cite[Theorem 1.1]{You3} produces a sketch matrix $\widetilde{\mathbf{A}}=\mathbf{D}^{\frac12}\mathbf{A}$ by selecting a small number of rescaled rows in $\mathbf{A}$, and this sketch matrix satisfies the bound \eqref{section1.4.2:eq1} with $\varepsilon_{\mathbf{A},k}=O(\sqrt{{\rm srank}(\mathbf{A})/k})$.
Since the sampling methods used in Theorem \ref{thm1.1} and \cite[Theorem 1.1]{You3} are distinct, the bound in \eqref{thm1.1:eq1} cannot be directly compared to the bound in Friedland-Youssef's work.

\subsubsection{The Restricted Invertibility Principle}

Many variants of column subset selection have been studied in numerical linear algebra and computer science. One of these results is the Bourgain-Tzafriri restricted invertibility principle \cite{BT89}, which is related to the following subset selection problem.

\begin{problem}{\rm (The restricted invertibility problem)}\label{problem1.3}
Given a matrix $\mathbf{B}\in\mathbb{R}^{n\times d}$ and a positive integer $l\leq \text{\rm rank}(\mathbf{B})$, find a subset $R\subset[d]$ of size $l$ such that the smallest singular value of $\mathbf{B}_R$, denoted by $\sigma_{\rm \min }(\mathbf{B}_R)$, is maximized over all possible $\binom{d}{l}$ choices for the subset $R$.
\end{problem}

Roughly speaking, Problem \ref{problem1.3} aims to select $l$ columns that are as ``linearly independent'' as possible. The Bourgain-Tzafriri restricted invertibility principle shows that there exist universal constants $c_1, c_2\in(0,1)$ such that for any invertible matrix $\mathbf{B}\in\mathbb{R}^{d\times d}$ with normalized columns, one can find a subset $R\subset[d]$ of size at least $c_1\cdot \text{\rm srank}(\mathbf{B})$ such that $\sigma_{\rm \min }(\mathbf{B}_R)\geq c_2$. This result was later strengthened by the works of \cite{Ver01,You14,SS12,You2,inter3,ravi1}.

In Section \ref{section4.1}, we will demonstrate a direct link between Problem \ref{problem1.3} and a specific instance of Problem \ref{problem1.1}.
More specifically, we show that if $\mathbf{A}\in \mathbb{R}^{n\times d}$ is a rank $d$ matrix with the singular value decomposition $\mathbf{A}=\mathbf{U}\mathbf{\Sigma}\mathbf{V}^{\rm T}$, then for each subset $S\subset[d]$ we have
 \begin{equation*}
 \Vert \mathbf{A}-\mathbf{A}_S\mathbf{A}_S^{\dagger}\mathbf{A} \Vert_2^2=\frac{1}{\sigma_{\rm min}^2(\mathbf{B}_{S^C})},	
 \end{equation*}
 where $\mathbf{B}=\mathbf{U}\mathbf{\Sigma}^{-1}\mathbf{V}^{\rm T}\in\mathbb{R}^{n\times d}$. This implies that minimizing the residual $\Vert \mathbf{A}-\mathbf{A}_S\mathbf{A}_S^{\dagger}\mathbf{A} \Vert_2^2 $ is equivalent to maximizing $\sigma_{\rm min}^2(\mathbf{B}_{S^C})$ where the subset $S$ ranges over all $k$-subsets of $[d]$.
 Consequently, in the case of $\text{\rm rank}(\mathbf{A})=d$, Problem \ref{problem1.1} can be reformulated as Problem \ref{problem1.3} with $\mathbf{B}=\mathbf{U}\mathbf{\Sigma}^{-1}\mathbf{V}^{\rm T}$ and $l=d-k$.

\subsection{Technical overview}

Let $\mathbf{A}$ be a matrix in $\mathbb{R}^{n\times d}$. For each subset $S\subset[d]$, we set $\mathbf{Q}_S:=\mathbf{I}_n-\mathbf{A}_S\mathbf{A}_S^{\dagger}\in \mathbb{R}^{n\times n}$ which is the projection matrix of the orthogonal complement of the column space of the matrix $\mathbf{A}_S$. For each subset $S\subset[d]$ we define the degree $d$ polynomial
\begin{equation}\label{section1.3:eq1}
p_S(x):=\det[x\cdot\mathbf{I}_d-(\mathbf{A}-\mathbf{A}_{S}\mathbf{A}_{S}^{\dagger}\mathbf{A})^{\rm T}(\mathbf{A}-\mathbf{A}_{S}\mathbf{A}_{S}^{\dagger}\mathbf{A})]=\det[x\cdot\mathbf{I}_d-\mathbf{A}^{\rm T}\mathbf{Q}_{S}\mathbf{A}].
\end{equation}
It immediately follows that
\begin{equation}\label{section3.1:eq2}
\text{\rm maxroot}\ p_S(x)=\Vert \mathbf{A}-\mathbf{A}_{S}\mathbf{A}_{S}^{\dagger}\mathbf{A}\Vert_2^2.
\end{equation}
We employ the method of interlacing polynomials to prove the following theorem.

\begin{theorem}\label{thm1.3}
Let $\mathbf{A}$ be a matrix in $\mathbb{R}^{n\times d}$. Then for each positive integer $k\leq\text{\rm rank}(\mathbf{A})$, there exists a subset $S\subset[d]$ of size $k$ such that $\text{\rm rank}(\mathbf{A}_S)=k$ and
\begin{equation*}
\text{\rm maxroot}\ p_{S}(x)\leq \text{\rm maxroot}\ (x^2\cdot\partial_x-d\cdot x)^k \ \det[x\cdot \mathbf{I}_d-\mathbf{A}^{\rm T}\mathbf{A}].
\end{equation*}	

\end{theorem}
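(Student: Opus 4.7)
My plan is to apply the Marcus--Spielman--Srivastava method of interlacing polynomials to a tree of polynomials indexed by subsets $S\subset[d]$ of size at most $k$. The root is $p_\emptyset(x)=\det[x\mathbf{I}_d-\mathbf{A}^{\rm T}\mathbf{A}]$ and the children of an internal node $S$ are the polynomials $p_{S\cup\{i\}}(x)$ for indices $i$ with $\mathbf{a}_i\notin\text{\rm col}(\mathbf{A}_S)$, so every leaf is a subset of size $k$ with $\text{\rm rank}(\mathbf{A}_S)=k$. The whole argument rests on a single algebraic identity that links the differential operator $x^2\partial_x-d\cdot x$ to a one-step expansion of $p_S$.

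\textbf{Key identity.} For each $S$ with $\text{\rm rank}(\mathbf{A}_S)=|S|$ I will show
\[
(x^2\partial_x-d\cdot x)\,p_S(x)\;=\;\sum_{i=1}^{d}(\mathbf{a}_i^{\rm T}\mathbf{Q}_S\mathbf{a}_i)\cdot p_{S\cup\{i\}}(x).
\]
The derivation combines three ingredients: the rank-one update $\mathbf{Q}_{S\cup\{i\}}=\mathbf{Q}_S-\mathbf{Q}_S\mathbf{a}_i\mathbf{a}_i^{\rm T}\mathbf{Q}_S/(\mathbf{a}_i^{\rm T}\mathbf{Q}_S\mathbf{a}_i)$ (so indices with $\mathbf{a}_i\in\text{\rm col}(\mathbf{A}_S)$ carry zero weight and cause no trouble); the matrix determinant lemma applied to $\mathbf{M}_S:=\mathbf{A}^{\rm T}\mathbf{Q}_S\mathbf{A}$, which writes $p_{S\cup\{i\}}$ as $p_S$ times a rational function; and Jacobi's formula $p_S'(x)=p_S(x)\text{\rm tr}((x\mathbf{I}_d-\mathbf{M}_S)^{-1})$. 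Summing over $i$ and using $\sum_i\mathbf{a}_i\mathbf{a}_i^{\rm T}=\mathbf{A}\mathbf{A}^{\rm T}$ concentrates the rational pieces into $\text{\rm tr}((x\mathbf{I}_d-\mathbf{M}_S)^{-1}\mathbf{M}_S^2)$, which simplifies via the identity $\mu^2/(x-\mu)=x^2/(x-\mu)-(x+\mu)$ to $x^2 p_S'(x)/p_S(x)-\text{\rm tr}(\mathbf{M}_S)-d\cdot x$, the $d$ coming from $\text{\rm tr}(\mathbf{I}_d)=d$. The $\text{\rm tr}(\mathbf{M}_S)$ contributions cancel, leaving precisely $x^2 p_S'(x)-d\cdot x\,p_S(x)$.

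\textbf{Common interlacer at every node.} To invoke the MSS framework, I need $\{p_{S\cup\{i\}}\}_i$ to admit a common interlacer at each internal node, for which it suffices (by the Chudnovsky--Seymour/Fell characterization used in MSS) to prove that every non-negative combination is real-rooted. Writing $p_{S\cup\{i\}}(x)=\det[x\mathbf{I}_d-\mathbf{M}_S+\mathbf{u}_i\mathbf{u}_i^{\rm T}]$ with $\mathbf{u}_i:=\mathbf{A}^{\rm T}\mathbf{Q}_S\mathbf{a}_i/\sqrt{\mathbf{a}_i^{\rm T}\mathbf{Q}_S\mathbf{a}_i}$ and applying the matrix determinant lemma, one obtains
\[
\sum_i\lambda_i\,p_{S\cup\{i\}}(x)\;=\;p_S(x)\Bigl(\textstyle\sum_i\lambda_i+\text{\rm tr}((x\mathbf{I}_d-\mathbf{M}_S)^{-1}\mathbf{U})\Bigr),
\]
where $\mathbf{U}:=\sum_i\lambda_i\mathbf{u}_i\mathbf{u}_i^{\rm T}\succeq 0$. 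In the eigenbasis of $\mathbf{M}_S$ the bracketed factor is $c+\sum_j\gamma_j/(x-\mu_j)$ with $c,\gamma_j\ge 0$, a rational function whose zeros are real and interlace the poles $\mu_j$. Multiplying by $p_S(x)=\prod_j(x-\mu_j)$ yields a polynomial whose sign alternates at the $\mu_j$, which, combined with its behaviour at $\pm\infty$, forces all $d$ roots to be real.

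\textbf{Assembly and main obstacle.} With the common-interlacer property in hand at every node, the MSS interlacing-polynomial theorem guarantees a path from the root down to a leaf $S$ such that $\text{\rm maxroot}\,p_S$ is bounded by the maxroot of the total weighted sum at the root. Iterating the key identity $k$ times with the weights $\mu_i^S:=\mathbf{a}_i^{\rm T}\mathbf{Q}_S\mathbf{a}_i$ collapses this weighted sum into exactly $(x^2\partial_x-d\cdot x)^k\det[x\mathbf{I}_d-\mathbf{A}^{\rm T}\mathbf{A}]$, and the zero-weight contributions from dependent indices force the chosen leaf to satisfy $\text{\rm rank}(\mathbf{A}_S)=k$. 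The most delicate step is the common-interlacer claim: one must handle the (possibly many) zero eigenvalues of $\mathbf{M}_S$ when doing the partial-fraction/sign-alternation analysis, and verify that the rational function $c+\sum_j\gamma_j/(x-\mu_j)$ really produces $d$ real zeros once multiplied by $p_S(x)$, regardless of the rank of $\mathbf{M}_S$ and of any coincidences among the $\mu_j$.
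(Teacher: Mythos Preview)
Your key identity and your verification that the family $\{p_{S\cup\{i\}}\}_i$ shares a common interlacer are both correct and coincide with the paper's Lemma~3.1 and the setup of Lemma~3.2. The gap is in the assembly. The MSS interlacing-family theorem requires that at each internal node the \emph{conditional-expectation} polynomials of the children admit a common interlacer; in your tree the node $S$ must carry (up to a positive scalar) the partial sum $(x^2\partial_x-d x)^{k-|S|}\,p_S(x)$, so its children are the polynomials $(x^2\partial_x-d x)^{k-|S|-1}\,p_{S\cup\{i\}}(x)$, and it is \emph{these}, not the bare $p_{S\cup\{i\}}$, that must have a common interlacer for the descent to go through. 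You only check the latter, so the invocation of the MSS tree theorem is not yet justified.

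The paper fills exactly this gap: it writes $(x^2\partial_x-d x)=-\,\mathcal{R}_{x,d}^+\circ\partial_x\circ\mathcal{R}_{x,d}^+$ (Proposition~2.1(i)) and observes that both $\partial_x$ and the flip $\mathcal{R}_{x,d}^+$ preserve interlacing among degree-$d$ real-rooted polynomials with the same number of nonzero roots. This immediately upgrades common interlacing of $\{p_{S\cup\{i\}}\}_i$ to common interlacing of $\{(x^2\partial_x-d x)^{\ell}\,p_{S\cup\{i\}}\}_i$ for every relevant $\ell$, after which the step-by-step descent (Lemma~3.2 iterated $k$ times) gives Theorem~1.3. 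Equivalently, one can note that every non-negative combination $\sum_i\lambda_i\,p_{S\cup\{i\}}$ lies in $\mathbb{P}^+(d)$ (your own sign-alternation argument actually shows the roots are nonnegative, not merely real), and then Proposition~2.1(iii) guarantees that applying $(x^2\partial_x-d x)^{\ell}$ keeps it real-rooted. The issue you flag as ``most delicate'' --- zero eigenvalues of $\mathbf{M}_S$ in the partial-fraction analysis --- is comparatively routine and is not where your proposal is incomplete.
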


We will prove that the polynomial $(x^2\cdot\partial_x-d\cdot x)^k \ \det[x\cdot \mathbf{I}_d-\mathbf{A}^{\rm T}\mathbf{A}]$ in Theorem \ref{thm1.3} has two important properties. First, when $k\leq\text{\rm rank}(\mathbf{A})$, this polynomial is a degree $d$ polynomial with only nonnegative real roots, and its largest root is non-increasing as $k$ increases (see Section \ref{section2.3}). Second, this polynomial is the expectation of the polynomials $p_S(x)$ over the volume sampling of $k$-subsets of $[d]$, up to a constant factor, i.e., (see Proposition \ref{prop3.1})
\begin{equation}\label{expectation-ps}
k!\sum\limits_{S\subset[d],\vert S\vert=k}^{}\det[\mathbf{A}_S^{\rm T}\mathbf{A}_S]\cdot p_S(x)=(x^2\cdot\partial_x-d\cdot x)^k\  \det[x\cdot\mathbf{I}_d-\mathbf{A}^{\rm T}\mathbf{A}].
\end{equation}
Therefore, Theorem \ref{thm1.3} basically says that the largest root of the expected polynomial $(x^2\cdot\partial_x-d\cdot x)^k \ \det[x\cdot \mathbf{I}_d-\mathbf{A}^{\rm T}\mathbf{A}]$ provides an upper bound on the minimal largest root of $p_S(x)$, with $S$ ranging over all $k$-subsets of $[d]$.

To finish the proof of Theorem \ref{thm1.1}, we employ  the univariate barrier method introduced in \cite{inter0,SS12,inter3,inter5} to estimate the largest root of the expected polynomial $(x^2\cdot\partial_x-d\cdot x)^k\  \det[x\cdot\mathbf{I}_d-\mathbf{A}^{\rm T}\mathbf{A}]$.

\subsection{Organization}
The rest of this paper is organized as follows.
In Section \ref{section2}, we present several fundamental linear algebraic and analytic facts. In Section \ref{section3}, we prove Theorems \ref{thm1.1}, \ref{thm3.2} and \ref{thm1.3}.

\section{Preliminaries}\label{section2}

\subsection{Notations }

We first introduce some notations. For a positive integer $d$, we use $[d]$ to denote the set $\{1,2,\ldots,d\}$. For any two integers $k$ and $d$ satisfying that $k<d$, we use $[k,d]$ to denote the set $\{k,k+1,\ldots,d\}$.  
For a subset $T\subset[d]$ we use $|T|$ to denote the size of $T$, and we say $T$ is a $k$-subset of $[d]$ if $|T|=k$. 

For a positive integer $d$, we use $\mathbf{I}_d$ to denote the identity matrix of size $d$. For a matrix $\mathbf{A}\in\mathbb{R}^{d\times d}$, we denote  the trace of $\mathbf{A}$ by $\text{\rm Tr}[\mathbf{A}]$, i.e., $\text{\rm Tr}[\mathbf{A}]=\sum_{i=1}^d \mathbf{A}(i,i)$. We denote the Euclidean norm of a vector $\mathbf{x}$ by $\Vert\mathbf{x}\Vert$. For a matrix $\mathbf{A}\in\mathbb{R}^{n\times d}$ and a $k$-subset $S\subset[d]$, we use $\mathbf{A}_S\in\mathbb{R}^{n\times k}$ to denote the submatrix of $\mathbf{A}$ obtained by extracting the columns of $\mathbf{A}$ indexed by $S$. For a square matrix $\mathbf{B}\in\mathbb{R}^{d\times d}$ and a $k$-subset $S\subset[d]$, we use $\mathbf{B}(S)\in\mathbb{R}^{k\times k}$ to denote the submatrix of $\mathbf{B}$ obtained by extracting the rows and columns of $\mathbf{B}$ indexed by $S$. For a matrix $\mathbf{A}$, we use $\sigma_{\min}(\mathbf{A})$ and $\sigma_{\max}(\mathbf{A})$ to denote the smallest and largest singular value of $\mathbf{A}$, respectively. If $\mathbf{A}$ is a square matrix, then we use $\lambda_{\min}(\mathbf{A})$ and $\lambda_{\max}(\mathbf{A})$ to denote the smallest and largest eigenvalue of $\mathbf{A}$, respectively. We define $\mathrm{Ran}(\mathbf{A})$ to be the range, or the column space, of a given matrix $\mathbf{A}$.

We use $\mathbb{P}^+(d)$ to denote  the family of real-rooted univariate polynomials of degree exactly $d$ that have a positive leading coefficient and have only nonnegative roots. We use $\mathbb{R}[x]$ to denote  the family of univariate polynomials with real coefficients. Let $z_1,\ldots,z_d$ be variables. We use $\mathbb{R}[z_1,\ldots,z_d]$ to denote the family of multivariate polynomials with only real coefficients.
 We write $\partial_{z_i}$ to indicate the partial differential $\partial\slash\partial_{z_i}$. For each subset $S\subset[d]$, we use $z^S$ to denote the multi-affine polynomial $\prod_{i\in S}z_i$.

Let $p(x)=\sum_{i=0}^{d}a_i\cdot x^i\in\mathbb{R}[x]$ be a univariate polynomial of degree at most $d$, where $a_0,\ldots,a_d$ are constants. We define the flip operator $\mathcal{R}_{x,d}^+:\mathbb{R}[x]\to \mathbb{R}[x]$ as
\begin{equation}\label{notation:eq2}
\mathcal{R}_{x,d}^+\ p(x) :=x^d\cdot p(1/x)=\sum_{i=0}^{d}a_i\cdot x^{d-i}.
\end{equation}
Throughout this paper we always assume that the polynomial $\mathcal{R}_{x,d}^+\ p(x)$ is well-defined at $x=0$, that is, if $p(x)=\sum_{i=0}^{d}a_i\cdot x^i$ then $(\mathcal{R}_{x,d}^+\ p)(0)=a_d$.
We next generalize the definition of $\mathcal{R}_{x,d}^+$ to the multivariate case. Let $r_1,\ldots,r_d$ be a collection of nonnegative integers, and let $f(z_1,\ldots,z_d)\in\mathbb{R}[z_1,\ldots,z_d]$ be a multivariate polynomial of degree at most $r_j$ in $z_j$, $j=1,\dots,d$. For each $i\in[d]$ we define the flip operator $\mathcal{R}_{z_i,r_i}^+:\mathbb{R}[z_1,\ldots, z_d]\to \mathbb{R}[z_1,\ldots, z_d]$ as
\begin{equation}\label{notation:eq1}
\mathcal{R}_{z_i,r_i}^+\ f :=z_i^{r_i}\cdot f(z_1,\ldots,z_i^{-1},\ldots,z_d).
\end{equation}
Note that $\mathcal{R}_{z_i,r_i}^+\ f$ is a multivariate polynomial of degree at most $r_j$ in $z_j$, $j=1,\dots,d$.

\subsection{The Moore-Penrose pseudoinverse}

For a matrix $\mathbf{A}\in\mathbb{R}^{n\times d}$, we use $\mathbf{A}^{\dagger}\in \mathbb{R}^{d\times n}$ to denote the Moore-Penrose pseudoinverse of $\mathbf{A}$. Assume that $\text{\rm rank}(\mathbf{A})=t$ and $\mathbf{A}$ has the thin singular value decomposition: $\mathbf{A}=\mathbf{U}\mathbf{\Sigma}\mathbf{V}^{\rm T}$, where $\mathbf{\Sigma}\in\mathbb{R}^{t\times t}$, $\mathbf{U}\in\mathbb{R}^{n\times t}$, and $\mathbf{V}\in\mathbb{R}^{d\times t}$ satisfying that $\mathbf{U}^{\rm T}\mathbf{U}=\mathbf{V}^{\rm T}\mathbf{V}=\mathbf{I}_t$. Then we have $\mathbf{A}^{\dagger}=\mathbf{V}\mathbf{\Sigma}^{-1}\mathbf{U}^{\rm T}$, and the matrix $\mathbf{A}\mathbf{A}^{\dagger}=\mathbf{U}\mathbf{U}^{\rm T}$ is the projection matrix of the column space $\mathrm{Ran}(\mathbf{A})$. For a subset $S\subset[d]$, we use $\mathbf{A}_S^{\dagger}$ to denote the Moore-Penrose pseudoinverse of $\mathbf{A}_S$, i.e., $\mathbf{A}_S^{\dagger}:=(\mathbf{A}_S)^{\dagger}$.

The following lemmas are related to the Moore-Penrose pseudoinverse and will be utilized later. The first lemma presents the rank-one update formula for a projection matrix.

\begin{lemma}\label{lemma2.1}
Assume that $n\geq d>1$. Let $\mathbf{B}\in\mathbb{R}^{n\times d}$ and $\mathbf{b}\in\mathbb{R}^{n}$ satisfying
$\mathbf{b}\notin \text{\rm Ran}(\mathbf{B})$. Set $\mathbf{M}:=[\mathbf{B},\mathbf{b}]\in\mathbb{R}^{n\times (d+1)}$. Then we have
\begin{equation*}
\mathbf{M}\mathbf{M}^{\dagger}=\mathbf{B}\mathbf{B}^{\dagger}+\frac{(\mathbf{Q}\mathbf{b})(\mathbf{Q}\mathbf{b})^{\rm T}}{\Vert \mathbf{Q}\mathbf{b} \Vert^2},
\end{equation*}
where $\mathbf{Q}=\mathbf{I}_n-\mathbf{B}\mathbf{B}^{\dagger}$ is the projection matrix of the orthogonal complement of $\mathrm{Ran}(\mathbf{B})$.

\end{lemma}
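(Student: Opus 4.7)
The plan is to argue that both sides of the identity are the orthogonal projection matrix onto the same subspace of $\mathbb{R}^n$, which forces them to coincide. Recall that $\mathbf{M}\mathbf{M}^{\dagger}$ is always the orthogonal projection onto $\mathrm{Ran}(\mathbf{M})$, and since $\mathbf{M}=[\mathbf{B},\mathbf{b}]$, its column space is $\mathrm{Ran}(\mathbf{B})+\mathrm{span}\{\mathbf{b}\}$. The first step would be to rewrite this column space in a convenient orthogonal form.

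Specifically, I would observe that $\mathrm{span}\{\mathbf{b}\}=\mathrm{span}\{\mathbf{B}\mathbf{B}^{\dagger}\mathbf{b}+\mathbf{Q}\mathbf{b}\}$ and, using that $\mathbf{B}\mathbf{B}^{\dagger}\mathbf{b}\in\mathrm{Ran}(\mathbf{B})$, deduce
\[
\mathrm{Ran}(\mathbf{M})=\mathrm{Ran}(\mathbf{B})\oplus\mathrm{span}\{\mathbf{Q}\mathbf{b}\}.
\]
The hypothesis $\mathbf{b}\notin\mathrm{Ran}(\mathbf{B})$ ensures $\mathbf{Q}\mathbf{b}\neq 0$ so $\mathrm{span}\{\mathbf{Q}\mathbf{b}\}$ is one-dimensional, and since $\mathbf{Q}$ is the projection onto $\mathrm{Ran}(\mathbf{B})^{\perp}$, the vector $\mathbf{Q}\mathbf{b}$ is orthogonal to $\mathrm{Ran}(\mathbf{B})$; hence the decomposition above is an \emph{orthogonal} direct sum.

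The second step is the standard fact that the orthogonal projection onto an orthogonal direct sum is the sum of the orthogonal projections onto each summand. The projection onto $\mathrm{Ran}(\mathbf{B})$ is $\mathbf{B}\mathbf{B}^{\dagger}$, and the projection onto the one-dimensional subspace $\mathrm{span}\{\mathbf{Q}\mathbf{b}\}$ is the rank-one operator $(\mathbf{Q}\mathbf{b})(\mathbf{Q}\mathbf{b})^{\rm T}/\Vert\mathbf{Q}\mathbf{b}\Vert^2$. Summing these yields the claimed identity. There is essentially no obstacle: the only hypothesis actually used is $\mathbf{b}\notin\mathrm{Ran}(\mathbf{B})$, which is needed both to avoid division by zero and to ensure that $\mathbf{Q}\mathbf{b}$ genuinely contributes a new direction. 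If one prefers a purely algebraic verification, an alternative route is to call the proposed right-hand side $\mathbf{P}$ and check directly that $\mathbf{P}^2=\mathbf{P}$, $\mathbf{P}^{\rm T}=\mathbf{P}$, and $\mathrm{Ran}(\mathbf{P})=\mathrm{Ran}(\mathbf{M})$; but the geometric argument above is cleaner and avoids any computation with block pseudoinverse formulas.
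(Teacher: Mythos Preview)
Your argument is correct and follows the same geometric idea as the paper: both sides are orthogonal projections onto $\mathrm{Ran}(\mathbf{M})$, and the paper simply cites \cite[Theorem~4.5]{YTT11} for the rank-one update formula rather than writing out the orthogonal direct-sum decomposition you give. Your version is a self-contained proof of exactly what that citation provides.
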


\begin{proof}

Note that $\mathbf{M}\mathbf{M}^{\dagger}$ and $\mathbf{B}\mathbf{B}^{\dagger}$ are the projection matrices of $\mathrm{Ran}(\mathbf{M})$ and $\mathrm{Ran}(\mathbf{B})$, respectively. Then the conclusion follows from \cite[Theorem 4.5]{YTT11}.

\end{proof}

The following lemma is a restatement of \cite[Lemma 12]{DR10}, showing that a determinant can be divided into a product of two determinants.

\begin{lemma}{\rm \cite[Lemma 12]{DR10}}\label{lemma2.2}
Let $\mathbf{B}\in\mathbb{R}^{n\times d}$ and $\mathbf{C}\in\mathbb{R}^{n\times m}$.
Set $\mathbf{M}=[\mathbf{B},\mathbf{C}]\in\mathbb{R}^{n\times (d+m)}$ and $\mathbf{Q}=\mathbf{I}_n-\mathbf{B}\mathbf{B}^{\dagger}$.
Then, we have
\begin{equation*}
\det[\mathbf{M}^{\rm T} \mathbf{M}]=\det[\mathbf{C}^{\rm T}\mathbf{Q} \mathbf{C}]\cdot \det[\mathbf{B}^{\rm T} \mathbf{B}].
\end{equation*}	
\end{lemma}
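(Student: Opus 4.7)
The plan is to write $\mathbf{M}^{\rm T}\mathbf{M}$ as the $2\times 2$ block matrix with diagonal blocks $\mathbf{B}^{\rm T}\mathbf{B}$ and $\mathbf{C}^{\rm T}\mathbf{C}$ and off-diagonal blocks $\mathbf{B}^{\rm T}\mathbf{C}$ and $\mathbf{C}^{\rm T}\mathbf{B}$, and then apply the Schur complement determinant formula relative to the top-left block. Identifying the resulting Schur complement with $\mathbf{C}^{\rm T}\mathbf{Q}\mathbf{C}$ will hinge on the identity $\mathbf{B}(\mathbf{B}^{\rm T}\mathbf{B})^{-1}\mathbf{B}^{\rm T} = \mathbf{B}\mathbf{B}^{\dagger}$, which holds whenever $\mathbf{B}$ has full column rank.

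First I would dispose of the degenerate case in which $\mathbf{B}$ is rank-deficient. Then $\det[\mathbf{B}^{\rm T}\mathbf{B}] = 0$ on the right-hand side, while on the left-hand side the linear dependence among the columns of $\mathbf{B}$ forces $\mathrm{rank}(\mathbf{M}) \leq \mathrm{rank}(\mathbf{B}) + m < d + m$, so that $\det[\mathbf{M}^{\rm T}\mathbf{M}] = 0$ as well. In this case the identity reduces to $0 = 0$.

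Next, assuming $\mathrm{rank}(\mathbf{B}) = d$, Schur's formula applied to the block decomposition of $\mathbf{M}^{\rm T}\mathbf{M}$ gives
\begin{equation*}
\det[\mathbf{M}^{\rm T}\mathbf{M}] \;=\; \det[\mathbf{B}^{\rm T}\mathbf{B}] \cdot \det\!\left[\mathbf{C}^{\rm T}\mathbf{C} - \mathbf{C}^{\rm T}\mathbf{B}(\mathbf{B}^{\rm T}\mathbf{B})^{-1}\mathbf{B}^{\rm T}\mathbf{C}\right].
\end{equation*}
Factoring $\mathbf{C}^{\rm T}$ on the left and $\mathbf{C}$ on the right inside the second determinant and substituting $\mathbf{B}(\mathbf{B}^{\rm T}\mathbf{B})^{-1}\mathbf{B}^{\rm T} = \mathbf{B}\mathbf{B}^{\dagger} = \mathbf{I}_n - \mathbf{Q}$ completes the argument.

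The only point requiring a moment of justification is the identity $\mathbf{B}(\mathbf{B}^{\rm T}\mathbf{B})^{-1}\mathbf{B}^{\rm T} = \mathbf{B}\mathbf{B}^{\dagger}$; both sides equal the orthogonal projector onto $\mathrm{Ran}(\mathbf{B})$, which is immediate from the thin SVD of $\mathbf{B}$ recalled in Section 2.2. I do not anticipate any genuine obstacle beyond this routine verification; alternatively one could avoid the case split entirely by perturbing $\mathbf{B}$ to $\mathbf{B} + \varepsilon \mathbf{B}_0$ for a generic $\mathbf{B}_0$, applying the full-rank case, and taking $\varepsilon \to 0$ by continuity.
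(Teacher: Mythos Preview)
Your argument is correct. The paper itself does not prove this lemma; it merely restates \cite[Lemma~12]{DR10} and cites that reference for the proof. Your Schur complement computation, together with the case split on whether $\mathbf{B}$ has full column rank, is exactly the standard way to establish the identity, and there are no gaps in it.
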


\subsection{The differential operator $(x^2\cdot\partial_x-d\cdot x)$}\label{section2.3}

In this subsection, we give a brief introduction to the differential operator $(x^2\cdot\partial_x-d\cdot x)$, which is raised in Theorem \ref{thm1.3}. This differential operator can be related to the polar derivative at 0 of a degree-$d$ polynomial, which is defined by $(d-x\cdot\partial_x)\cdot p(x)$ for a degree-$d$ polynomial $p(x)$ \cite[p. 44]{Mar66}. Additionally, the differential operator $(x^2\cdot\partial_x-d\cdot x)$ is closely linked to the degree reduction  formula for the symmetric multiplicative convolution of polynomials, as introduced in \cite[Lemma 4.9]{inter5}.

The following proposition presents some basic properties about the differential operator $(x^2\cdot\partial_x-d\cdot x)$. We recall that a univariate polynomial $p(x)$ is considered to be {\em real-rooted} if all of its coefficients and roots are real.

\begin{prop}\label{prop21}
\begin{enumerate} 
\item[(i)] Assume that $p(x)$ is a polynomial with degree at most $d$. Then for any nonnegative integer $k$, we have
\begin{equation}\label{prop2.1:eq1}
(x^2\cdot\partial_x-d\cdot x)^k\ p(x)= (-1)^k\cdot \mathcal{R}_{x,d}^+\cdot \partial_x^k\cdot \mathcal{R}_{x,d}^+\ p(x).
\end{equation}

\item[(ii)]  Assume that $p(x)$ is a degree-$d$ real-rooted polynomial with exactly $t$ nonzero roots. Then $(x^2\cdot\partial_x-d\cdot x)^k\ p(x) \equiv 0 $ if and only if $k\geq t+1$. When $k=t$, $(x^2\cdot\partial_x-d\cdot x)^k\ p(x)$ is a multiple of $x^d$.
\item[(iii)]
 Assume that $p(x)$ is a polynomial in $\mathbb{P}^+(d)$ with exactly $t\geq 1$ positive roots. Then we have
 \begin{equation}\label{prop2.1:eq2}
\text{\rm maxroot}\	(x^2\cdot\partial_x-d\cdot x)^k\ p(x)= \frac{1}{\text{\rm minroot}\ \partial_x^k\cdot \mathcal{R}_{x,d}^+\ p(x)},\quad \text{ for  }\ k<t.
\end{equation}
 Furthermore, for any non-negative integer $k$ such that $k\leq t$,  the polynomial $(x^2\cdot\partial_x-d\cdot x)^k\ p(x)$ belongs to $\mathbb{P}^+(d)$ and its largest root is non-increasing as $k$ increases.
\end{enumerate}
\end{prop}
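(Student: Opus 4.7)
The heart of the proof is to recognize that the operator $T := x^2\partial_x - d\cdot x$ is conjugate, up to a sign, to $\partial_x$ under the involution $\mathcal{R}_{x,d}^+$. Once this is established, (ii) and (iii) reduce to elementary facts about polynomials and their derivatives.

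For part (i), I would first verify by direct computation, writing $p(x) = \sum_{i=0}^{d} a_i x^i$, that
\begin{equation*}
T\, p(x) \;=\; \sum_{i=0}^{d} (i-d)\, a_i\, x^{i+1} \;=\; -\,\mathcal{R}_{x,d}^+\bigl(\partial_x\, \mathcal{R}_{x,d}^+\, p(x)\bigr),
\end{equation*}
where the second equality follows by expanding $\mathcal{R}_{x,d}^+ p = \sum a_i x^{d-i}$, differentiating, and applying $\mathcal{R}_{x,d}^+$ again. The next step is to check that $\mathcal{R}_{x,d}^+$ is an involution on the space of polynomials of degree at most $d$: indeed $\mathcal{R}_{x,d}^+(\mathcal{R}_{x,d}^+ p)(x) = x^d \cdot (1/x)^d p(x) = p(x)$. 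Iterating the identity $T = -\mathcal{R}_{x,d}^+ \partial_x \mathcal{R}_{x,d}^+$ and collapsing every adjacent pair $\mathcal{R}_{x,d}^+ \mathcal{R}_{x,d}^+$ to the identity yields $T^k p = (-1)^k \mathcal{R}_{x,d}^+ \partial_x^k \mathcal{R}_{x,d}^+ p$, which is \eqref{prop2.1:eq1}.

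For part (ii), I would factor $p(x) = c\, x^{d-t}\prod_{i=1}^{t}(x-r_i)$ with $r_i\neq 0$, and then compute
\begin{equation*}
\mathcal{R}_{x,d}^+ p(x) \;=\; x^d\, p(1/x) \;=\; c\prod_{i=1}^{t}(1-r_i x),
\end{equation*}
a polynomial of degree \emph{exactly} $t$ with nonzero leading coefficient $c\,(-1)^t\prod r_i$. From (i), $T^k p\equiv 0$ if and only if $\partial_x^k \mathcal{R}_{x,d}^+ p\equiv 0$, which happens precisely when $k\geq t+1$. When $k=t$, the polynomial $\partial_x^t \mathcal{R}_{x,d}^+ p$ is a nonzero constant, so $T^t p = (-1)^t\mathcal{R}_{x,d}^+(\text{const}) = (\text{const})\cdot x^d$.

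For part (iii), the same factorization now has $c>0$ and $r_i>0$. Then $\mathcal{R}_{x,d}^+ p(x) = c\prod(1-r_i x)$ has all its $t$ roots positive (namely $1/r_i$), so by iterated application of Rolle's theorem, $\partial_x^k \mathcal{R}_{x,d}^+ p$ is real-rooted of degree $t-k$ with all roots positive and (importantly) its minimum root is non-decreasing in $k$. Writing this polynomial as $q(x) = c'\prod_{j=1}^{t-k}(x-s_j)$ with $s_j>0$, a direct computation gives
\begin{equation*}
\mathcal{R}_{x,d}^+ q(x) \;=\; c'\, x^{d-t+k}\prod_{j=1}^{t-k}(1-s_j x),
\end{equation*}
whose nonzero roots are $1/s_j$. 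Combined with (i), this immediately yields the reciprocal identity \eqref{prop2.1:eq2}. A bookkeeping check of signs shows that the leading coefficient of $(-1)^k\mathcal{R}_{x,d}^+ q$ equals $c\cdot \frac{t!}{(t-k)!}\prod r_i\prod s_j > 0$, so $T^k p \in \mathbb{P}^+(d)$. Monotonicity of its largest root follows from the monotonicity of $\text{minroot}(\partial_x^k \mathcal{R}_{x,d}^+ p)$ established via Rolle.

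The main obstacle is not any single deep step but the sign and degree bookkeeping in (iii): one must carefully track that $\mathcal{R}_{x,d}^+$ turns a degree $t-k$ polynomial into a degree-$d$ polynomial with leading coefficient $q(0)$, and that the overall sign $(-1)^k$ in (i) exactly cancels the $(-1)^{t-k}$ produced by the factored form, delivering a positive leading coefficient and membership in $\mathbb{P}^+(d)$.
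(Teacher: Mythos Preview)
Your proposal is correct and follows essentially the same route as the paper: establish $T=-\mathcal{R}_{x,d}^+\,\partial_x\,\mathcal{R}_{x,d}^+$ and the involution property to get (i), then use the factorization $p(x)=c\,x^{d-t}\prod(x-r_i)$ together with Rolle's theorem and the reciprocal-root relation for (ii) and (iii). The only cosmetic difference is that the paper records the leading coefficient of $T^k p$ directly as $k!\,\lambda_0\sum_{|S|=k}\prod_{i\in S}\lambda_i$, whereas you express it as $c\cdot\frac{t!}{(t-k)!}\prod r_i\prod s_j$; these agree, and your sign accounting (the three factors $(-1)^k$, $(-1)^{t-k}$, and the $(-1)^t$ hidden in $c'$ combine to $+1$) is sound.
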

\begin{proof}

(i) An elementary calculation shows that
\begin{subequations}
\begin{align}
(x^2\cdot\partial_x-d\cdot x)\ p(x)&= - \mathcal{R}_{x,d}^+\cdot \partial_x\cdot \mathcal{R}_{x,d}^+\ p(x), \label{prop2.1:eq3}\\
\mathcal{R}_{x,d}^+\cdot \mathcal{R}_{x,d}^+\ p(x)&=p(x). \label{prop2.1:eq4}
\end{align}
\end{subequations}
Repeatedly using  \eqref{prop2.1:eq3} and \eqref{prop2.1:eq4}, we obtain \eqref{prop2.1:eq1}.

(ii) We write $p(x)=x^{d-t}\cdot q(x)$, where $q(x)$ is a degree $t$ polynomial with $t$ nonzero real roots. According to \eqref{prop2.1:eq1},  $(x^2\cdot\partial_x-d\cdot x)^k\ p(x)\equiv 0 $ if and only if $\partial_x^k\cdot \mathcal{R}_{x,d}^+\ p(x) \equiv 0 $. Noting that $\mathcal{R}_{x,d}^+\ p(x)=x^d\cdot p(1/x)=x^{t}\cdot q(1/x)$ has degree $t$, we obtain that
$ \partial_x^k\cdot \mathcal{R}_{x,d}^+\ p(x)\equiv 0 $ if and only if $k\geq t+1$. When $k=t$, note that $ \partial_x^t\cdot \mathcal{R}_{x,d}^+\ p(x)$ is a constant, so \eqref{prop2.1:eq1} implies that $(x^2\cdot\partial_x-d\cdot x)^k\ p(x)$ is a multiple of $x^d$.

(iii) We write $p(x)=\lambda_0\cdot x^{d-t}\cdot \prod_{i=1}^t (x-\lambda_i)$, where $\lambda_0,\lambda_1,\ldots,\lambda_t> 0$ are positive real numbers. Then $\mathcal{R}_{x,d}^+\ p(x)=x^d\cdot p(1/x)=(-1)^t\cdot \prod_{i=0}^t \lambda_i \cdot  \prod_{i=1}^t(x-\lambda_i^{-1})$ is a real-rooted polynomial of degree $t$ with only positive roots. According to Rolle's theorem, the polynomial $\partial_x^k\cdot \mathcal{R}_{x,d}^+\ p(x)$ has only positive roots when $k<t$. Then, \eqref{prop2.1:eq2} immediately follows from \eqref{prop2.1:eq1}.

We next assume $k\leq t$. According to Rolle's theorem, the smallest root of $\partial_x^k\cdot \mathcal{R}_{x,d}^+\ p(x)$ is non-decreasing in $k$. Recall that $\partial_x^k\cdot \mathcal{R}_{x,d}^+\ p(x)$ has only positive roots when $k<t$ and that $\partial_x^k\cdot \mathcal{R}_{x,d}^+\ p(x)$ is a constant when $k=t$. Then, \eqref{prop2.1:eq1} implies that $(x^2\cdot\partial_x-d\cdot x)^k\ p(x)$ has only nonnegative roots, and \eqref{prop2.1:eq2} shows that the largest root of  $(x^2\cdot\partial_x-d\cdot x)^k \ p(x)$ is non-increasing in $k$. Finally, by \eqref{prop2.1:eq1} we can calculate that $(x^2\cdot\partial_x-d\cdot x)^k\ p(x)$ has the positive leading coefficient $k!\cdot \lambda_0\cdot \sum_{S\subset[t],\vert S\vert=k} \prod_{i\in S}\lambda_i$ for each $k\leq t$. Hence, we can conclude that $(x^2\cdot\partial_x-d\cdot x)^k\ p(x)$ is in $\mathbb{P}^+(d)$ for each $k\leq t$, and its largest root is non-increasing in $k$.\end{proof}
\begin{remark}\label{remark2.1}
 The identity \eqref{prop2.1:eq1} can be generalized to the multivariate case. Let $r_1,\ldots,r_d$ be a collection of nonnegative integers, and let $f(z_1,\ldots,z_d)\in\mathbb{R}[z_1,\ldots,z_d]$ be a multivariate polynomial of degree at most $r_i$ in $z_i$, $i=1,\ldots,d$. Then a similar calculation shows that, for each $i\in[d]$ and for each integer $k\geq 0$,
\begin{equation}
(z_i^2 \partial_{z_i}-r_i\cdot z_i)^k\cdot f(z_1,\ldots,z_d)=(-1)^k\cdot  \mathcal{R}_{z_i,r_i}^+\cdot \partial_{z_i}^k\cdot \mathcal{R}_{z_i,r_i}^+\ f(z_1,\ldots,z_d), \label{remark2.1:eq1}
\end{equation}
where the operator $\mathcal{R}_{z_i,r_i}^+$ is defined in \eqref{notation:eq1}. 
The above identity will be used in Section \ref{section4.1}.
\end{remark}

\begin{remark}
Let $\mathbf{A}$ be a matrix in $\mathbb{R}^{n\times d}$. Then Proposition \ref{prop21} {\rm (iii)} shows that  the expected polynomial has only nonnegative real roots when $k\leq\text{\rm rank}(\mathbf{A})$, and its largest root is non-increasing as $k$ increases. Moreover, Proposition \ref{prop21} {\rm (i)} shows that the expected polynomial $(x^2\cdot\partial_x-d\cdot x)^k \ \det[x\cdot \mathbf{I}_d-\mathbf{A}^{\rm T}\mathbf{A}]$ can be written as $(-1)^k\cdot \mathcal{R}_{x,d}^+\cdot \partial_x^k\cdot \mathcal{R}_{x,d}^+\  \det[x\cdot\mathbf{I}_d-\mathbf{A}^{\rm T}\mathbf{A}]$.  
It is worth noting that Marcus-Spielman-Srivastava and Ravichandran previously used the method of interlacing polynomials to study the restricted invertibility problem \cite{ravi1,inter3}. In the setting of selecting $d-k$ columns from a given matrix $\mathbf{A}\in\mathbb{R}^{n\times d}$, the relevant expected polynomial is $\partial_x^k\  \det[x\cdot\mathbf{I}_d-\mathbf{A}^{\rm T}\mathbf{A}]$. 
Therefore, our expected polynomial can be regarded as an extension of their expected polynomial by incorporating two extra variable substitution operators $\mathcal{R}_{x,d}^+$.

\end{remark}

\subsection{Alternative expressions for the polynomial $p_S(x)$}\label{section4.1}

Let $\mathbf{Z}:=\text{\rm diag}(z_1,\ldots,z_d)$, where $z_1,\ldots,z_d$ are variables. For any matrix $\mathbf{B}\in\mathbb{R}^{d\times d}$, we have (see  \cite[Corollary 3]{ravi2})
\begin{equation}\label{section4.1:eq1}
\det[\mathbf{Z}-\mathbf{B}]=\sum\limits_{R\subset[d]}(-1)^{d-\vert R\vert}\det[\mathbf{B}(R^C)]\cdot z^R.	
\end{equation}
Here  $z^R:=\prod_{j\in R}z_j$ for each subset $R\subset[d]$. For any subset $S\subset[d]$, differentiating both sides of \eqref{section4.1:eq1} with respect to $z_j$ for $j\in S$, we obtain the following identity (See also \cite[Lemma 1]{ravi2}):
\begin{equation}\label{section4.1:eq2}
\det[(\mathbf{Z}-\mathbf{B})(S^C)]=\bigg(\prod\limits_{j\in S}\partial_{z_j}\bigg)\ \det[\mathbf{Z}-\mathbf{B}].
\end{equation}
This shows that the characteristic polynomial of the principal submatrix $\mathbf{B}(S^C)$ can be expressed in terms of the multi-affine polynomial $\det[\mathbf{Z}-\mathbf{B}]$:
\begin{equation*}
\det[x\cdot \mathbf{I}_{d-\vert S\vert}-\mathbf{B}(S^C)]=\bigg(\prod\limits_{j\in S}\partial_{z_j}\bigg)\ \det[\mathbf{Z}-\mathbf{B}]\ \bigg|_{z_j=x,\ \forall j\in[d]}.
\end{equation*}
In this subsection, we provide an analog result for the polynomial $p_S(x)=\det[x\cdot \mathbf{I}_d-\mathbf{A}^{\rm T}\mathbf{Q}_S\mathbf{A}]$ defined in \eqref{section1.3:eq1}, where $\mathbf{Q}_S=\mathbf{I}_n-\mathbf{A}_S\mathbf{A}_S^{\dagger}$.

\begin{prop}\label{prop4.1}

Let $\mathbf{A}=[\mathbf{a}_1,\ldots,\mathbf{a}_d]$ be a matrix in $\mathbb{R}^{n\times d}$. Let $\mathbf{Z}=\text{\rm diag}(z_1,\ldots,z_d)$, where $z_1,\ldots,z_d$ are variables.

\begin{enumerate} 
	\item[(i)] For each subset $S\subset[d]$, we have
\begin{equation}\label{prop4.1:eq1}
\det[\mathbf{A}_S^{\rm T}\mathbf{A}_S]\cdot \det[\mathbf{Z}-\mathbf{A}^{\rm T} \mathbf{Q}_S\mathbf{A}]=\bigg(\prod\limits_{j\in S}^{}(z_j^2\partial_{z_j}-z_j)\bigg)  \det[\mathbf{Z}-\mathbf{A}^{\rm T}\mathbf{A}]
\end{equation}
and
\begin{equation}\label{prop4.1:eq11}
\det[\mathbf{A}_S^{\rm T}\mathbf{A}_S]\cdot p_S(x)=\bigg(\prod\limits_{j\in S}^{}(z_j^2\partial_{z_j}-z_j)\bigg)  \det[\mathbf{Z}-\mathbf{A}^{\rm T}\mathbf{A}]\ \bigg|_{z_j=x,\forall j\in[d]}.
\end{equation}

\item[(ii)] Assume that $\mathbf{A}$ has rank $d$. Let $\widehat{\mathbf{A}}:=(\mathbf{A}^{\rm T}\mathbf{A})^{-1}\in\mathbb{R}^{d\times d}$. Then, for each subset $S\subset[d]$, we have
\begin{equation}\label{prop4.1:eq3}
p_S(x)=(-1)^{\vert S\vert+d}\cdot \frac{\det[\mathbf{A}^{\rm T}\mathbf{A}]}{\det[\mathbf{A}_S^{\rm T}\mathbf{A}_S]}\cdot \mathcal{R}_{x,d}^+\ \det[x\cdot \mathbf{I}_{d-|S|}-\widehat{\mathbf{A}}(S^C)].
\end{equation}

\end{enumerate}

\end{prop}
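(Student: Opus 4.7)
For part (i), the plan is to exploit that $\det[\mathbf{Z}-\mathbf{A}^{\rm T}\mathbf{A}]$ is multi-affine in $z_1,\ldots,z_d$. A direct one-line calculation shows that whenever $f\in\mathbb{R}[z_1,\ldots,z_d]$ has degree at most $1$ in $z_j$, one has $(z_j^2\partial_{z_j}-z_j)f=-z_j\cdot f|_{z_j=0}$. Since this operation preserves multi-affineness in every other variable and the operators for different $j$ commute, iterating over $j\in S$ gives
\begin{equation*}
\prod_{j\in S}(z_j^2\partial_{z_j}-z_j)\,f=(-1)^{|S|}\Big(\prod_{j\in S}z_j\Big)\cdot f\big|_{z_j=0,\,j\in S}.
\end{equation*}
Applying this with $f=\det[\mathbf{Z}-\mathbf{A}^{\rm T}\mathbf{A}]$ reduces the verification of \eqref{prop4.1:eq1} to computing $\det[\mathbf{Z}-\mathbf{A}^{\rm T}\mathbf{A}]|_{z_j=0,\,j\in S}$. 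Viewing the underlying matrix in block form with respect to the partition $[d]=S\sqcup S^C$, its upper-left block is $-\mathbf{A}_S^{\rm T}\mathbf{A}_S$. When $\mathbf{A}_S$ has full column rank, Schur complementation of this block produces $\mathbf{Z}(S^C)-\mathbf{A}_{S^C}^{\rm T}\mathbf{Q}_S\mathbf{A}_{S^C}$ once $\mathbf{I}-\mathbf{A}_S(\mathbf{A}_S^{\rm T}\mathbf{A}_S)^{-1}\mathbf{A}_S^{\rm T}=\mathbf{Q}_S$ is recognized, giving
\begin{equation*}
\det[\mathbf{Z}-\mathbf{A}^{\rm T}\mathbf{A}]|_{z_j=0,\,j\in S}=(-1)^{|S|}\det[\mathbf{A}_S^{\rm T}\mathbf{A}_S]\cdot\det[\mathbf{Z}(S^C)-\mathbf{A}_{S^C}^{\rm T}\mathbf{Q}_S\mathbf{A}_{S^C}].
\end{equation*}
Combined with the block-diagonal factorization $\det[\mathbf{Z}-\mathbf{A}^{\rm T}\mathbf{Q}_S\mathbf{A}]=(\prod_{j\in S}z_j)\cdot\det[\mathbf{Z}(S^C)-\mathbf{A}_{S^C}^{\rm T}\mathbf{Q}_S\mathbf{A}_{S^C}]$, valid because $\mathbf{Q}_S\mathbf{a}_j=0$ for all $j\in S$, this yields \eqref{prop4.1:eq1}. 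When $\mathbf{A}_S$ is rank-deficient, the LHS vanishes trivially, while the RHS vanishes because the $S$-indexed rows of the restricted matrix are $-\mathbf{a}_i^{\rm T}\mathbf{A}$ ($i\in S$), which inherit the linear dependence of $\{\mathbf{a}_i\}_{i\in S}$. Finally, \eqref{prop4.1:eq11} is immediate from \eqref{prop4.1:eq1} upon specializing $z_j=x$ for all $j\in[d]$.

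For part (ii), the first step is to apply the block inverse formula to $\mathbf{A}^{\rm T}\mathbf{A}$ partitioned by $S$ and $S^C$, identifying $\widehat{\mathbf{A}}(S^C)=(\mathbf{A}_{S^C}^{\rm T}\mathbf{Q}_S\mathbf{A}_{S^C})^{-1}$; the rank assumption together with Lemma~\ref{lemma2.2} guarantees that both $\mathbf{A}_S^{\rm T}\mathbf{A}_S$ and $\mathbf{A}_{S^C}^{\rm T}\mathbf{Q}_S\mathbf{A}_{S^C}$ are invertible. The zero-row/column structure of $\mathbf{A}^{\rm T}\mathbf{Q}_S\mathbf{A}$ already used in (i) yields the factorization $p_S(x)=x^{|S|}\cdot\det[x\mathbf{I}_{d-|S|}-\mathbf{A}_{S^C}^{\rm T}\mathbf{Q}_S\mathbf{A}_{S^C}]$, and a direct coefficient match then gives $\mathcal{R}_{x,d}^+p_S(x)=\mathcal{R}_{x,d-|S|}^+\det[x\mathbf{I}_{d-|S|}-\mathbf{A}_{S^C}^{\rm T}\mathbf{Q}_S\mathbf{A}_{S^C}]$. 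The elementary eigenvalue-inversion identity $\mathcal{R}_{x,m}^+\det[x\mathbf{I}_m-\mathbf{B}]=(-1)^m\det[\mathbf{B}]\det[x\mathbf{I}_m-\mathbf{B}^{-1}]$, applied with $m=d-|S|$ and $\mathbf{B}=\mathbf{A}_{S^C}^{\rm T}\mathbf{Q}_S\mathbf{A}_{S^C}$, then gives
\begin{equation*}
\mathcal{R}_{x,d}^+p_S(x)=(-1)^{d-|S|}\det[\mathbf{A}_{S^C}^{\rm T}\mathbf{Q}_S\mathbf{A}_{S^C}]\cdot\det[x\mathbf{I}_{d-|S|}-\widehat{\mathbf{A}}(S^C)].
\end{equation*}
Applying $\mathcal{R}_{x,d}^+$ once more (an involution on polynomials of degree at most $d$), substituting $\det[\mathbf{A}_{S^C}^{\rm T}\mathbf{Q}_S\mathbf{A}_{S^C}]=\det[\mathbf{A}^{\rm T}\mathbf{A}]/\det[\mathbf{A}_S^{\rm T}\mathbf{A}_S]$ from Lemma~\ref{lemma2.2}, and using $(-1)^{d-|S|}=(-1)^{d+|S|}$, produces \eqref{prop4.1:eq3}.

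The most delicate bookkeeping will be in (ii), where the two different reversal operators $\mathcal{R}_{x,d}^+$ and $\mathcal{R}_{x,d-|S|}^+$ interact and the involution step must be applied correctly to $p_S(x)$ even though the intermediate polynomial has degree $d-|S|<d$. A secondary subtlety already appears in (i): the rank-deficient case has to be argued via the row-dependence observation, since the Schur complement argument is unavailable there.
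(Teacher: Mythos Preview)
Your proposal is correct, and both parts take genuinely different routes from the paper.

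For part (i), the paper expands $\det[\mathbf{Z}-\mathbf{A}^{\rm T}\mathbf{A}]$ via the multi-affine identity $\det[\mathbf{Z}-\mathbf{B}]=\sum_{R}(-1)^{d-|R|}\det[\mathbf{B}(R^C)]\,z^R$, applies the operator term-by-term, and then matches coefficients on both sides using Lemma~\ref{lemma2.2}. You instead evaluate $\det[\mathbf{Z}-\mathbf{A}^{\rm T}\mathbf{A}]\big|_{z_j=0,\,j\in S}$ directly via a Schur complement and recognise $\mathbf{Q}_S$ in the complement block. The paper's expansion handles the rank-deficient case automatically (every term $\det[\mathbf{A}_{R\cup S}^{\rm T}\mathbf{A}_{R\cup S}]$ vanishes), whereas your approach needs the separate row-dependence observation you flagged; on the other hand, your argument is shorter and avoids any combinatorial bookkeeping.

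For part (ii), the paper stays in the multivariate framework: it rewrites $\prod_{j\in S}(z_j^2\partial_{z_j}-z_j)$ via the multivariate flip operators $\mathcal{R}_{z_j,1}^+$, computes $\mathcal{R}\det[\mathbf{Z}-\mathbf{A}^{\rm T}\mathbf{A}]=(-1)^d\det[\mathbf{A}^{\rm T}\mathbf{A}]\det[\mathbf{Z}-\widehat{\mathbf{A}}]$, and then applies \eqref{section4.1:eq2}. You work entirely univariately: the block-inverse formula identifies $\widehat{\mathbf{A}}(S^C)=(\mathbf{A}_{S^C}^{\rm T}\mathbf{Q}_S\mathbf{A}_{S^C})^{-1}$, the factorisation $p_S(x)=x^{|S|}\det[x\mathbf{I}_{d-|S|}-\mathbf{A}_{S^C}^{\rm T}\mathbf{Q}_S\mathbf{A}_{S^C}]$ reduces $\mathcal{R}_{x,d}^+p_S$ to a lower-degree reversal, and the eigenvalue-inversion identity does the rest. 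Your route is more elementary and sidesteps the multivariate flip machinery entirely; the paper's route is more structural and ties the result back to the same multi-affine calculus driving part~(i). Both are clean, and the bookkeeping concerns you raise (the interaction of $\mathcal{R}_{x,d}^+$ with $\mathcal{R}_{x,d-|S|}^+$, and the involution on a degree-$(d-|S|)$ intermediate) are handled correctly.
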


\begin{proof}

(i) We first prove equation \eqref{prop4.1:eq1}. For any multi-affine polynomial $f(z_1,\ldots,z_d)$ and any $j\in[d]$, we have
\begin{equation*}
(z_j^2\partial_{z_j}-z_j)\ f(z_1,\ldots,z_d)= f(z_1,\ldots,z_d)\ \bigg|_{z_j=0}\times (-z_j).
\end{equation*}
Also, by \eqref{section4.1:eq1}, we can expand out $\det[\mathbf{Z}-\mathbf{A}^{\rm T}\mathbf{A}]$ as
\begin{equation*}
\det[\mathbf{Z}-\mathbf{A}^{\rm T}\mathbf{A}]=\sum\limits_{W\subset[d]}(-1)^{\vert W\vert} \cdot\det[\mathbf{A}_W^{\rm T}\mathbf{A}_W]\cdot z^{W^C}.
\end{equation*}	
Then we can rewrite the right-hand side of \eqref{prop4.1:eq1} as
\begin{equation}\label{prop4.1:eq4}
\begin{aligned}
\bigg(\prod\limits_{j\in S}^{}(z_j^2\partial_{z_j}-z_j)\bigg)  \det[\mathbf{Z}-\mathbf{A}^{\rm T}\mathbf{A}]&=(-1)^{\vert S\vert}\cdot  \det[\mathbf{Z}-\mathbf{A}^{\rm T}\mathbf{A}]\bigg|_{z_j=0,\forall j\in S}\cdot z^{S}\\
&=(-1)^{\vert S\vert} \cdot z^{S} \cdot \sum\limits_{W:S\subset W \subset[d]}(-1)^{\vert W\vert} \cdot\det[\mathbf{A}_W^{\rm T}\mathbf{A}_W]\cdot z^{W^C}\\
&=\sum\limits_{R\subset[d]\backslash S}(-1)^{\vert R\vert} \cdot\det[\mathbf{A}_{R\cup S}^{\rm T}\mathbf{A}_{R\cup S}]\cdot z^{R^C}.
\end{aligned}
\end{equation}
We now derive \eqref{prop4.1:eq1}. By \eqref{section4.1:eq1}, we can rewrite the left-hand side of \eqref{prop4.1:eq1} as
\begin{equation}\label{prop4.1:eq5}
\det[\mathbf{A}_S^{\rm T}\mathbf{A}_S]\cdot \det[\mathbf{Z}-\mathbf{A}^{\rm T} \mathbf{Q}_S\mathbf{A}]=\det[\mathbf{A}_S^{\rm T}\mathbf{A}_S]\cdot \sum\limits_{R\subset[d]}(-1)^{\vert R\vert} \cdot\det[\mathbf{A}_R^{\rm T}\mathbf{Q}_S\mathbf{A}_R]\cdot z^{R^C}.
\end{equation}
The columns of $\mathbf{Q}_S\mathbf{A}_R$ indexed by $R\cap S$ are all equal to $\mathbf{0}_{n\times 1}$, so we have $\det[\mathbf{A}_R^{\rm T}\mathbf{Q}_S\mathbf{A}_R]=0$ if $R\cap S\neq \emptyset$. This means that we can rewrite equation \eqref{prop4.1:eq5} as
\begin{equation*}
\begin{aligned}
\det[\mathbf{A}_S^{\rm T}\mathbf{A}_S]\cdot \det[\mathbf{Z}-\mathbf{A}^{\rm T} \mathbf{Q}_S\mathbf{A}]&=\det[\mathbf{A}_S^{\rm T}\mathbf{A}_S]\cdot \sum\limits_{R\subset[d]\backslash S}(-1)^{\vert R\vert} \cdot\det[\mathbf{A}_R^{\rm T}\mathbf{Q}_S\mathbf{A}_R]\cdot z^{R^C}\\
&\overset{(a)}{=}\sum\limits_{R\subset[d]\backslash S}(-1)^{\vert R\vert} \cdot\det[\mathbf{A}_{R\cup S}^{\rm T}\mathbf{A}_{R\cup S}]\cdot z^{R^C}\\
&=\bigg(\prod\limits_{j\in S}^{}(z_j^2\partial_{z_j}-z_j)\bigg)  \det[\mathbf{Z}-\mathbf{A}^{\rm T}\mathbf{A}].
\end{aligned}
\end{equation*}
Here, the last equality follows from \eqref{prop4.1:eq4}, and the equality $(a)$ follows from
\begin{equation*}
\det[\mathbf{A}_{R\cup S}^{\rm T}\mathbf{A}_{R\cup S}]=	\det[\mathbf{A}_S^{\rm T}\mathbf{A}_S]\cdot \det[\mathbf{A}_R^{\rm T}\mathbf{Q}_S\mathbf{A}_R], \quad \forall\ R\subset[d]\backslash S,
\end{equation*}
which is obtained by taking $\mathbf{B}=\mathbf{A}_S$ and $\mathbf{C}=\mathbf{A}_R$ in Lemma \ref{lemma2.2}.
Hence, we arrive at \eqref{prop4.1:eq1}. Setting $z_j=x$ for each $j\in[d]$ in equation \eqref{prop4.1:eq1}, we immediately obtain equation \eqref{prop4.1:eq11}.

(ii) A simple observation is that, for any multi-affine polynomial $f(z_1,\ldots,z_d)$ and any $j\in[d]$, we have $\mathcal{R}_{z_j,1}^+\cdot \mathcal{R}_{z_j,1}^+\ f= f$. Combining this with \eqref{remark2.1:eq1} we have
\begin{equation}\label{prop4.1:eq7}
\begin{aligned}
\bigg(\prod\limits_{j\in S}^{}(z_j^2\partial_{z_j}-z_j)\bigg)  \det[\mathbf{Z}-\mathbf{A}^{\rm T}\mathbf{A}]&=\bigg(\prod\limits_{j\notin S}^{}  \mathcal{R}_{z_j,1}^+\cdot \mathcal{R}_{z_j,1}^+\bigg)\cdot  \bigg(\prod\limits_{j\in  S}^{} -\mathcal{R}_{z_j,1}^+\cdot \partial_{z_j}\cdot \mathcal{R}_{z_j,1}^+\bigg)\cdot  \det[\mathbf{Z}-\mathbf{A}^{\rm T}\mathbf{A}] \\
&=(-1)^{\vert S\vert}\cdot \mathcal{R}\cdot \bigg( \prod\limits_{j\in S}^{} \partial_{z_j}\bigg)\cdot \mathcal{R} \cdot  \det[\mathbf{Z}-\mathbf{A}^{\rm T}\mathbf{A}],
\end{aligned}
\end{equation}
where $\mathcal{R}:=\prod_{j=1}^{d} \mathcal{R}_{z_j,1}^+$, and the last equality follows from the fact that the operator $\mathcal{R}_{z_j,1}^+$ can be interchanged with the operators $\mathcal{R}_{z_i,1}^+$ and $\partial_{z_i}$ for any $i\neq j$.	
Note that
\begin{equation}\label{prop4.1:eq8}
\mathcal{R}\cdot  \det[\mathbf{Z}-\mathbf{A}^{\rm T}\mathbf{A}]=\det[\mathbf{Z}]\cdot \det[\text{\rm diag}({z_1}^{-1},\ldots,{z_d}^{-1})-\mathbf{A}^{\rm T}\mathbf{A}]=(-1)^d\cdot \det[\mathbf{A}^{\rm T}\mathbf{A}]\cdot \det[\mathbf{Z}-\widehat{\mathbf{A}}].
\end{equation}
Then combining \eqref{prop4.1:eq11}, \eqref{prop4.1:eq7} and \eqref{prop4.1:eq8}, we obtain
\begin{equation*}
\begin{aligned}
\det[\mathbf{A}_S^{\rm T}\mathbf{A}_S]\cdot p_S(x)&=\bigg(\prod\limits_{j\in S}^{}(z_j^2\partial_{z_j}-z_j)\bigg)  \det[\mathbf{Z}-\mathbf{A}^{\rm T}\mathbf{A}]\ \bigg|_{z_j=x,\forall j}\\
&=(-1)^{\vert S\vert+d}\cdot \det[\mathbf{A}^{\rm T}\mathbf{A}]\cdot   \mathcal{R}  \cdot \bigg( \prod\limits_{j\in S}^{} \partial_{z_j}\bigg) \det[\mathbf{Z}-\widehat{\mathbf{A}}]\ \bigg|_{z_j=x,\forall j}\\
&\overset{(a)}=(-1)^{\vert S\vert+d}\cdot \det[\mathbf{A}^{\rm T}\mathbf{A}]\cdot   \mathcal{R} \cdot  \det[(\mathbf{Z}-\widehat{\mathbf{A}})(S^C)]\ \bigg|_{z_j=x,\forall j}\\
&=(-1)^{\vert S\vert+d}\cdot \det[\mathbf{A}^{\rm T}\mathbf{A}]\cdot \mathcal{R}_{x,d}^+\ \det[x\cdot \mathbf{I}_{d-|S|}-\widehat{\mathbf{A}}(S^C)],	
\end{aligned}
\end{equation*}
where (a) follows from \eqref{section4.1:eq2}. In viewing of $\det[\mathbf{A}_S^{\rm T}\mathbf{A}_S]>0$, we arrive at the conclusion.
	
\end{proof}

\begin{remark}\label{remark4.1}
Consider the case where $\mathbf{A}\in\mathbb{R}^{n\times d}$ is a rank-$d$ matrix. By utilizing \eqref{prop4.1:eq3}, we can demonstrate the connection between Problem \ref{problem1.1} and Problem \ref{problem1.3}.
Indeed, we define $\mathbf{B}=(\mathbf{A}^{\dag})^{\rm T}$. Then, we have $\widehat{\mathbf{A}}=(\mathbf{A}^{\rm T}\mathbf{A})^{-1}=\mathbf{B}^{\rm T}\mathbf{B}$.
Since the polynomial $p_S(x)$ has exactly $d-|S|$ positive roots for each  $S\subset [d]$, equation \eqref{prop4.1:eq3} implies that
\begin{equation}\label{remark4.1:eq1}
\text{\rm maxroot}\ p_S(x)=	\frac{1}{\lambda_{\rm min}(\widehat{\mathbf{A}}(S^C))}=\frac{1}{\sigma_{\rm min}^2(\mathbf{B}_{S^C})}.
\end{equation}
 By combining it with \eqref{section3.1:eq2}, we observe that the task of minimizing the residual $\Vert \mathbf{A}-\mathbf{A}_S\mathbf{A}_S^{\dagger}\mathbf{A} \Vert_2^2$ is tantamount to maximizing $\sigma_{\rm min}(\mathbf{B}{(S^C)})$, where $S$ ranges over all $k$-subsets of $[d]$. Therefore, when $\text{\rm rank}(\mathbf{A})=d$, Problem \ref{problem1.1} can be transformed into Problem \ref{problem1.3} by choosing $\mathbf{B}=(\mathbf{A}^{\dag})^{\rm T}$ and $l=d-k$.
 It is worth mentioning that in the general case, when matrix $\mathbf{A}$ does not have full column rank, it may not be feasible to find a matrix $\widehat{\mathbf{A}}$ that satisfies equation  \eqref{prop4.1:eq3} for every subset $S$. As a result, 
 when $\mathrm{rank}(\mathbf{A})<d$, we cannot rely on the previous findings on restricted invertibility to obtain the bound \eqref{thm1.1:eq1}  stated in Theorem \ref{thm1.1}.

\end{remark}

\subsection{Real-rootedness, interlacing, and real stability}\label{section24}

Here we introduce the definition of common interlacing.

\begin{definition}\label{def2.1}
Let $f(x)=a_0\cdot \prod_{i=1}^{d-1}(x-a_i)$ and $p(x)=b_0\cdot \prod_{i=1}^{d}(x-b_i)$ be two real-rooted polynomials. We say that $f(x)$ interlaces $p(x)$ if
\begin{equation*}
b_1\leq a_1\leq b_2\leq a_2\leq \cdots \leq a_{d-1}\leq b_{d}.
\end{equation*}
We say that real-rooted polynomials $p_1(x),\ldots,p_m(x)$ have a common interlacing if there exists a polynomial $f(x)$ such that $f(x)$ interlaces $p_i(x)$ for each $i\in[m]$.
\end{definition}

By Definition \ref{def2.1}, to prove that the polynomials $p_1(x),\ldots,p_m(x)$ have a common interlacing, it is enough to show that any two of them have a common interlacing. By a result of Fell \cite[Theorem 2]{Fell}, this is equivalent to proving that the polynomial $\mu\cdot p_i(x)+(1-\mu)\cdot p_j(x)$ is real-rooted for any $i,j\in[m]$ and for any $\mu\in[0,1]$.

The following lemma provides a sufficient condition under which a collection of polynomials have a common interlacing.

\begin{lemma}{\rm \cite[Claim 2.9]{inter3}}\label{lemma2.5}
Assume that $\mathbf{A}\in\mathbb{R}^{d\times d}$ is a symmetric matrix and assume $\mathbf{b}_1,\ldots,\mathbf{b}_m$ are vectors in $\mathbb{R}^d$. Then the polynomials
\begin{equation*}
p_i(x)=\det[x\cdot\mathbf{I}_d-\mathbf{A}+\mathbf{b}_i\mathbf{b}_i^{\rm T}], \quad i=1,\ldots,m
\end{equation*}
have a common interlacing.

\end{lemma}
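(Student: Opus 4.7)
My plan is to exhibit an explicit degree $d-1$ polynomial $g(x)$, built from the eigenvalues of $\mathbf{A}$, that interlaces every $p_i(x)$ in the sense of Definition \ref{def2.1}. The key observation is that $p_i(x) = \det[x\mathbf{I}_d - (\mathbf{A}-\mathbf{b}_i\mathbf{b}_i^{\rm T})]$ is the characteristic polynomial of the symmetric matrix $\mathbf{A}_i := \mathbf{A} - \mathbf{b}_i\mathbf{b}_i^{\rm T}$, and $\mathbf{A} = \mathbf{A}_i + \mathbf{b}_i\mathbf{b}_i^{\rm T}$ is a rank-one positive semi-definite perturbation of $\mathbf{A}_i$. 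Cauchy's interlacing theorem for rank-one PSD updates is precisely the tool that controls how the spectra of $\mathbf{A}_i$ and $\mathbf{A}$ relate, and it will do so uniformly in $i$.

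The first step is to order the eigenvalues of $\mathbf{A}$ as $\alpha_1 \leq \cdots \leq \alpha_d$ and those of $\mathbf{A}_i$ as $\alpha_1^{(i)} \leq \cdots \leq \alpha_d^{(i)}$, and apply Cauchy interlacing to $\mathbf{A} = \mathbf{A}_i + \mathbf{b}_i\mathbf{b}_i^{\rm T}$ to obtain
\begin{equation*}
\alpha_k^{(i)} \leq \alpha_k \leq \alpha_{k+1}^{(i)} \quad \text{for } k = 1,\ldots,d-1,
\end{equation*}
together with $\alpha_d^{(i)} \leq \alpha_d$. The crucial point is that the $\alpha_k$'s do not depend on $i$. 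The second step is to define
\begin{equation*}
g(x) := \prod_{k=1}^{d-1}(x-\alpha_k),
\end{equation*}
whose $d-1$ roots (in increasing order) are the bottom $d-1$ eigenvalues of $\mathbf{A}$. Concatenating the Cauchy interlacing inequalities yields
\begin{equation*}
\alpha_1^{(i)} \leq \alpha_1 \leq \alpha_2^{(i)} \leq \alpha_2 \leq \cdots \leq \alpha_{d-1}^{(i)} \leq \alpha_{d-1} \leq \alpha_d^{(i)},
\end{equation*}
which matches exactly the condition in Definition \ref{def2.1} for $g$ to interlace $p_i$. Since $g$ does not depend on $i$, it is a common interlacer of $p_1(x),\ldots,p_m(x)$.

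I do not expect a serious obstacle here: the argument is essentially just a uniform application of Cauchy's interlacing theorem. The one thing to double check is the degenerate cases (repeated eigenvalues of $\mathbf{A}$, or some $\mathbf{b}_i$ equal to $\mathbf{0}$), where some inequalities in the interlacing chain collapse to equalities; this is permitted by Definition \ref{def2.1}, so the conclusion is unaffected. As an alternative one could invoke Fell's theorem: using the matrix determinant lemma, write $p_i(x) = \det[x\mathbf{I}_d-\mathbf{A}] \cdot \bigl(1+\sum_k c_{i,k}/(x-\alpha_k)\bigr)$ with $c_{i,k} \geq 0$, observe that any convex combination $\mu p_i + (1-\mu)p_j$ has the same Cauchy-function form with nonnegative weights, and deduce real-rootedness by an intermediate-value argument on each interval $(\alpha_k,\alpha_{k+1})$; but the explicit construction of $g$ above is considerably cleaner.
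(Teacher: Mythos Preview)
Your argument is correct. The paper does not supply its own proof of Lemma~\ref{lemma2.5}; it simply quotes the statement from \cite[Claim 2.9]{inter3}. Your approach---observing that $\mathbf{A}=\mathbf{A}_i+\mathbf{b}_i\mathbf{b}_i^{\rm T}$ is a rank-one positive semidefinite perturbation of $\mathbf{A}_i$, invoking the eigenvalue interlacing theorem for such perturbations, and taking $g(x)=\prod_{k=1}^{d-1}(x-\alpha_k)$ as the explicit common interlacer---is the standard and cleanest proof of this fact. The degenerate cases you flag (repeated eigenvalues, $\mathbf{b}_i=\mathbf{0}$) are indeed harmless under Definition~\ref{def2.1}, which allows equalities. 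The alternative route through Fell's theorem that you sketch is closer in spirit to how this family of results is often presented in the interlacing-families literature, but your direct construction of $g$ is both shorter and more transparent.
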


Marcus-Spielman-Srivastava proved that when a set of polynomials have a common interlacing, one of these polynomials has its largest root upper bounded by the largest root of the sum of all the polynomials in the set:

\begin{lemma}\label{lemma2.6}{\rm\cite[Lemma 4.2]{inter1}}
Assume that $p_1(x),\ldots,p_m(x)$ are real-rooted polynomials with the same degree and positive leading coefficients. If $p_1(x),\ldots,p_m(x)$ have a common interlacing, then there exists an integer $i\in[m]$ such that
\begin{equation*}
\text{\rm maxroot}\ p_i(x)\leq \text{\rm maxroot}\ \sum\limits_{i=1}^{m}p_i(x).
\end{equation*}
\end{lemma}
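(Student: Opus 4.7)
The plan is a short sign-chasing argument built on the common interlacer. Let $f(x)$ be a polynomial of degree $d-1$ with roots $a_1\leq\cdots\leq a_{d-1}$ that interlaces each $p_i(x)$, and write the roots of $p_i(x)$ as $b_1^{(i)}\leq\cdots\leq b_d^{(i)}$; by Definition~\ref{def2.1}, $b_{d-1}^{(i)}\leq a_{d-1}\leq b_d^{(i)}$ for every $i\in[m]$. Since each $p_i$ has positive leading coefficient and is real-rooted, $p_i$ is positive on $(b_d^{(i)},\infty)$ and non-positive on the preceding interval $[b_{d-1}^{(i)},b_d^{(i)}]$. Combined with the interlacing bound this gives the key structural fact:
\begin{equation*}
p_i(x)\leq 0 \quad\text{for every } x\in[a_{d-1},b_d^{(i)}] \text{ and every } i\in[m].
\end{equation*}

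Next, let $P(x):=\sum_{i=1}^{m}p_i(x)$, which still has positive leading coefficient. Evaluating at $a_{d-1}$ gives $P(a_{d-1})\leq 0$ by the structural fact above, while $P(x)\to+\infty$ as $x\to+\infty$, so $P$ has at least one real root in $[a_{d-1},\infty)$. Set $\lambda:=\text{\rm maxroot}\,P(x)$, the largest real root of $P$. Then $\lambda\geq a_{d-1}$, $P(\lambda)=0$, and $P(x)>0$ for every $x>\lambda$. Suppose for contradiction that every $p_i$ satisfies $b_d^{(i)}>\lambda$. Choose $i^{\star}\in\arg\min_{i\in[m]}b_d^{(i)}$ and set $\mu:=b_d^{(i^{\star})}>\lambda$; then $p_{i^{\star}}(\mu)=0$, while for every $j\in[m]$ we have $\mu\leq b_d^{(j)}$ and $\mu>\lambda\geq a_{d-1}$, so $\mu\in[a_{d-1},b_d^{(j)}]$ and hence $p_j(\mu)\leq 0$ by the structural fact. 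Summing over $j$ gives $P(\mu)\leq 0$, which contradicts $P(x)>0$ on $(\lambda,\infty)$. Therefore at least one index $i$ satisfies $b_d^{(i)}\leq\lambda$, i.e.\ $\text{\rm maxroot}\,p_i(x)\leq\text{\rm maxroot}\,P(x)$, as claimed.

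The whole argument is short, and the only points requiring a little care are the coincident cases (for instance $\lambda=a_{d-1}$, or several $p_i$ sharing the same maximal root); these are absorbed by the weak inequalities throughout, so no separate treatment is needed. The main conceptual obstacle, in my view, is recognizing why common interlacing is precisely the hypothesis that one needs: it forces every $p_i$ to be simultaneously non-positive at the single point $a_{d-1}$, and more generally on the common interval $[a_{d-1},\min_i b_d^{(i)}]$, so that the sum $P$ is obliged to reach zero at or below the smallest of the individual largest roots $b_d^{(i)}$ rather than after the largest of them. Without a common interlacer, the intervals $[b_{d-1}^{(i)},b_d^{(i)}]$ on which $p_i\leq 0$ could be pairwise disjoint and no such synchronization would be available.
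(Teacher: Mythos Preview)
Your argument is correct and is essentially the standard proof (the paper does not supply one; it cites \cite[Lemma~4.2]{inter1}, whose proof is the same sign-chasing against a common interlacer). The only cosmetic difference is packaging: Marcus--Spielman--Srivastava evaluate at $\lambda=\text{maxroot}\,P$ and pick an $i$ with $p_i(\lambda)\geq 0$, whereas you argue by contradiction at $\mu=\min_i b_d^{(i)}$; your version has the mild advantage that the coincident-root edge cases are automatically absorbed, as you note.
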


\section{Proof of Theorem \ref{thm1.1} and A Deterministic Polynomial-Time Algorithm}\label{section3}

\subsection{The method of interlacing polynomials: proof of Theorem \ref{thm1.3}}

Let $\mathbf{A}=[\mathbf{a}_1,\ldots,\mathbf{a}_d]$ be a matrix in $\mathbb{R}^{n\times d}$. For each subset $S\subset[d]$, we denote $\mathbf{Q}_S$ by the projection matrix of the orthogonal complement of $\mathrm{Ran}(\mathbf{A}_S)\equiv \text{\rm span}\{\mathbf{a}_i,i\in S\}$, i.e., $\mathbf{Q}_S:=\mathbf{I}_n-\mathbf{A}_S\mathbf{A}_S^{\dagger}$. Since $\mathbf{Q}_S$ is a projection matrix, we have $\mathbf{Q}_S=\mathbf{Q}_S^{\rm T}=\mathbf{Q}_S^2$.
For each subset $S\subset[d]$, we define the degree $d$ polynomial
\begin{equation}\label{section3.1:eq1}
p_S(x):=\det[x\cdot\mathbf{I}_d-(\mathbf{A}-\mathbf{A}_{S}\mathbf{A}_{S}^{\dagger}\mathbf{A})^{\rm T}(\mathbf{A}-\mathbf{A}_{S}\mathbf{A}_{S}^{\dagger}\mathbf{A})]=\det[x\cdot\mathbf{I}_d-\mathbf{A}^{\rm T}\mathbf{Q}_{S}\mathbf{A}].
\end{equation}
Since $\mathbf{A}^{\rm T}\mathbf{Q}_{S}\mathbf{A}$ is a symmetric positive semidefinite matrix, the polynomial $p_S(x)$ is in $\mathbb{P}^+(d)$ for each $S\subset[d]$. A simple observation is that
\begin{equation*}
\text{\rm maxroot}\ p_S(x)=\Vert \mathbf{A}-\mathbf{A}_{S}\mathbf{A}_{S}^{\dagger}\mathbf{A}\Vert_2^2.
\end{equation*}
Noting that $\mathbf{Q}_S\mathbf{a}_j=\mathbf{0}_{n}$ for each $\mathbf{a}_j\in \text{\rm span}\{\mathbf{a}_i : i\in S\}$,  we have $\text{\rm rank}(\mathbf{Q}_S\mathbf{A})=\text{\rm rank}(\mathbf{A})-\text{\rm rank}(\mathbf{A}_S)$, which implies that the polynomial $p_S(x)$ has exactly $\text{\rm rank}(\mathbf{A})-\text{\rm rank}(\mathbf{A}_S)$ positive roots. In particular, if $S=\emptyset$, then $p_S(x)=\det[x\cdot\mathbf{I}_d-\mathbf{A}^{\rm T}\mathbf{A}]$. If  $\text{\rm rank}(\mathbf{A}_S)=\text{\rm rank}(\mathbf{A})$, where $S$ is a subset of $[d]$, then $p_S(x)=x^d$.

To prove Theorem \ref{thm1.3}, we introduce some lemmas. The first lemma shows that the polynomial $(x^2\cdot\partial_x-d\cdot x)\ p_{S}(x)$ is a weighted sum of the polynomials $p_{S\cup\{i\}}(x)$ for all $i\notin S$ with nonnegative coefficients.

\begin{lemma}\label{lemma3.1}
Let $\mathbf{A}=[\mathbf{a}_1,\ldots,\mathbf{a}_d]$ be a matrix in $\mathbb{R}^{n\times d}$. For every subset $S\subset[d]$, we have
\begin{equation*}
\sum\limits_{i\notin S}\Vert \mathbf{Q}_S\mathbf{a}_i\Vert^2\cdot p_{S\cup\{i\}}(x)= (x^2\cdot\partial_x-d\cdot x)\ p_S(x),
\end{equation*}
where $\mathbf{Q}_S=\mathbf{I}_n-\mathbf{A}_S\mathbf{A}_S^\dagger$ and $p_S(x)=\det [x\cdot \mathbf{I}_d-\mathbf{A}^{\rm T}\mathbf{Q}_S\mathbf{A}]$.
\end{lemma}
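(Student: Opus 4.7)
The plan is to reduce the sum on the left to a rank-one update computation and then finish by a scalar identity in the eigenvalues of $\mathbf{N}:=\mathbf{A}^{\rm T}\mathbf{Q}_S\mathbf{A}$. First I would apply Lemma \ref{lemma2.1} to obtain the rank-one relation
\[ \mathbf{A}^{\rm T}\mathbf{Q}_{S\cup\{i\}}\mathbf{A}=\mathbf{N}-\frac{\mathbf{v}_i\mathbf{v}_i^{\rm T}}{\Vert\mathbf{Q}_S\mathbf{a}_i\Vert^2},\qquad \mathbf{v}_i:=\mathbf{A}^{\rm T}\mathbf{Q}_S\mathbf{a}_i, \]
valid whenever $\mathbf{a}_i\notin \mathrm{Ran}(\mathbf{A}_S)$. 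The matrix determinant lemma $\det[\mathbf{M}+c\,\mathbf{u}\mathbf{u}^{\rm T}]=\det[\mathbf{M}]+c\,\mathbf{u}^{\rm T}\mathrm{adj}(\mathbf{M})\mathbf{u}$, applied with $\mathbf{M}=x\mathbf{I}_d-\mathbf{N}$ and $\mathbf{u}=\mathbf{v}_i$, after clearing the denominator yields
\[ \Vert\mathbf{Q}_S\mathbf{a}_i\Vert^2\, p_{S\cup\{i\}}(x)=\Vert\mathbf{Q}_S\mathbf{a}_i\Vert^2\, p_S(x)+\mathbf{v}_i^{\rm T}\,\mathrm{adj}(x\mathbf{I}_d-\mathbf{N})\,\mathbf{v}_i. \]
When $\mathbf{a}_i\in\mathrm{Ran}(\mathbf{A}_S)$ both sides vanish, since $\mathbf{Q}_S\mathbf{a}_i=\mathbf{0}$, so the identity holds uniformly for every $i\notin S$.

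Next I would sum the relation over $i\notin S$. Because $\mathbf{Q}_S\mathbf{a}_i=\mathbf{0}$ also for $i\in S$, both sums extend freely to all of $[d]$, and each collapses to a trace: $\sum_{i\notin S}\Vert\mathbf{Q}_S\mathbf{a}_i\Vert^2=\sum_{i=1}^d \mathbf{a}_i^{\rm T}\mathbf{Q}_S\mathbf{a}_i=\mathrm{tr}[\mathbf{N}]$, and
\[ \sum_{i\notin S}\mathbf{v}_i\mathbf{v}_i^{\rm T}=\mathbf{A}^{\rm T}\mathbf{Q}_S\Big(\sum_{i=1}^d \mathbf{a}_i\mathbf{a}_i^{\rm T}\Big)\mathbf{Q}_S\mathbf{A}=\mathbf{A}^{\rm T}\mathbf{Q}_S\mathbf{A}\mathbf{A}^{\rm T}\mathbf{Q}_S\mathbf{A}=\mathbf{N}^2. \]
Cycling the trace then gives $\sum_{i\notin S}\mathbf{v}_i^{\rm T}\mathrm{adj}(x\mathbf{I}_d-\mathbf{N})\mathbf{v}_i=\mathrm{tr}[\mathrm{adj}(x\mathbf{I}_d-\mathbf{N})\,\mathbf{N}^2]$.

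To close the argument, let $\mu_1,\dots,\mu_d$ be the eigenvalues of the symmetric matrix $\mathbf{N}$, so $p_S(x)=\prod_j(x-\mu_j)$. At a generic $x$ where $x\mathbf{I}_d-\mathbf{N}$ is invertible, the identity $\mathrm{adj}(x\mathbf{I}_d-\mathbf{N})=p_S(x)(x\mathbf{I}_d-\mathbf{N})^{-1}$ yields $\mathrm{tr}[\mathrm{adj}(x\mathbf{I}_d-\mathbf{N})\mathbf{N}^2]=p_S(x)\sum_j \mu_j^2/(x-\mu_j)$. Combining with $\mathrm{tr}[\mathbf{N}]\,p_S(x)=p_S(x)\sum_j\mu_j$ and the partial-fraction simplification $\mu_j+\mu_j^2/(x-\mu_j)=\mu_j x/(x-\mu_j)$, the right-hand side of the summed identity becomes
\[ p_S(x)\cdot x\sum_{j=1}^d\frac{\mu_j}{x-\mu_j}=p_S(x)\cdot x\Big(x\,\frac{p_S'(x)}{p_S(x)}-d\Big)=x^2\,p_S'(x)-d\,x\,p_S(x), \]
which is exactly $(x^2\cdot\partial_x-d\cdot x)\,p_S(x)$; the polynomial identity then extends to all $x$ by continuation.

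The only real obstacle I anticipate is the bookkeeping around the degenerate indices $i$ with $\mathbf{Q}_S\mathbf{a}_i=\mathbf{0}$: once one verifies that the rank-one identity for a single index holds in that case as well (both sides trivially vanish), the remainder of the proof reduces to the matrix determinant lemma and the elementary eigenvalue manipulation above.
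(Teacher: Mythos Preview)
Your argument is correct, and it follows a genuinely different route from the paper's proof. Both proofs begin the same way: use Lemma~\ref{lemma2.1} to write $\mathbf{A}^{\rm T}\mathbf{Q}_{S\cup\{i\}}\mathbf{A}$ as a rank-one perturbation of $\mathbf{N}=\mathbf{A}^{\rm T}\mathbf{Q}_S\mathbf{A}$, and observe that the degenerate indices with $\mathbf{Q}_S\mathbf{a}_i=\mathbf{0}$ contribute nothing. From there the paper establishes a general identity (for any $\mathbf{B}$) by passing to the multi-affine polynomial $\det[\mathbf{Z}-\mathbf{B}^{\rm T}\mathbf{B}]$, applying the Leibniz rule to convert $(x^2\partial_x-dx)$ on the diagonal specialization into $\sum_i(z_i^2\partial_{z_i}-z_i)$, and invoking Proposition~\ref{prop4.1}(i) term by term. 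You instead stay in the single-variable world: the matrix determinant lemma linearizes each summand, the sums $\sum_i\mathbf{v}_i\mathbf{v}_i^{\rm T}=\mathbf{N}^2$ and $\sum_i\|\mathbf{Q}_S\mathbf{a}_i\|^2=\mathrm{tr}[\mathbf{N}]$ collapse everything to traces, and the closing eigenvalue/partial-fraction step $\sum_j \mu_j/(x-\mu_j)=x\,p_S'/p_S-d$ is exactly the logarithmic-derivative identity. Your route is more elementary and self-contained (no multi-affine machinery needed), while the paper's route is consistent with the multivariate framework used elsewhere in Section~\ref{section4.1}; either proof stands on its own.
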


\begin{proof}

Since $\mathbf{Q}_S$ is the projection matrix of  $\text{\rm span}\{\mathbf{a}_j,j\in S\}^\bot$,
it is enough to prove that
\begin{equation}\label{lemma3.1:eq1}
\sum\limits_{i:\Vert \mathbf{Q}_S\mathbf{a}_i\Vert> 0}\Vert \mathbf{Q}_S\mathbf{a}_i\Vert^2\cdot p_{S\cup\{i\}}(x)= (x^2\cdot\partial_x-d\cdot x)\ p_S(x).
\end{equation}
When $\Vert \mathbf{Q}_S\mathbf{a}_i\Vert>0$, i,e., $\mathbf{a}_i\notin \text{\rm span}\{\mathbf{a}_j,j\in S\}$, Lemma \ref{lemma2.1} implies
\begin{equation*}
\mathbf{Q}_{S\cup\{i\}}=\mathbf{I}_n-\mathbf{A}_{S\cup\{i\}}\mathbf{A}_{S\cup\{i\}}^{\dagger}=\mathbf{I}_n-\left(\mathbf{A}_{S}\mathbf{A}_{S}^{\dagger}+\frac{\mathbf{Q}_S\mathbf{a}_i \mathbf{a}_i^{\rm T} \mathbf{Q}_S }{ \Vert \mathbf{Q}_S\mathbf{a}_i\Vert^2}\right)=\mathbf{Q}_{S}-\frac{\mathbf{Q}_S\mathbf{a}_i \mathbf{a}_i^{\rm T} \mathbf{Q}_S }{ \Vert \mathbf{Q}_S\mathbf{a}_i\Vert^2}.
\end{equation*}
Therefore, we can write $p_{S\cup\{i\}}(x)$ as
\begin{equation}\label{lemma3.1:eq2}
p_{S\cup\{i\}}(x)=\det[x\cdot\mathbf{I}_d-\mathbf{A}^{\rm T}\mathbf{Q}_{S\cup\{i\}}\mathbf{A}]=\det[x\cdot\mathbf{I}_d-\mathbf{A}^{\rm T}\mathbf{Q}_{S}\mathbf{A}+\mathbf{A}^{\rm T}\frac{\mathbf{Q}_S\mathbf{a}_i \mathbf{a}_i^{\rm T} \mathbf{Q}_S }{ \Vert \mathbf{Q}_S\mathbf{a}_i\Vert^2}\mathbf{A}].
\end{equation}
We claim that for any matrix $\mathbf{B}=[\mathbf{b}_1,\ldots,\mathbf{b}_d]\in\mathbb{R}^{n\times d}$,
\begin{equation}\label{lemma2.4:eq1}
(x^2\cdot\partial_x-d\cdot x)\cdot\det[x\cdot \mathbf{I}_d-\mathbf{B}^{\rm T}\mathbf{B}]=\sum\limits_{i:\Vert \mathbf{b}_i\Vert>0}\Vert \mathbf{b}_i\Vert^2 \cdot\det[x\cdot \mathbf{I}_d-\mathbf{B}^{\rm T}(\mathbf{I}_n-\frac{\mathbf{b}_i\mathbf{b}_i^{\rm T}}{\Vert \mathbf{b}_i\Vert^2 })\mathbf{B}].
\end{equation}
Since $\mathbf{Q}_S=\mathbf{Q}_S^2=\mathbf{Q}_S^{\rm T}$, we have $(\mathbf{Q}_S\mathbf{A})^{\rm T}\mathbf{Q}_S\mathbf{A}=\mathbf{A}^{\rm T}\mathbf{Q}_S\mathbf{A}$.
Taking $\mathbf{B}=\mathbf{Q}_S\mathbf{A}=[\mathbf{Q}_S\mathbf{a}_1,\ldots,\mathbf{Q}_S\mathbf{a}_d]$ in \eqref{lemma2.4:eq1}, we obtain that
\begin{equation*}
\begin{aligned}
(x^2\cdot\partial_x-d\cdot x)\cdot\det[x\cdot \mathbf{I}_d-\mathbf{A}^{\rm T}\mathbf{Q}_S\mathbf{A} ]
&=\sum\limits_{i:\Vert  \mathbf{Q}_S\mathbf{a}_i\Vert>0}\Vert \mathbf{Q}_S\mathbf{a}_i\Vert^2 \cdot\det[x\cdot \mathbf{I}_d-\mathbf{A}^{\rm T}\mathbf{Q}_S\mathbf{A}+\mathbf{A}^{\rm T}\frac{\mathbf{Q}_S\mathbf{a}_i \mathbf{a}_i^{\rm T} \mathbf{Q}_S }{ \Vert \mathbf{Q}_S\mathbf{a}_i\Vert^2}\mathbf{A}]\\
&=\sum\limits_{i:\Vert  \mathbf{Q}_S\mathbf{a}_i\Vert>0}\Vert \mathbf{Q}_S\mathbf{a}_i\Vert^2 \cdot p_{S\cup\{i\}}(x).
\end{aligned}
\end{equation*}
Since $p_S(x)=\det[x\cdot \mathbf{I}_d-\mathbf{A}^{\rm T}\mathbf{Q}_S\mathbf{A}]$, we arrive at  \eqref{lemma3.1:eq1}.

It remains to prove \eqref{lemma2.4:eq1}. By the Leibniz formula, for each multi-affine polynomial $f(z_{1},\ldots,z_{d})$, we have the following algebraic identity:
\begin{equation*}
(x^2\cdot\partial_{x}-d\cdot x)\cdot f(x,\ldots,x)=\sum\limits_{i=1}^{d}(z_{i}^2\cdot\partial_{z_{i}}-z_{i})\cdot  f(z_{1},\ldots,z_{d})\ \bigg|_{z_{i}=x,\forall i}.
\end{equation*}
Substituting $f(z_1,\ldots,z_d)=\det[\mathbf{Z}-\mathbf{B}^{\rm T}\mathbf{B}]$ with $\mathbf{Z}=\text{\rm diag}(z_1,\ldots,z_d)$ into the above identity, we obtain that
\begin{equation*}
\begin{aligned}
(x^2\cdot\partial_x-d\cdot x)\cdot\det[x\cdot \mathbf{I}_d-\mathbf{B}^{\rm T}\mathbf{B}]
&=\sum\limits_{i=1}^{d}(z_{i}^2\cdot\partial_{z_{i}}-z_{i})\cdot  \det[\mathbf{Z}-\mathbf{B}^{\rm T}\mathbf{B}]\ \bigg|_{z_{i}=x,\forall i}\\
&\overset{(a)}=\sum\limits_{i=1}^{d} \Vert \mathbf{b}_i\Vert^2 \cdot\det[\mathbf{Z}-\mathbf{B}^{\rm T}(\mathbf{I}_n-\mathbf{b}_i\mathbf{b}_i^{\dagger})\mathbf{B}] \bigg|_{z_{i}=x,\forall i}\\
&\overset{(b)}=\sum\limits_{i:\Vert \mathbf{b}_i\Vert>0}\Vert \mathbf{b}_i\Vert^2 \cdot\det[x\cdot \mathbf{I}_d-\mathbf{B}^{\rm T}(\mathbf{I}_n-\frac{\mathbf{b}_i\mathbf{b}_i^{\rm T}}{\Vert \mathbf{b}_i\Vert^2 })\mathbf{B}],
\end{aligned}
\end{equation*}
where (a) follows from Proposition \ref{prop4.1} (i) and (b) follows from the fact that $\mathbf{b}^{\dagger}=\frac{\mathbf{b}^{\rm T}}{\Vert \mathbf{b}\Vert^2}$ for any nonzero vector $\mathbf{b}$. Hence we arrive at \eqref{lemma2.4:eq1}.
\end{proof}

Next, we will employ the method of interlacing polynomials to prove the following lemma.
\begin{lemma}\label{lemma3.2}

Let $\mathbf{A}=[\mathbf{a}_1,\ldots,\mathbf{a}_d]$ be a matrix in $\mathbb{R}^{n\times d}$, and let $S\subset [d]$ be such that the set $R:=\{i: \Vert \mathbf{Q}_S\mathbf{a}_i \Vert>0\}$ is nonempty. For any nonnegative integer $l\leq\text{\rm rank}(\mathbf{A})-\text{\rm rank}(\mathbf{A}_S)-1$, there exists an integer $j\in R$ such that
\begin{equation*}
\text{\rm maxroot}\ (x^2\cdot\partial_x-d\cdot x)^{l}\cdot  p_{S\cup \{j\}}(x) \leq \text{\rm maxroot}\ (x^2\cdot\partial_x-d\cdot x)^{l+1}\cdot p_S(x).
\end{equation*}

\end{lemma}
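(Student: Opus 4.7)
The plan is to apply $(x^2\cdot\partial_x-d\cdot x)^l$ to the identity of Lemma~\ref{lemma3.1} and combine it with the method of interlacing polynomials. By the linearity of the operator,
\begin{equation*}
(x^2\cdot\partial_x-d\cdot x)^{l+1}\,p_S(x)\;=\;\sum_{i\in R}\Vert\mathbf{Q}_S\mathbf{a}_i\Vert^2\cdot (x^2\cdot\partial_x-d\cdot x)^l\,p_{S\cup\{i\}}(x).
\end{equation*}
For each $i\in R$, since $\mathbf{a}_i\notin\mathrm{span}\{\mathbf{a}_k:k\in S\}$, the matrix $\mathbf{A}^{\rm T}\mathbf{Q}_{S\cup\{i\}}\mathbf{A}$ has rank $t:=\mathrm{rank}(\mathbf{A})-\mathrm{rank}(\mathbf{A}_S)-1$, so $p_{S\cup\{i\}}\in\mathbb{P}^+(d)$ with exactly $t$ positive roots. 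Since $l\leq t$ by hypothesis, Proposition~\ref{prop21}\,(iii) places each $(x^2\cdot\partial_x-d\cdot x)^l\,p_{S\cup\{i\}}$ in $\mathbb{P}^+(d)$. Once a common interlacing of these polynomials is established, Lemma~\ref{lemma2.6}, applied to the positively scaled family $\{\Vert\mathbf{Q}_S\mathbf{a}_i\Vert^2\cdot(x^2\cdot\partial_x-d\cdot x)^l\,p_{S\cup\{i\}}\}_{i\in R}$, will deliver the desired $j\in R$.

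The core of the argument is the common interlacing, and I would verify it via the pairwise Fell criterion recalled after Definition~\ref{def2.1}. From the rank-one update derived inside the proof of Lemma~\ref{lemma3.1},
\[
p_{S\cup\{i\}}(x)=\det\bigl[x\cdot\mathbf{I}_d-\mathbf{A}^{\rm T}\mathbf{Q}_S\mathbf{A}+\mathbf{b}_i\mathbf{b}_i^{\rm T}\bigr],\qquad \mathbf{b}_i:=\frac{\mathbf{A}^{\rm T}\mathbf{Q}_S\mathbf{a}_i}{\Vert\mathbf{Q}_S\mathbf{a}_i\Vert},
\]
so Lemma~\ref{lemma2.5} furnishes a common interlacing of $\{p_{S\cup\{i\}}\}_{i\in R}$ and, in particular, any convex combination $r:=\mu\,p_{S\cup\{i\}}+(1-\mu)\,p_{S\cup\{j\}}$ with $\mu\in[0,1]$ is real-rooted. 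Because both $p_{S\cup\{i\}}$ and $p_{S\cup\{j\}}$ share the factor $x^{d-t}$, we may write $r=x^{d-t}\tilde r$ with $\tilde r$ monic of degree $t$. A sign check ($\tilde r(x)\cdot(-1)^t>0$ for every $x\leq 0$, since the two constituents of $\tilde r$ are monic degree-$t$ polynomials with only positive roots) forces all roots of $\tilde r$ to be strictly positive, so $r\in\mathbb{P}^+(d)$ with exactly $t$ positive roots. Proposition~\ref{prop21}\,(iii), together with $l\leq t$, then guarantees that $(x^2\cdot\partial_x-d\cdot x)^l\,r$ is real-rooted. By linearity this polynomial equals $\mu(x^2\cdot\partial_x-d\cdot x)^l\,p_{S\cup\{i\}}+(1-\mu)(x^2\cdot\partial_x-d\cdot x)^l\,p_{S\cup\{j\}}$, and Fell's criterion supplies the common interlacing we need.

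With common interlacing in hand, Lemma~\ref{lemma2.6} produces $j\in R$ with
\[
\text{maxroot}\;\Vert\mathbf{Q}_S\mathbf{a}_j\Vert^2\cdot(x^2\cdot\partial_x-d\cdot x)^l\,p_{S\cup\{j\}}(x)\;\leq\;\text{maxroot}\;(x^2\cdot\partial_x-d\cdot x)^{l+1}\,p_S(x),
\]
and since multiplication by the positive scalar $\Vert\mathbf{Q}_S\mathbf{a}_j\Vert^2$ does not change the largest root, the lemma follows. The principal obstacle is propagating the common interlacing across $l$ applications of $(x^2\cdot\partial_x-d\cdot x)$; the sign and degree bookkeeping that keeps each convex combination inside $\mathbb{P}^+(d)$ with exactly $t$ positive roots is precisely what lets Proposition~\ref{prop21}\,(iii) be invoked to preserve real-rootedness. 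The boundary case $l=t$ is automatic: Proposition~\ref{prop21}\,(ii) makes every $(x^2\cdot\partial_x-d\cdot x)^l\,p_{S\cup\{i\}}$ a scalar multiple of $x^d$, so both sides of the desired inequality have largest root zero.
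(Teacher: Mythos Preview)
Your proof is correct and follows the same overall architecture as the paper: Lemma~\ref{lemma3.1} for the weighted sum, Lemma~\ref{lemma2.5} for the base common interlacing of $\{p_{S\cup\{i\}}\}_{i\in R}$, propagation of the interlacing through $(x^2\partial_x-d x)^l$, and Lemma~\ref{lemma2.6} to extract $j$.

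The one place where you and the paper diverge is in how the interlacing is pushed through the operator. The paper argues directly that $\partial_x$ and $\mathcal{R}_{x,d}^+$ each preserve interlacing (for degree-$d$ real-rooted polynomials with the same number of nonzero roots), so the composite $(-1)^l\mathcal{R}_{x,d}^+\partial_x^l\mathcal{R}_{x,d}^+$ does as well. You instead go through Fell's criterion: show every convex combination $r=\mu\,p_{S\cup\{i\}}+(1-\mu)\,p_{S\cup\{j\}}$ is in $\mathbb{P}^+(d)$ with exactly $t$ positive roots (your sign/degree bookkeeping), invoke Proposition~\ref{prop21}(iii) to get real-rootedness of $(x^2\partial_x-dx)^l r$, and use linearity plus Fell to conclude. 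This buys you a fully self-contained argument using results already proved in the paper, whereas the paper's assertion that $\mathcal{R}_{x,d}^+$ preserves interlacing is left to the reader. Both routes are equally short; yours is arguably cleaner in its dependencies.
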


\begin{proof}
 We set
\begin{equation}\label{eq:gS}
g_{S\cup\{i\}}^{(l)}(x):=(x^2\cdot\partial_x-d\cdot x)^{l}\cdot \bigg( \Vert \mathbf{Q}_S\mathbf{a}_i\Vert^2\cdot p_{S\cup\{i\}}(x)\bigg)=(-1)^l\cdot \mathcal{R}_{x,d}^+\cdot\partial_x^l\cdot \mathcal{R}_{x,d}^+\cdot  \bigg( \Vert \mathbf{Q}_S\mathbf{a}_i\Vert^2\cdot p_{S\cup\{i\}}(x)\bigg),
\end{equation}
where $i\in R$ and  $l\in [0,\text{\rm rank}(\mathbf{A})-\text{\rm rank}(\mathbf{A}_S)-1]\cap {\mathbb Z}$.
A simple observation is that $\mathrm{rank}(\mathbf{Q}_{S\cup\{i\}}\mathbf{A})=\text{\rm rank}(\mathbf{A})-\text{\rm rank}(\mathbf{A}_S)-1$ for $i\in R$,  so the polynomial $p_{S\cup\{i\}}(x)$ has exactly $\text{\rm rank}(\mathbf{A})-\text{\rm rank}(\mathbf{A}_S)-1$ nonzero roots. According to (iii) in Proposition \ref{prop21}, the polynomial $g_{S\cup\{i\}}^{(l)}(x)$ is in $\mathbb{P}^+(d)$ for each $l\leq\text{\rm rank}(\mathbf{A})-\text{\rm rank}(\mathbf{A}_S)-1$ and for each $i\in R$.

For any nonnegative integer $l\leq\text{\rm rank}(\mathbf{A})-\text{\rm rank}(\mathbf{A}_S)-1$, we claim that the polynomials $\{g_{S\cup\{i\}}^{(l)}(x) :  i\in R\}$ have a common interlacing. Then, by Lemma \ref{lemma2.6}, there exists an integer $j\in R$ such that
\begin{equation}\label{lemma3.2:eq1}
\text{\rm maxroot}\ g^{(l)}_{S\cup \{j\}}(x)\leq \text{\rm maxroot}\ \sum\limits_{i\in R}^{}g^{(l)}_{S\cup \{i\}}(x).
\end{equation}
According to \eqref{eq:gS}, we have
\begin{equation}\label{lemma3.2:eq2}
\text{\rm maxroot}\ g^{(l)}_{S\cup \{j\}}(x)= \text{\rm maxroot}\ (x^2\cdot\partial_x-d\cdot x)^{l}\cdot p_{S\cup \{j\}}(x)
\end{equation}
and
\begin{equation}\label{lemma3.2:eq3}
\sum\limits_{i\in R}^{}g^{(l)}_{S\cup \{i\}}(x)=(x^2\cdot\partial_x-d\cdot x)^{l} \sum\limits_{i\in R}^{}\Vert \mathbf{Q}_S\mathbf{a}_i\Vert^2\cdot p_{S\cup\{i\}}(x)=(x^2\cdot\partial_x-d\cdot x)^{l+1}\ p_S(x),
\end{equation}
where the last equation follows from Lemma \ref{lemma3.1}. Hence, combining \eqref{lemma3.2:eq1}, \eqref{lemma3.2:eq2}, and \eqref{lemma3.2:eq3},  we obtain that
\begin{equation*}
\text{\rm maxroot}\ (x^2\cdot\partial_x-d\cdot x)^{l}\cdot  p_{S\cup \{j\}}(x)\leq \text{\rm maxroot}\ (x^2\cdot\partial_x-d\cdot x)^{l+1}\cdot  p_S(x).
\end{equation*}

It remains to show that, for any nonnegative integer $l\leq\text{\rm rank}(\mathbf{A})-\text{\rm rank}(\mathbf{A}_S)-1$, the polynomials $\{g_{S\cup\{i\}}^{(l)}(x) : i\in R\}$ have a common interlacing. According to \eqref{lemma3.1:eq2}, we obtain that
\begin{equation*}
p_{S\cup\{i\}}(x)=\det[x\cdot\mathbf{I}_d-\mathbf{A}^{\rm T}\mathbf{Q}_{S}\mathbf{A}+\frac{1}{\Vert \mathbf{Q}_S\mathbf{a}_i \Vert^2}\cdot(\mathbf{A}^{\rm T}\mathbf{Q}_S\mathbf{a}_i)(\mathbf{A}^{\rm T}\mathbf{Q}_S\mathbf{a}_i)^{\rm T}],\quad\forall\ i\in R.
\end{equation*}
Then Lemma \ref{lemma2.5} shows that the polynomials $p_{S\cup\{i\}}(x)$ for $i\in R$ have a common interlacing. Note that the polynomials $g^{(0)}_{S\cup\{i\}}(x)$ and $p_{S\cup\{i\}}(x)$ have the same roots for each $i\in R$, so the polynomials $g^{(0)}_{S\cup\{i\}}(x)$ for $i\in R$  have a common interlacing.
Note that the operators $\partial_x$ and $\mathcal{R}_{x,d}^+$ both preserve interlacing for degree $d$ real-rooted polynomials with the same number of nonzero roots. Then, for each positive integer $l\leq\text{\rm rank}(\mathbf{A})-\text{\rm rank}(\mathbf{A}_S)-1$, the polynomials $g^{(l)}_{S\cup\{i\}}(x)$ for $i\in R$ have a common interlacing.

\end{proof}

We have all the materials to prove Theorem \ref{thm1.3}.

\begin{proof}[{{Proof of Theorem \ref{thm1.3}}}]\label{proofthm1.3}

We start with $S=\emptyset$. Note that $p_{\emptyset}(x)=\det[x\cdot \mathbf{I}_d-\mathbf{A}^{\rm T}\mathbf{A}]$. By Lemma \ref{lemma3.2}, there exists an $i_1\in[d]$ such that $\Vert\mathbf{a}_{i_1}\Vert>0$ and
\begin{equation}\label{proof1.3:eq1}
\text{\rm maxroot}\ (x^2\cdot\partial_x-d\cdot x)^{k-1}\cdot  p_{\{i_1\}}(x)\leq \text{\rm maxroot}\ (x^2\cdot\partial_x-d\cdot x)^k\cdot  \det[x\cdot \mathbf{I}_d-\mathbf{A}^{\rm T}\mathbf{A}].
\end{equation}	
Now we set $S=\{i_1\}$. Using Lemma \ref{lemma3.2} again, we obtain that there exists an $i_2\in[d]\backslash\{i_1\}$ such that $\Vert\mathbf{Q}_{\{i_1\}}\mathbf{a}_{i_2}\Vert>0$ and
\begin{equation}\label{proof1.3:eq2}
\text{\rm maxroot}\ (x^2\cdot\partial_x-d\cdot x)^{k-2}\cdot  p_{\{i_1,i_2\}}(x)\leq \text{\rm maxroot}\ (x^2\cdot\partial_x-d\cdot x)^{k-1}\cdot  p_{\{i_1\}}(x) .
\end{equation}	
Combining \eqref{proof1.3:eq1} and \eqref{proof1.3:eq2}, we obtain that
\begin{equation*}
\begin{aligned}
\text{\rm maxroot}\  (x^2\cdot\partial_x-d\cdot x)^{k-2}\cdot p_{\{i_1,i_2\}}(x)&\leq \text{\rm maxroot}\ (x^2\cdot\partial_x-d\cdot x)^{k-1}\ p_{\{i_1\}}(x)\\
&\leq \text{\rm maxroot}\ (x^2\cdot\partial_x-d\cdot x)^k\ \det[x\cdot \mathbf{I}_d-\mathbf{A}^{\rm T}\mathbf{A}].
\end{aligned}
\end{equation*}
Then we set $S=\{i_1,i_2\}$. By repeating the same argument $k-2$ more times, we can obtain a $k$-subset $S=\{i_1,i_2,\ldots,i_k\}\subset[d]$ such that $\text{\rm rank}(\mathbf{A}_S)=k$ and
\begin{equation*}
\text{\rm maxroot}\ p_{S}(x)\leq \text{\rm maxroot}\  (x^2\cdot\partial_x-d\cdot x)\ p_{\{i_1,\ldots,i_{k-1}\}}(x) \leq \cdots \leq \text{\rm maxroot}\ (x^2\cdot\partial_x-d\cdot x)^k \ \det[x\cdot \mathbf{I}_d-\mathbf{A}^{\rm T}\mathbf{A}].
\end{equation*}	

\end{proof}

\subsection{Estimating the largest root of the expected polynomial.}

With the help of Theorem \ref{thm1.3}, to finish the proof of Theorem \ref{thm1.1}, it remains to estimate the largest root of the polynomial $(x^2\cdot\partial_x-d\cdot x)^k\ \det[x\cdot \mathbf{I}_d-\mathbf{A}^{\rm T}\mathbf{A}]$. The main result of this subsection is the following theorem, which provides an upper bound on the largest root of $(x^2\cdot\partial_x-d\cdot x)^k\ p(x)$ for a polynomial $p(x)\in\mathbb{P}^+(d)$. We postpone its proof to the end of this subsection.

\begin{theorem}\label{thm3.1}

Assume that $p(x)=x^{d-t}\prod_{i=1}^{t}(x-\lambda_i)$ is a real-rooted polynomial of degree $d$, where $t\leq d$ is a positive integer and $\lambda_1,\ldots,\lambda_t$ are positive  numbers satisfying $\lambda_1\geq\cdots\geq \lambda_t>0$  and $\lambda_1>\lambda_t$. Set $\alpha:=t/\sum_{i=1}^{t} \lambda_i^{-1}$ and $\beta:=\frac{\lambda_t^{-1}-\alpha^{-1}}{\lambda_t^{-1}-\lambda_1^{-1}}\leq 1$. Then, for each integer $k$ satisfying $\beta\cdot t\leq k< t$, we have
\begin{equation}\label{thm3.1:eq1}
\text{\rm maxroot}\ (x^2\cdot\partial_x-d\cdot x)^k\ p(x)\leq \frac{\lambda_1}{1+(\frac{\lambda_1}{\alpha}-1) \cdot\bigg(\sqrt{\frac{k}{t}}-\sqrt{\frac{\beta}{1-\beta}\cdot (1-\frac{k}{t})} \bigg)^2}.
\end{equation}
\end{theorem}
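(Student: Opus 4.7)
The plan is to reduce Theorem~\ref{thm3.1} to a lower bound on the smallest root of a $k$-th derivative, and then attack that bound by the univariate barrier method combined with a convexity argument. First, Proposition~\ref{prop21}(iii) gives $\text{\rm maxroot}\,(x^2\cdot\partial_x - d\cdot x)^k p(x) = 1/\text{\rm minroot}\,\partial_x^k\mathcal{R}_{x,d}^+ p(x)$, while a direct computation shows that $\mathcal{R}_{x,d}^+ p(x) = \prod_{i=1}^t(1 - \lambda_i x)$, which has the same nonzero root set as $\tilde q(x) := \prod_{i=1}^t(x - \mu_i)$ with $\mu_i := 1/\lambda_i$. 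Setting $\mu_1 = 1/\lambda_1 \le \cdots \le \mu_t = 1/\lambda_t$, $\bar\mu := 1/\alpha = (1/t)\sum_i \mu_i$, and recalling $\beta = (\mu_t - \bar\mu)/(\mu_t - \mu_1)$, the theorem reduces to proving
\begin{equation*}
\text{\rm minroot}\,\partial_x^k \tilde q(x) \ \ge\ \mu_1 + (\bar\mu - \mu_1)\,L^2, \qquad L := \sqrt{\tfrac{k}{t}} - \sqrt{\tfrac{\beta}{1-\beta}\bigl(1-\tfrac{k}{t}\bigr)}.
\end{equation*}

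Next, I apply the univariate barrier method from the references cited in the paper. Define the lower barrier $\Phi(x_0) := \sum_{i=1}^t (\mu_i - x_0)^{-1}$ for $x_0 < \mu_1$. The standard barrier-update lemma (as in Batson--Spielman--Srivastava and Marcus--Spielman--Srivastava) asserts that whenever $0 < \delta \le 1/\Phi(x_0)$, the point $x_0 + \delta$ remains strictly below the smallest root of $\partial_x \tilde q$ and $\Phi_{\partial_x \tilde q}(x_0 + \delta) \le \Phi(x_0)$; iterating $k$ times with the single shift $\delta = 1/\Phi(x_0)$ then yields
\begin{equation*}
\text{\rm minroot}\,\partial_x^k \tilde q(x) \ \ge\ x_0 + k/\Phi(x_0), \qquad \forall\, x_0 < \mu_1.
\end{equation*}

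To turn this into a bound involving only $\mu_1, \mu_t, \bar\mu$, I upper-bound $\Phi(x_0)$ by convexity. Since $\mu \mapsto (\mu - x_0)^{-1}$ is convex on $(x_0, \infty)$ and the atomic measure $\sum_i \delta_{\mu_i}$ has total mass $t$, mean $\bar\mu$, and support in $[\mu_1, \mu_t]$, the Hardy--Littlewood--P\'olya majorization principle gives
\begin{equation*}
\Phi(x_0) \ \le\ \frac{t\beta}{\mu_1 - x_0} + \frac{t(1-\beta)}{\mu_t - x_0},
\end{equation*}
the value attained by the extremal two-point configuration that places mass $t\beta$ at $\mu_1$ and mass $t(1-\beta)$ at $\mu_t$ (consistent with the constraints $\sum \mu_i = t\bar\mu$ and $\mu_i \in [\mu_1,\mu_t]$).

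Finally, I optimize $F(x_0) := x_0 + k/\Phi^{\mathrm{ub}}(x_0)$ over $x_0 < \mu_1$, where $\Phi^{\mathrm{ub}}$ is the right-hand side above. Writing $a := \mu_1 - x_0 > 0$ and $c := \mu_t - \mu_1$ turns this into maximizing a rational function of $a$; the first-order condition $F'(a) = 0$ reduces to the quadratic $a^2 + 2\beta c\, a - \frac{\beta(k - t\beta)}{t - k}\,c^2 = 0$, whose unique positive root is $a^* = c\bigl(\sqrt{\beta k(1-\beta)/(t-k)} - \beta\bigr)$. Crucially, $a^* > 0$ precisely when $k > \beta t$, which explains the theorem's hypothesis $\beta t \le k < t$. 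The main obstacle will be the ensuing algebraic simplification: substituting $a^*$ back into $F$ and using the defining identity $u^2 = \beta k(1-\beta)/(t-k)$ for $u := \sqrt{\beta k(1-\beta)/(t-k)}$, together with the elementary identity $\bar\mu - \mu_1 = (1-\beta)(\mu_t - \mu_1)$, one factors the resulting expression into $\mu_1 + (\bar\mu - \mu_1)L^2$. Taking reciprocals and substituting $\mu_1 = 1/\lambda_1$, $\bar\mu = 1/\alpha$ then yields the bound of Theorem~\ref{thm3.1}.
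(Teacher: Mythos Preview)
Your proposal is correct and the final bound it produces coincides exactly with the paper's. The routes differ in packaging. The paper performs the affine change of variables $x\mapsto \lambda_t^{-1}-(\lambda_t^{-1}-\lambda_1^{-1})x$, which sends the roots $\mu_i=\lambda_i^{-1}$ of $\mathcal{R}_{x,d}^+p$ into $[0,1]$, and then invokes Ravichandran's black-box result (Lemma~\ref{lemma3.3}) bounding $\text{maxroot}\,\partial_x^k h$; the remaining algebra is a two-line substitution. You instead work directly with $\tilde q(x)=\prod_i(x-\mu_i)$, apply the derivative--barrier iteration $\text{minroot}\,\partial_x^k\tilde q\ge x_0+k/\Phi(x_0)$, then use the convexity of $\mu\mapsto(\mu-x_0)^{-1}$ to replace $\Phi(x_0)$ by the value at the extremal two-point configuration $(t\beta\text{ at }\mu_1,\ t(1-\beta)\text{ at }\mu_t)$, and finally optimize over $x_0$. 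This is in effect a self-contained re-derivation of Lemma~\ref{lemma3.3} (transformed to the lower-barrier side), with the convexity step playing the role that the normalization to $[0,1]$ plays in the paper. The trade-off: the paper's proof is three lines once Lemma~\ref{lemma3.3} is granted, whereas your argument is independent of that citation but carries the burden of the barrier-update lemma and the optimization algebra; your quadratic in $a$ and the identification $F(a^*)=\mu_1+(1-\beta)cL^2$ do check out, and the positivity condition $a^*>0\iff k>\beta t$ correctly recovers the hypothesis.
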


\begin{remark}
As $k$ increases from $\beta\cdot t$ to $t$, the upper bound in \eqref{thm3.1:eq1} decreases from $\lambda_1$ to $\alpha$.
Theorem \ref{thm3.1} requires  $p(x)$ has at least two distinct nonzero roots, i,e., $\lambda_1>\lambda_t$.
For the case where $\lambda_1=\lambda_t=\lambda>0$, we can write $p(x)$ in the form of  $p(x)=x^{d-t}\cdot (x-\lambda)^t$. Then we use \eqref{prop2.1:eq1} to obtain that
\begin{equation}
(x^2\cdot \partial_x-d\cdot x)^{k}\ p(x)=\frac{ t!}{(t-k)!}\cdot \lambda^k\cdot x^{d+k-t}\cdot (x-\lambda)^{t-k}, \quad \forall\ k\leq t,
\end{equation}
which implies  that $\lambda$ is always the largest root of $(x^2\cdot \partial_x-d\cdot x)^{k}\ p(x)$ for each $k<t$. 	
\end{remark}

To prove Theorem \ref{thm3.1}, we need the following lemma, which shows how the largest root of a univariate polynomial shrink after taking derivatives. Lemma \ref{lemma3.3} was proved using the barrier method introduced by Batson-Spielman-Srivastava in \cite{inter0}. We refer to \cite{inter3,inter5,bcms19,ravi1,Branden2,ravi3,coh2016,XXZ21} for more details about the barrier method and its applications.

\begin{lemma}{\rm \cite[Theorem 4.2]{ravi1}}\label{lemma3.3}
Assume that $p(x)=\prod_{i=1}^{t}(x-\lambda_i)$ is a real-rooted polynomial of degree $t$, where $\lambda_i\in[0,1]$ for each $i\in[t]$. Let $\gamma=\frac{1}{t}\sum_{i=1}^t\lambda_i$. Then, for each $k\geq \gamma\cdot t$,
\begin{equation*}
\text{\rm maxroot}\ \partial_x^k\ p(x)\leq \left(\sqrt{\gamma\cdot\frac{k}{t}}+\sqrt{(1-\gamma)\cdot\left(1-\frac{k}{t}\right)}\right)^2.
\end{equation*}

\end{lemma}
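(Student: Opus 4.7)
The plan is to prove this by the univariate barrier-function method of Batson--Spielman--Srivastava, as adapted by Marcus--Spielman--Srivastava and Ravichandran to track the maximum root under repeated differentiation. For a real-rooted polynomial $q$ with roots $\mu_1,\dots,\mu_n$, I would work with the upper barrier
\[
\Phi_q(x) \;:=\; \frac{q'(x)}{q(x)} \;=\; \sum_{i=1}^n \frac{1}{x - \mu_i}.
\]
For $x > \text{\rm maxroot}(q)$, this quantity is positive, convex, and strictly decreasing, and the bound $\Phi_q(x) \leq \phi$ certifies that $x - \text{\rm maxroot}(q) \geq 1/\phi$.

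The overall strategy is to maintain inductively, for $j = 0,1,\dots,k$, a pair $(x_j,\phi_j)$ with $x_j > \text{\rm maxroot}(\partial_x^j p)$ and $\Phi_{\partial_x^j p}(x_j) \leq \phi_j$. The technical core is a one-step shift lemma of the following shape: \emph{if $\Phi_q(x) \leq \phi$ and $0 < \delta < 1/\phi$, then $y := x - \delta > \text{\rm maxroot}(\partial_x q)$ and $\Phi_{\partial_x q}(y) \leq F(\phi,\delta,\deg q)$} for an explicit $F$. This is derived from the identity
\[
\Phi_{\partial_x q}(y) \;=\; \Phi_q(y) \;+\; \frac{\Phi_q'(y)}{\Phi_q(y)},
\]
combined with the Cauchy--Schwarz estimate $|\Phi_q'(y)| \geq \Phi_q(y)^2/\deg(q)$ and a controlled upper bound on $\Phi_q(y)$ at the shifted point $y = x - \delta$. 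Choosing the form of $F$ so that the $k$-fold iteration telescopes into a closed form is the task that requires the most care.

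For the initialization, I use the hypotheses $\lambda_i \in [0,1]$ and $\sum_i \lambda_i = \gamma t$. Taking $x_0 = 1 + \varepsilon$ and exploiting convexity of $\lambda \mapsto 1/(x_0 - \lambda)$ on $[0,1]$, the maximum of $\Phi_p(x_0)$ under these two constraints is attained at the extremal configuration with $\gamma t$ roots at $1$ and $(1-\gamma)t$ roots at $0$, giving
\[
\Phi_p(x_0) \;\leq\; \frac{\gamma t}{x_0 - 1} + \frac{(1-\gamma)t}{x_0}.
\]
Feeding this into the iterated shift yields an upper bound $x_k \geq \text{\rm maxroot}(\partial_x^k p)$ as an explicit function of $x_0$ and the shift parameters $\delta_0,\dots,\delta_{k-1}$. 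Finally, I would optimize $x_0$ and the $\delta_j$ to minimize $x_k$, letting $\varepsilon \downarrow 0$. The hypothesis $k \geq \gamma t$ is exactly what forces the optimizer to lie in the feasible region (keeping each $\delta_j < 1/\phi_j$), and a Lagrangian calculation should collapse the minimum to the stated $\bigl(\sqrt{\gamma k/t} + \sqrt{(1-\gamma)(1-k/t)}\bigr)^2$.

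The main obstacle I expect is the derivation of the one-step shift in a form that both iterates cleanly and saturates at the claimed closed-form bound. The Cauchy--Schwarz estimate is naturally loose, so the right way to state the update is in terms of a second tracked quantity (for instance $\phi_j$ together with $n_j := \deg(\partial_x^j p) = t - j$) that captures both how close $x_j$ is to the maximum root and how the barrier compresses under each differentiation. Once the shift lemma is in hand, the optimization is a routine (if slightly delicate) Lagrangian computation with two constraints, matching the extremal structure suggested by the extremal polynomial $x^{(1-\gamma)t}(x-1)^{\gamma t}$.
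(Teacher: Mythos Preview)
The paper does not prove this lemma; it is quoted verbatim as \cite[Theorem 4.2]{ravi1} and used as a black box, with only the remark that it ``was proved using the barrier method introduced by Batson--Spielman--Srivastava.'' So there is no in-paper proof to compare against.

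That said, your outline is faithful to the actual proof in the cited reference. Ravichandran's argument is exactly a univariate barrier iteration: one tracks $\Phi_q(x)=q'(x)/q(x)$ above the top root, proves a one-step shift lemma for $\partial_x$ (your identity $\Phi_{q'}=\Phi_q+\Phi_q'/\Phi_q$ is correct and is the starting point), initializes using the extremal configuration forced by $\lambda_i\in[0,1]$ and $\sum\lambda_i=\gamma t$, and then optimizes the accumulated shift. Your identification of the extremal polynomial $x^{(1-\gamma)t}(x-1)^{\gamma t}$ is also on target---this is precisely the equality case.

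The only place where your sketch is genuinely incomplete (and you flag it yourself) is the form of the one-step update. The clean way Ravichandran handles this is not to iterate a generic inequality $k$ times and then optimize over $k$ free parameters $\delta_0,\dots,\delta_{k-1}$, but rather to parametrize the barrier and the shift so that a single recursion in $j$ closes up; concretely, one tracks $\phi_j$ together with $t-j$ and chooses $\delta_j$ so that the update is linear in $1/\phi_j$. Your Lagrangian-over-$k$-shifts plan would work in principle but is more laborious than necessary. If you consult \cite{ravi1} or \cite[Section 4]{inter3} you will find the streamlined recursion; filling that in would complete your argument.
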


Now we can present the proof of Theorem \ref{thm3.1}.

\begin{proof}[{{Proof of Theorem \ref{thm3.1}}}]\label{proof thm3.1}
We define the polynomials
\begin{equation*}
q(x):=\mathcal{R}_{x,d}^+\ p(x)\quad\text{\rm and}\quad h(x):=q\bigg(\lambda_t^{-1}-(\lambda_t^{-1}-\lambda_1^{-1})\cdot x\bigg)=\lambda_1\cdots\lambda_t\cdot  (\lambda_t^{-1}-\lambda_1^{-1})^t\cdot  \prod\limits_{i=1}^t(x-b_i),
\end{equation*}
where $b_i:=\frac{\lambda_t^{-1}-\lambda_i^{-1}}{\lambda_t^{-1}-\lambda_1^{-1}}\in[0,1]$ for each $i\in[t]$.
By the chain rule of differentiation, we have
\begin{equation}\label{thm3.1:eq2}
\text{\rm minroot}\ \partial_x^k\ q(x)=\lambda_t^{-1}-(\lambda_t^{-1}-\lambda_1^{-1})\cdot \text{\rm maxroot}\ \partial_x^k\ h(x).
\end{equation}
Combining \eqref{thm3.1:eq2} and Proposition \ref{prop21} (iii), for each $k< t$, we have
\begin{equation}\label{thm3.1:eq3}
\text{\rm maxroot}\ (x^2\cdot\partial_x-d\cdot x)^k\ p(x)=\frac{1}{\text{\rm minroot}\ \partial_x^k\ q(x)}=\frac{1}{\lambda_t^{-1}-(\lambda_t^{-1}-\lambda_1^{-1})\cdot \text{\rm maxroot}\ \partial_x^k\ h(x)}.
\end{equation}
Note that all the roots of $h(x)$ are between 0 and 1. Combining  \eqref{thm3.1:eq3} and Lemma \ref{lemma3.3}, for each $k\geq \sum_{i=1}^{t}b_i=\beta\cdot t$ we have
\begin{equation*}
\begin{aligned}
\text{\rm maxroot}\ (x^2\cdot\partial_x-d\cdot x)^k\ p(x)&\leq \frac{1}{\lambda_t^{-1}-(\lambda_t^{-1}-\lambda_1^{-1})\cdot\bigg(\sqrt{\beta \cdot \frac{k}{t}}+\sqrt{(1-\beta)\cdot (1-\frac{k}{t})} \bigg)^2}\\
&=\frac{1}{\lambda_1^{-1}+(\lambda_t^{-1}-\lambda_1^{-1})\cdot\bigg(\sqrt{(1-\beta)\cdot \frac{k}{t}}-\sqrt{\beta\cdot (1-\frac{k}{t})} \bigg)^2}\\
&= \frac{\lambda_1}{1+(\frac{\lambda_1}{\alpha}-1) \cdot\bigg(\sqrt{\frac{k}{t}}-\sqrt{\frac{\beta}{1-\beta}\cdot (1-\frac{k}{t})} \bigg)^2}.
\end{aligned}
\end{equation*}
\end{proof}

\subsection{Proof of Theorem \ref{thm1.1}}\label{proof-Th1}

The proof of Theorem \ref{thm1.1} is presented in this subsection.

\begin{proof}[{{Proof of Theorem \ref{thm1.1}}}]\label{proofthm1.1}

For each subset $S\subset[d]$, we define the degree-$d$ polynomial $p_S(x)$ as in \eqref{section3.1:eq1}.
By Theorem \ref{thm1.3}, there exists a subset $S\subset[d]$ of size $k$ such that $\text{\rm rank}(\mathbf{A}_S)=k$ and
\begin{equation}\label{proofthm1.1:eq1}
\Vert \mathbf{A}-\mathbf{A}_{S}\mathbf{A}_{S}^{\dagger}\mathbf{A}\Vert_2^2=\text{\rm maxroot}\ p_{S}(x)\leq \text{\rm maxroot}\ (x^2\cdot\partial_x-d\cdot x)^k \ \det[x\cdot \mathbf{I}_d-\mathbf{A}^{\rm T}\mathbf{A}].
\end{equation}	
By Theorem \ref{thm3.1}, for each integer $k$ satisfying $\beta\cdot t\leq k< t$,  we have
\begin{equation}\label{eq:proofeq2}
\text{\rm maxroot}\ (x^2\cdot\partial_x-d\cdot x)^k \ \det[x\cdot \mathbf{I}_d-\mathbf{A}^{\rm T}\mathbf{A}]\leq \frac{\lambda_1}{1+(\frac{\lambda_1}{\alpha}-1) \cdot\bigg(\sqrt{\frac{k}{t}}-\sqrt{\frac{\beta}{1-\beta}\cdot (1-\frac{k}{t})} \bigg)^2}.
\end{equation}
Combining (\ref{eq:proofeq2}) with \eqref{proofthm1.1:eq1} and noting that $\lambda_1=\Vert\mathbf{A}\Vert_2^2$, we arrive at our conclusion.

\end{proof}

For the remainder of this subsection, we present two useful formulas, i.e., (\ref{prop3.1:eq1}) and
(\ref{prop3.1:eq2}).  Specifically, (\ref{prop3.1:eq2}) sheds light on the significance of the constant $\alpha$, which was introduced in Theorem \ref{thm1.1}.
For convenience, given a matrix $\mathbf{A}\in\mathbb{R}^{n\times d}$ and a positive integer $k\leq\mathrm{rank}(\mathbf{A})$, we define $c_{k}(\mathbf{A})$ as
\begin{equation*}
c_{k}(\mathbf{A}):=\sum\limits_{S\subset[d],\vert S\vert=k}\det[\mathbf{A}_S^{\rm T}\mathbf{A}_S].
\end{equation*}
A simple observation is that $c_{k}(\mathbf{A})$ is positive for each $k\leq\mathrm{rank}(\mathbf{A})$.

\begin{prop}\label{prop3.1}
Let $\mathbf{A}\in\mathbb{R}^{n\times d}$ be a matrix of rank $t\leq\min\{n,d\}$. Then, for each $k\leq t$, we have
\begin{equation}\label{prop3.1:eq1}
k!\sum\limits_{S\subset[d],\vert S\vert=k}^{}\det[\mathbf{A}_S^{\rm T}\mathbf{A}_S]\cdot p_S(x)=(x^2\cdot\partial_x-d\cdot x)^k\  \det[x\cdot\mathbf{I}_d-\mathbf{A}^{\rm T}\mathbf{A}].
\end{equation}	
Moreover, by letting $\alpha:=t/\sum_{i=1}^t \lambda_i^{-1}$, where $\lambda_1,\ldots,\lambda_t$ are the positive eigenvalues of $\mathbf{A}^{\rm T}\mathbf{A}$, we have
\begin{equation}\label{prop3.1:eq2}
\alpha=  \sum\limits_{S\subset[d],\vert S\vert=t-1}^{} \frac{\det[\mathbf{A}_S^{\rm T}\mathbf{A}_S]}{c_{t-1}(\mathbf{A})}\cdot \Vert \mathbf{A}-\mathbf{A}_S\mathbf{A}_S^{\dagger}\mathbf{A} \Vert_2^2.
\end{equation}	
\end{prop}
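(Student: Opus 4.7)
My plan is to derive (\ref{prop3.1:eq1}) by summing the subset-wise identity of Proposition \ref{prop4.1}(i) over all $k$-subsets, converting the resulting sum of products of bivariate operators into a single univariate operator via a Leibniz-style identity; then (\ref{prop3.1:eq2}) falls out as the $k=t-1$ specialization upon matching coefficients on both sides. The key auxiliary identity I would establish is: for any multi-affine polynomial $f(z_1,\ldots,z_d)\in\mathbb{R}[z_1,\ldots,z_d]$,
\begin{equation*}
(x^2\partial_x-d\cdot x)^k\, f(x,\ldots,x) \;=\; k!\sum_{S\subset[d],\,|S|=k} \Bigl(\prod_{j\in S}(z_j^2\partial_{z_j}-z_j)\Bigr)\,f(z_1,\ldots,z_d)\,\Big|_{z_j=x,\,\forall j}.
\end{equation*}
Applying this with $f=\det[\mathbf{Z}-\mathbf{A}^{\rm T}\mathbf{A}]$ (which is multi-affine since the $z_i$ appear only on the diagonal of $\mathbf{Z}$) and then invoking Proposition \ref{prop4.1}(i) on each summand gives (\ref{prop3.1:eq1}) immediately.

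To prove this Leibniz-style identity, I would introduce $\mathcal{D}_i := z_i^2\partial_{z_i}-z_i$ and record three properties of its action on multi-affine polynomials: (a) writing $f=z_iA+B$ with $A,B$ independent of $z_i$ gives $\mathcal{D}_if=-z_iB$, so $\mathcal{D}_i$ preserves multi-affinity; (b) consequently $\mathcal{D}_i^2\equiv 0$, since applying $\mathcal{D}_i$ a second time evaluates the surviving factor at $z_i=0$; (c) $\mathcal{D}_i$ and $\mathcal{D}_j$ commute for $i\neq j$ since they act on disjoint variables. A direct calculation (of the same flavor as the one in the proof of Lemma \ref{lemma3.1}) shows $(x^2\partial_x-dx)\,f(x,\ldots,x)=\sum_{i=1}^d(\mathcal{D}_if)(x,\ldots,x)$ whenever $f$ is multi-affine; iterating $k$ times yields $\sum_{(i_1,\ldots,i_k)\in[d]^k}(\mathcal{D}_{i_1}\cdots\mathcal{D}_{i_k}f)|_{z=x}$, and by (b)--(c) only tuples with distinct indices survive, with each $k$-subset $S$ contributing $k!$ identical copies of $\prod_{j\in S}\mathcal{D}_j f$.

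For (\ref{prop3.1:eq2}), I would specialize (\ref{prop3.1:eq1}) to $k=t-1$. On the left, whenever $\det[\mathbf{A}_S^{\rm T}\mathbf{A}_S]>0$ we have $\mathrm{rank}(\mathbf{Q}_S\mathbf{A})=1$, so $p_S(x)=x^{d-1}(x-\|\mathbf{A}-\mathbf{A}_S\mathbf{A}_S^\dagger\mathbf{A}\|_2^2)$; the remaining terms vanish. On the right, using Proposition \ref{prop21}(i) to rewrite the operator gives $(-1)^{t-1}\mathcal{R}_{x,d}^+\partial_x^{t-1}\mathcal{R}_{x,d}^+\det[x\mathbf{I}_d-\mathbf{A}^{\rm T}\mathbf{A}]$; since $\mathcal{R}_{x,d}^+\det[x\mathbf{I}_d-\mathbf{A}^{\rm T}\mathbf{A}]$ is a scalar multiple of $\prod_{i=1}^t(x-\lambda_i^{-1})$, applying $\partial_x^{t-1}$ produces a linear polynomial whose unique root is $\frac{1}{t}\sum_i\lambda_i^{-1}=\alpha^{-1}$. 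Tracking the constants carefully, one obtains
\begin{equation*}
(x^2\partial_x-dx)^{t-1}\det[x\mathbf{I}_d-\mathbf{A}^{\rm T}\mathbf{A}]=(t-1)!\Bigl(\prod_{i=1}^t\lambda_i\Bigr)\Bigl(\sum_{i=1}^t\lambda_i^{-1}\Bigr)\,x^{d-1}(x-\alpha).
\end{equation*}
Equating with the left-hand side of (\ref{prop3.1:eq1}) and comparing the coefficient of $x^d$ recovers the identity $c_{t-1}(\mathbf{A})=(\prod_i\lambda_i)(\sum_i\lambda_i^{-1})$, while matching the coefficient of $x^{d-1}$ and dividing by $c_{t-1}(\mathbf{A})$ yields (\ref{prop3.1:eq2}).

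The main obstacle is the Leibniz-style identity: recognizing that $\mathcal{D}_i^2\equiv 0$ on multi-affine polynomials is the crucial combinatorial input responsible for the $k!$ factor, and it is not immediately obvious from the raw definition of $\mathcal{D}_i$. Everything else, including the closed-form computation of $(x^2\partial_x-dx)^{t-1}\det[x\mathbf{I}_d-\mathbf{A}^{\rm T}\mathbf{A}]$ and the coefficient matching, is a routine application of Proposition \ref{prop21}(i) and elementary bookkeeping.
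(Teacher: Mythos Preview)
Your argument is correct. For part~(\ref{prop3.1:eq2}) it is essentially identical to the paper's: both specialize to $k=t-1$, observe that only full-rank subsets contribute on the left (each giving $x^{d-1}(x-\|\mathbf{A}-\mathbf{A}_S\mathbf{A}_S^\dagger\mathbf{A}\|_2^2)$), compute the right-hand side explicitly via Proposition~\ref{prop21}(i), and match the coefficients of $x^d$ and $x^{d-1}$.

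For part~(\ref{prop3.1:eq1}) the route is genuinely different. The paper proceeds by induction on $k$: the base case is Lemma~\ref{lemma3.1} with $S=\emptyset$, and the inductive step applies Lemma~\ref{lemma3.1} once more to each $p_S$ and then uses Lemma~\ref{lemma2.2} to recombine $\det[\mathbf{A}_S^{\rm T}\mathbf{A}_S]\cdot\|\mathbf{Q}_S\mathbf{a}_i\|^2=\det[\mathbf{A}_{S\cup\{i\}}^{\rm T}\mathbf{A}_{S\cup\{i\}}]$. You instead prove a purely operator-theoretic Leibniz identity on multi-affine polynomials (resting on the observation that $\mathcal{D}_i^2\equiv 0$ and the $\mathcal{D}_i$ commute) and then invoke Proposition~\ref{prop4.1}(i) once per subset. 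Your approach cleanly separates the combinatorics of the $k!$ factor from the matrix algebra, and avoids having to re-enter the rank-one update machinery of Lemma~\ref{lemma3.1} at every step; the paper's inductive approach, on the other hand, is closer in spirit to the interlacing argument of Theorem~\ref{thm1.3} and reuses Lemma~\ref{lemma3.1} directly rather than the somewhat heavier Proposition~\ref{prop4.1}(i). Both are perfectly valid; yours is arguably the more transparent explanation of where the factorial comes from.
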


\begin{proof}
We first prove \eqref{prop3.1:eq1} by induction on $k$. The case $k=1$ immediately follows from Lemma \ref{lemma3.1} by taking $S=\emptyset$. We assume that \eqref{prop3.1:eq1} holds for each integer $i\leq k-1$. Then, by induction, we have
\begin{equation*}
\begin{aligned}
(x^2\cdot\partial_x-d\cdot x)^k\  \det[x\cdot\mathbf{I}_d-\mathbf{A}^{\rm T}\mathbf{A}]
&=(x^2\cdot\partial_x-d\cdot x)\cdot (k-1)!\cdot \sum\limits_{S\subset[d],\vert S\vert=k-1}^{}\det[\mathbf{A}_S^{\rm T}\mathbf{A}_S]\cdot p_S(x)\\
&=(k-1)!\cdot \sum\limits_{S\subset[d],\vert S\vert=k-1}^{}\sum\limits_{i\notin S}\det[\mathbf{A}_S^{\rm T}\mathbf{A}_S]\cdot  \Vert \mathbf{Q}_S\mathbf{a}_i\Vert^2\cdot p_{S\cup\{i\}}(x),
\end{aligned}
\end{equation*}	
where the last equality follows from Lemma \ref{lemma3.1}.
For each $S\subset[d]$ and each $i\notin S$, applying Lemma \ref{lemma2.2} with $\mathbf{B}=\mathbf{A}_S$ and $\mathbf{C}=\mathbf{a}_i$, we obtain
$\det[\mathbf{A}_S^{\rm T}\mathbf{A}_S]\cdot  \Vert \mathbf{Q}_S\mathbf{a}_i\Vert^2=\det[\mathbf{A}_{S\cup\{i\}}^{\rm T}\mathbf{A}_{S\cup\{i\}}]$.
Hence, we arrive at
\begin{equation*}
\begin{aligned}
(x^2\cdot\partial_x-d\cdot x)^k\  \det[x\cdot\mathbf{I}_d-\mathbf{A}^{\rm T}\mathbf{A}]&=(k-1)!\cdot \sum\limits_{S\subset[d],\vert S\vert=k-1}^{}\sum\limits_{i\notin S}\det[\mathbf{A}_{S\cup\{i\}}^{\rm T}\mathbf{A}_{S\cup\{i\}}]\cdot p_{S\cup\{i\}}(x)\\
&=k! \sum\limits_{R\subset[d],\vert R\vert=k}^{}\det[\mathbf{A}_R^{\rm T}\mathbf{A}_R]\cdot p_R(x).
\end{aligned}
\end{equation*}	
This completes the proof of \eqref{prop3.1:eq1}.

It remains to prove \eqref{prop3.1:eq2}. We use \eqref{prop3.1:eq1} for the case $k=t-1$. If $S$ is a $(t-1)$-subset of $[d]$ such that $\text{\rm rank}(\mathbf{A}_S)<t-1$, then we have $\det[\mathbf{A}_S^{\rm T}\mathbf{A}_S]=0$, implying that $\det[\mathbf{A}_S^{\rm T}\mathbf{A}_S]\cdot p_S(x)\equiv 0$; otherwise, we have $\det[\mathbf{A}_S^{\rm T}\mathbf{A}_S]>0$, and the polynomial $p_S(x)$ has only one positive root, which is exactly $\Vert \mathbf{A}-\mathbf{A}_S\mathbf{A}_S^{\dagger}\mathbf{A} \Vert_2^2$. Therefore, the left-hand side of \eqref{prop3.1:eq1} can be expressed as follows:
\begin{equation}\label{prop3.1:eq3}
(t-1)!\sum\limits_{S\subset[d],\vert S\vert=t-1}^{}\det[\mathbf{A}_S^{\rm T}\mathbf{A}_S]\cdot p_S(x)=(t-1)!\cdot \sum\limits_{\substack{S\subset[d],\vert S\vert=t-1,\\ \text{\rm rank}(\mathbf{A}_S)=t-1 }}^{}\det[\mathbf{A}_S^{\rm T}\mathbf{A}_S]\cdot x^{d-1}\cdot \bigg(x-\Vert \mathbf{A}-\mathbf{A}_S\mathbf{A}_S^{\dagger}\mathbf{A} \Vert_2^2\bigg).
\end{equation}
Moreover, using \eqref{prop2.1:eq1}, we can rewrite the right-hand side of \eqref{prop3.1:eq1} as
\begin{equation}\label{prop3.1:eq4}
(x^2\cdot\partial_x-d\cdot x)^{t-1}\ \det[x\cdot\mathbf{I}_d-\mathbf{A}^{\rm T}\mathbf{A}]=t! \cdot \lambda_1\cdots\lambda_t\cdot x^{d-1} \cdot \frac{1}{\alpha}\cdot (x-\alpha).
\end{equation}
Comparing \eqref{prop3.1:eq1},  \eqref{prop3.1:eq3}, and \eqref{prop3.1:eq4}, we obtain \eqref{prop3.1:eq2}.

\end{proof}

\begin{remark}
According to \eqref{prop3.1:eq1}, the polynomial $\frac{1}{k!\cdot c_{k}(\mathbf{A})}\cdot(x^2\cdot\partial_x-d\cdot x)^k\  \det[x\cdot\mathbf{I}_d-\mathbf{A}^{\rm T}\mathbf{A}]$ is the expectation of the polynomials $p_S(x)$ over the volume sampling distribution on the $k$-subsets of $[d]$. Moreover, equation \eqref{prop3.1:eq1} also implies that $(k+1)\cdot c_{k+1}(\mathbf{A})/c_k(\mathbf{A})$ is the expectation of $\Vert \mathbf{A}-\mathbf{A}_S\mathbf{A}_S^{\dagger}\mathbf{A}\Vert_{\rm F}^2$ over the volume sampling distribution on the $k$-subsets of $[d]$, i.e.,
\begin{equation}\label{remark3.2:eq1}
\sum\limits_{S\subset[d],\vert S\vert=k}^{}\frac{\det[\mathbf{A}_S^{\rm T}\mathbf{A}_S]}{c_k(\mathbf{A})}\cdot \Vert \mathbf{A}-\mathbf{A}_S\mathbf{A}_S^{\dagger}\mathbf{A}\Vert_{\rm F}^2=\frac{(k+1)\cdot c_{k+1}(\mathbf{A})}{c_k(\mathbf{A})}.
\end{equation}
Specifically, using \eqref{prop2.1:eq1} for each $k\leq t$, we can rewrite the right-hand side of \eqref{prop3.1:eq1} as
\begin{equation*}
\begin{aligned}
(x^2\cdot\partial_x-d\cdot x)^k\  \det[x\cdot\mathbf{I}_d-\mathbf{A}^{\rm T}\mathbf{A}]&= (-1)^k\cdot \mathcal{R}_{x,d}^+\cdot \partial_x^k\cdot \mathcal{R}_{x,d}^+\ \bigg(x^{d-t}\prod\limits_{i=1}^{t}(x-\lambda_i)\bigg)\\
&=\sum\limits_{i=k}^t\frac{(-1)^{i-k}\cdot i!}{(i-k)!}\cdot x^{d-i+k}\cdot \sum\limits_{S\subset[t],\vert S\vert=i}\prod\limits_{j\in S}\lambda_j.
\end{aligned}
\end{equation*}
Then, comparing the coefficients of $x^{d}$ and $x^{d-1}$ on the both sides of \eqref{prop3.1:eq1}, we obtain that for each $k<t$,
\begin{equation*}
c_k(\mathbf{A}) =\sum\limits_{S\subset[t],\vert S\vert=k}\prod\limits_{j\in S}\lambda_j
\end{equation*}
and
\begin{equation*}
\sum\limits_{S\subset[d],\vert S\vert=k}\det[\mathbf{A}_S^{\rm T}\mathbf{A}_S]  \cdot \Vert \mathbf{A}-\mathbf{A}_S\mathbf{A}_S^{\dagger}\mathbf{A}\Vert_{\rm F}^2 =(k+1)\sum\limits_{S\subset[t],\vert S\vert=k+1}\prod\limits_{j\in S}\lambda_j=(k+1)\cdot c_{k+1} (\mathbf{A}),
\end{equation*}
which immediately imply \eqref{remark3.2:eq1}.
Here, we use the fact that $\Vert \mathbf{A}-\mathbf{A}_S\mathbf{A}_S^{\dagger}\mathbf{A}\Vert_{\rm F}^2$ equals $\mathrm{Tr}[\mathbf{A}^{\rm T}\mathbf{Q}_S\mathbf{A}]$, which is the absolute value of the coefficient of $x^{d-1}$ in $p_S(x)$. Equation \eqref{remark3.2:eq1} recovers the first equation found in {\rm \cite[Equation (3.4)]{DRVW06}}, which played an important role in their analysis of Problem \ref{problem1.1} in the Frobenius norm case.

\end{remark}

\subsection{A deterministic polynomial time algorithm and the proof of Theorem  \ref{thm3.2}}\label{section3.4}

For convenience, throughout this subsection, we use $\lambda_{\max}^{\varepsilon}(p)\in \mathbb{R}$ to denote an $\varepsilon$-approximation to the largest root of a real-rooted polynomial $p(x)$, i.e.,
\begin{equation}\label{eps-approx}
| \lambda_{\max}^{\varepsilon}(p)-\text{\rm maxroot}\ p|\leq \varepsilon	.
\end{equation}
 Recall that $\mathbf{A}\in\mathbb{R}^{n\times d}$ is a matrix of rank $t$ and let $k\in[\beta\cdot t,t)$ be the sampling parameter, where $\beta$ is defined in Theorem \ref{thm1.1}. To simplify our notation, we define the degree-$d$ polynomial $\widehat{p}_S(x)$ for each subset $S\subset[d]$ as follows:
\begin{equation}
\widehat{p}_S(x):=(x^2\cdot\partial_x-d\cdot x)^{k-|S|}\ p_S(x)=(x^2\cdot\partial_x-d\cdot x)^{k-|S|}\ \det[x\cdot\mathbf{I}_d-\mathbf{A}^{\rm T}\mathbf{Q}_S\mathbf{A}].	
\end{equation}

 Building upon the proof of  Theorem \ref{thm1.3}, we propose a greedy algorithm that  selects columns from the matrix $\mathbf{A}$ in an iterative manner. Suppose that we have identified a subset $S\subset[d]$ during the $(l-1)$-th iteration. Then, during the $l$-th iteration, our algorithm will identify an index $i\in[d]\backslash S$ that minimizes the largest root of $\widehat{p}_{S\cup {i}}(x)$. This can be achieved by computing an $\varepsilon$-approximation of the largest root of $\widehat{p}_{S\cup {i}}(x)$ for every $i\in[d]\backslash S$.
Our deterministic greedy selection algorithm is stated in Algorithm \ref{alg1}. To save on running time, we store the matrix $\mathbf{A}^{\rm T}\mathbf{Q}_{S}\mathbf{A}$ at each iteration
and use the rank-one update formula to compute $\mathbf{A}^{\rm T}\mathbf{Q}_{S\cup\{i\}}\mathbf{A}$, i.e.,
\begin{equation*}
\mathbf{A}^{\rm T}\mathbf{Q}_{S\cup\{i\}}\mathbf{A}=	\mathbf{A}^{\rm T}\left(\mathbf{Q}_{S}-\frac{\mathbf{Q}_S\mathbf{a}_i \mathbf{a}_i^{\rm T} \mathbf{Q}_S }{ \Vert \mathbf{Q}_S\mathbf{a}_i\Vert^2}\right)\mathbf{A}=\mathbf{A}^{\rm T}\mathbf{Q}_{S}\mathbf{A}-\frac{(\mathbf{A}^{\rm T}\mathbf{Q}_S\mathbf{a}_i)(\mathbf{A}^{\rm T}\mathbf{Q}_S\mathbf{a}_i)^{\rm T} }{ \Vert \mathbf{Q}_S\mathbf{a}_i\Vert^2}.
\end{equation*}
Furthermore, to reduce the time complexity of computing $p_{S\cup\{i\}}(x)$, we work with the $n\times n$ matrix $\mathbf{Q}_{S\cup\{i\}}\mathbf{A}\mathbf{A}^{\rm T}$ when $d\geq n$ and compute $p_{S\cup\{i\}}(x)$ using the following formula:
\begin{equation*}
p_{S\cup\{i\}}(x)=\det[x\cdot \mathbf{I}_d-\mathbf{A}^{\rm T}\mathbf{Q}_{S\cup\{i\}}\mathbf{A}]=x^{d-n}\cdot \det[x\cdot \mathbf{I}_n-\mathbf{Q}_{S\cup\{i\}}\mathbf{A}\mathbf{A}^{\rm T}].
\end{equation*}
In the following we present a proof of Theorem \ref{thm3.2}, showing that Algorithm \ref{alg1} can attain the bound in Theorem \ref{thm1.1} up to a certain computational error.

\begin{algorithm}[t]
\caption{A deterministic greedy algorithm for column subset selection}\label{alg1}
\begin{algorithmic}[1]
\Require $\mathbf{A}=[\mathbf{a}_1,\ldots,\mathbf{a}_d]\in\mathbb{R}^{n\times d}$ of rank $t$; sampling parameter $k\in[\beta\cdot t,t)$; $\varepsilon>0$.
\Ensure A subset $S\subset[d]$ of cardinality $k$ satisfying the bound in \eqref{alg:eq1}.
\State Set $S=\emptyset$. Set $\mathbf{Q}=\mathbf{I}_n$ and $\mathbf{B}=\begin{cases}\mathbf{A}^{\rm T}\mathbf{Q}\mathbf{A}& \text{if $d<n$} \\
\mathbf{Q}\mathbf{A}\mathbf{A}^{\rm T} & \text{if $d\geq n$}	
\end{cases}$
\For{$l=1,2,\ldots,k$}

\State Compute $\mathbf{Q}\mathbf{a}_i\in\mathbb{R}^{n\times 1}$ and $\mathbf{Q}_{S\cup\{i\}}=\mathbf{Q}-\frac{\mathbf{Q}\mathbf{a}_i\cdot (\mathbf{Q}\mathbf{a}_i)^{\rm T} }{\Vert \mathbf{Q}\mathbf{a}_i\Vert^2}\in\mathbb{R}^{n\times n}$ for each $i\in[d]\backslash S$.
\If{$d< n$}
\State Compute $\mathbf{A}^{\rm T}\mathbf{Q}\mathbf{a}_i\in\mathbb{R}^{d\times 1}$ and  $\mathbf{A}^{\rm T}\mathbf{Q}_{S\cup\{i\}}\mathbf{A}=\mathbf{B}-\frac{\mathbf{A}^{\rm T}  \mathbf{Q}\mathbf{a}_i\cdot (\mathbf{A}^{\rm T}  \mathbf{Q}\mathbf{a}_i)^{\rm T} }{\Vert \mathbf{Q}\mathbf{a}_i\Vert^2} \in\mathbb{R}^{d\times d}$ for each $i\in[d]\backslash S$.
\State Compute the polynomial $p_{S\cup\{i\}}(x)=\det[x\cdot \mathbf{I}_d-\mathbf{A}^{\rm T}\mathbf{Q}_{S\cup\{i\}}\mathbf{A}]$ for each $i\in[d]\backslash S$.
\Else
\State Compute $\mathbf{a}_i^{\rm T}  \mathbf{B}\in\mathbb{R}^{1\times n}$ and  $\mathbf{Q}_{S\cup\{i\}}\mathbf{A}\mathbf{A}^{\rm T}=\mathbf{B}-\frac{\mathbf{Q}\mathbf{a}_i\cdot \mathbf{a}_i^{\rm T}  \mathbf{B} }{\Vert \mathbf{Q}\mathbf{a}_i\Vert^2} \in\mathbb{R}^{n\times n}$ for each $i\in[d]\backslash S$.
\State Compute the characteristic polynomial $\det[x\cdot \mathbf{I}_n-\mathbf{Q}_{S\cup\{i\}}\mathbf{A}\mathbf{A}^{\rm T}]$ for each $i\in[d]\backslash S$.
\State Compute the polynomial $p_{S\cup\{i\}}(x)=\det[x\cdot \mathbf{I}_d-\mathbf{A}^{\rm T}\mathbf{Q}_{S\cup\{i\}}\mathbf{A}]$ for each $i\in[d]\backslash S$.
\EndIf
\State Compute the polynomial $\widehat{p}_{S\cup\{i\}}(x)=(x^2\cdot \partial_x-d\cdot x)^{k-l}\ p_{S\cup\{i\}}(x)$ for each $i\in[d]\backslash S$.
\State Using the standard technique of binary search with a Sturm sequence, compute an $\varepsilon$-approximation $\lambda_{\max}^{\varepsilon}(\widehat{p}_{S\cup\{i\}})$ to the largest root of $\widehat{p}_{S\cup\{i\}}(x)$ for each $i\in[d]\backslash S$.
\State Find
\begin{equation*}
j_l=\mathop{\mathrm{argmin}}_{i\in [d]\backslash S} 	\lambda_{\max}^{\varepsilon}(\widehat{p}_{S\cup\{i\}}).
\end{equation*}
\label{line 14}
\State Set $\mathbf{Q}=\mathbf{Q}_{S\cup\{j_l\}}$ and $\mathbf{B}=\begin{cases}\mathbf{A}^{\rm T}\mathbf{Q}_{S\cup\{j_l\}}\mathbf{A}& \text{if $d<n$} \\
\mathbf{Q}_{S\cup\{j_l\}}\mathbf{A}\mathbf{A}^{\rm T} & \text{if $d\geq n$}	
\end{cases}$.
\State  $S\gets S\cup \{j_l\}$.
\EndFor \\
\Return $S=\{j_1,j_2,\ldots,j_k\}$.
\end{algorithmic}
\end{algorithm}

\begin{proof}[{{Proof of Theorem \ref{thm3.2}}}]
We will first prove \eqref{alg:eq1}. At the $l$-th iteration, we let 
\begin{equation*}
i_l:=\mathop{\mathrm{argmin}}_{i:\Vert \mathbf{Q}\mathbf{a}_i \Vert\neq 0} 	\text{\rm maxroot}\  \widehat{p}_{S\cup\{i\}}(x).
\end{equation*}
Then, by our choice of $j_l$ in Line 14 of Algorithm \ref{alg1}, we have
\begin{equation}\label{thm3.2:eq1}
\text{\rm maxroot}\  \widehat{p}_{S\cup\{j_l\}}\overset{(a)}  \leq \lambda_{\max}^{\varepsilon}(\widehat{p}_{S\cup\{j_l\}})+\varepsilon \leq \lambda_{\max}^{\varepsilon}(\widehat{p}_{S\cup\{i_l\}})+\varepsilon \overset{(b)} \leq  \text{\rm maxroot}\  \widehat{p}_{S\cup\{i_l\}}+ 2\varepsilon \overset{(c)}\leq  \text{\rm maxroot}\ \widehat{p}_{S}  +2\varepsilon,
\end{equation}
where (a) and (b) follow from \eqref{eps-approx} and (c) follows from Lemma \ref{lemma3.2}.
By repeatedly using \eqref{thm3.2:eq1} for $l=1,2,\ldots,k$, we can obtain that
\begin{equation}\label{eq:thm3.2ineq}
\text{\rm maxroot}\ p_{\{j_1,\ldots,j_k\}}(x)\leq 2k\varepsilon+\text{\rm maxroot}\ (x^2\cdot\partial_x-d\cdot x)^k \ \det[x\cdot \mathbf{I}_d-\mathbf{A}^{\rm T}\mathbf{A}].
\end{equation}
Combining \eqref{eq:thm3.2ineq}, \eqref{section3.1:eq2} and Theorem \ref{thm3.1}, we arrive at \eqref{alg:eq1}.

Next, we will analyze the time complexity of Algorithm \ref{alg1} and divide the proof into two cases.
\begin{itemize}
\item {\em Case 1: $d<n$.}
 We need to compute the matrix $\mathbf{B}=\mathbf{A}^{\rm T}\mathbf{A}$ in Line 1 of Algorithm \ref{alg1} with a time complexity of $O(nd^2)$. We claim that the time complexity for computing all $\lambda_{\max}^{\varepsilon}(\widehat{p}_{S\cup\{i\}}), i\in [d]\setminus S,$ at each iteration is $O(dn^2+d^{w+1}\log(d\vee\varepsilon^{-1}))$. Then, since $d<n$, the total running time of Algorithm \ref{alg1} is
 \[
 O(nd^2)+O(k(dn^2+d^{w+1}\log(d\vee\varepsilon^{-1})))=O(kdn^2+kd^{w+1}\log(d\vee\varepsilon^{-1})).
 \]

It remains to prove that the time complexity for each iteration is $O(dn^2+d^{w+1}\log(d\vee\varepsilon^{-1}))$.
At the $l$-th iteration, given an integer $i \in [d]\setminus S$, we require $O(n^2)$ time to calculate $\mathbf{Q}\mathbf{a}_i$ and $\mathbf{Q}_{S\cup{i}}$ in Line 3 of Algorithm \ref{alg1}.
 After obtaining $\mathbf{Q}\mathbf{a}_i$, we can calculate  $\mathbf{A}^{\rm T}\mathbf{Q}\mathbf{a}_i$ in  $O(nd)$ time and $\mathbf{A}^{\rm T}\mathbf{Q}_{S\cup\{i\}}\mathbf{A}$ in  $O(d^2)$ time. Then, it takes time $O(d^w\log d)$ to compute the characteristic polynomial of $\mathbf{A}^{\rm T}\mathbf{Q}_{S\cup\{i\}}\mathbf{A}$ in Line 6 of Algorithm \ref{alg1} \cite{KG85,BCS97}. By employing  \eqref{prop2.1:eq1}, the time complexity for calculating the  the coefficients of the polynomial $\widehat{p}_{S\cup\{i\}}(x)$ is $O(1)$. Note that $\widehat{p}_{S\cup\{i\}}(x)$ has at most $d$ nonzero roots, so the time complexity to  obtain an $\varepsilon$-approximation to its largest root is $O(d^2\log{\varepsilon^{-1}} )$   (see \cite[Section 4.1]{inter3}). Recall that $d<n$, so, for each $i\in[d]\backslash S$,  the time complexity to obtain $\lambda_{\max}^{\varepsilon}(\widehat{p}_{S\cup\{i\}})$ is
\[
O(n^2)+O(nd)+O(d^2)+O(d^w\log d)+O(1)+O(d^2\log {\varepsilon^{-1}})=O(n^2+d^w\log(d\vee\varepsilon^{-1})).
\]
 Therefore, the total running time at each iteration is $O(dn^2+d^{w+1}\log(d\vee\varepsilon^{-1}))$.

\item
{\em Case 2: $d\geq n$.} In this case the time complexity for computing the matrix $\mathbf{B}=\mathbf{A}\mathbf{A}^{\rm T}$ in Line 1 is $O(dn^2)$. We claim that the time complexity for computing all $\lambda_{\max}^{\varepsilon}(\widehat{p}_{S\cup\{i\}}), i\in [d]\setminus S,$ at each iteration is $O(dn^w\log(n\vee\varepsilon^{-1}))$. Then, since $d\geq n$, the total running time for Algorithm \ref{alg1} is
\[
O(dn^2)+O(kdn^w\log(n\vee\varepsilon^{-1}))=O(kdn^w\log(n\vee\varepsilon^{-1})).
\]

It remains to show the time complexity for each iteration is $O(dn^w\log(n\vee\varepsilon^{-1}))$. First, we need  $O(n^2)$ time to compute the matrices $\mathbf{Q}\mathbf{a}_i$, $\mathbf{Q}_{S\cup\{i\}}$, $\mathbf{a}_i^{\rm T}  \mathbf{B}$ and $\mathbf{Q}_{S\cup\{i\}}\mathbf{A}\mathbf{A}^{\rm T}$ in Lines 3 and 8 of Algorithm \ref{alg1}. Next, we need  $O(n^w\log n)$ time to compute the characteristic polynomial $\det[x\cdot\mathbf{I}_n-\mathbf{Q}_{S\cup\{i\}}\mathbf{A}\mathbf{A}^{\rm T}]$ in Line 9. Then it takes time $O(1)$ to compute the polynomials $p_{S\cup\{i\}}(x)$ and $\widehat{p}_{S\cup\{i\}}(x)$. In the case of $d\geq n$, the polynomial $\widehat{p}_{S\cup\{i\}}(x)$ has at most $n$ nonzero roots. The time complexity to obtain an $\varepsilon$-approximation to its largest root is $O(n^2\log \varepsilon^{-1})$. Putting all of these together, the time complexity to obtain $\lambda_{\max}^{\varepsilon}(\widehat{p}_{S\cup\{i\}})$ for each $i\in[d]\backslash S$ is
 $O(n^2)+O(n^w\log n)+O(1)+O(n^2\log \frac{1}{\varepsilon})=O(n^w\log(n\vee\varepsilon^{-1}))$.  So the total running time at each iteration is $O(dn^w\log(n\vee\varepsilon^{-1}))$.
\end{itemize}

\end{proof}



\begin{thebibliography}{0}

\bibitem{And1}
Anderson, J.
``Extensions, restrictions, and representations of states on $C^*$-algebras.''
\textit{Trans. Amer. Math. Soc.} 249, no. 2 (1979): 303-329.

\bibitem{And2}
Anderson, J.
``Extreme points in sets of positive linear maps on $\mathcal{B}(\mathcal{H})$.''
\textit{J. Funct. Anal.} 31, no. 2 (1979): 195-217.

\bibitem{And3}
Anderson, J.
``A conjecture concerning the pure states of $\mathcal{B}(\mathcal{H})$ and a related theorem.''
In \textit{Topics in Modern Operator Theory: 5th International Conference on Operator Theory, Timi\c{s}oara and Herculane (Romania),} June 2–12, 1980, pp. 27-43. Basel: Birkh${\rm \ddot{a}}$user Basel, 1981.

\bibitem{ADMMWG15}
Anderson, D., S. Du, M. Mahoney, C. Melgaard, K. Wu, and M. Gu.
``Spectral gap error bounds for improving CUR matrix decomposition and the ${Nystr\ddot{o}m}$ method.''
In \textit{Artificial Intelligence and Statistics,} (2015), pp. 19-27. PMLR.

\bibitem{BRN10}
Balzano, L., B. Recht, and R. Nowak.
``High-dimensional matched subspace detection when data are missing.''
In \textit{2010 IEEE International Symposium on Information Theory,} (2010) pp. 1638-1642.

\bibitem{inter0}
Batson, J., D. A. Spielman, and N. Srivastava.
``Twice-Ramanujan sparsifiers.''
\textit{SIAM J. Comput.} 41, no. 6 (2012): 1704-1721. 

\bibitem{BBC20}
Belhadji, A., R. Bardenet, and P. Chainais.
``A determinantal point process for column subset selection.''
\textit{J. Mach. Learn. Res.} 21, no. 1 (2020): 8083-8144.

\bibitem{BDM14}
Boutsidis, C., P. Drineas, and M. Magdon-Ismail.
``Near-optimal column-based matrix reconstruction.''
\textit{SIAM J. Comput.} 43, no. 2 (2014): 687-717.

\bibitem{BMD09}
Boutsidis, C., M. W. Mahoney, and P. Drineas.
``An improved approximation algorithm for the column subset selection problem.''
In \textit{Proceedings of the twentieth annual ACM-SIAM symposium on Discrete algorithms,} (2019) pp. 968-977. Society for Industrial and Applied Mathematics.

\bibitem{BT89}
Bourgain, J. and L. Tzafriri.
``Restricted invertibility of matrices and applications.''
In \textit{Analysis at Urbana,} vol. II, pp. 61-107. 1989.

\bibitem{bcms19}
Bownik, M., P. Casazza, A. Marcus, and D. Speelge.
``Improved bounds in Weaver and Feichtinger conjectures.''
\textit{J. Reine Angew. Math.} 2019, no. 749 (2019): 267-293.

\bibitem{Branden2}
${\rm Br \ddot{a}nd \acute{e}n}$, P.
``Hyperbolic polynomials and the Kadison-Singer problem.''
(2018): preprint arXiv:1809.03255.


\bibitem{BCS97}
${\rm B\ddot{u}rgisser}$, P., M. Clausen, and M. A. Shokrollahi.
Algebraic complexity theory.
Vol. 315. Springer Science \& Business Media, 2013.


\bibitem{coh2016}
Cohen, M.
``Improved spectral sparsification and Kadison-Singer for sums of higher-rank matrices.'' 
Banff International Research Station for Mathematical Innovation and Discovery (2016).

\bibitem{CT}
Casazza P. G. and J. C. Tremain.
``The Kadison-Singer problem in mathematics and engineering.''
\textit{Proc. Natl. Acad. Sci. USA,} 103, no. 7 (2006): 2032-2039.

\bibitem{CH92}
Chan, T. F. and P. C. Hansen.
``Some applications of the rank revealing QR factorization.''
\textit{SIAM Journal on Scientific and Statistical Computing,} 13, no. 3 (1992): 727-741.


\bibitem{DKM20}
Derezinski, M., R. Khanna, and M. W. Mahoney.
``Improved guarantees and a multiple-descent curve for column subset selection and the ${ Nystr\ddot{o}m}$ method.''
\textit{Advances in Neural Information Processing Systems,} 33 (2020): 4953-4964.


\bibitem{DR10}
Deshpande, A. and L. Rademacher.
``Efficient volume sampling for row/column subset selection.''
In \textit{2010 IEEE 51st annual symposium on foundations of computer science,} pp 329-338. IEEE, 2010.

\bibitem{DRVW06}
Deshpande, A., L. Rademacher, S. Vempala, and G. Wang.
``Matrix approximation and projective clustering via volume sampling.''
\textit{Theory Comput.} 2, no. 1 (2006): 225-247.

\bibitem{DMM08}
Drineas, P., M. W. Mahoney, and S. Muthukrishnan.
``Relative-error CUR matrix decompositions.''
\textit{SIAM J. Matrix Anal. Appl.} 30, no. 2 (2008): 844-881.

\bibitem{Fell}
Fell, H. J. 
``On the zeros of convex combinations of polynomials.''
\textit{Pacific J. Math.}  89, no. 1 (1980):43-50.

\bibitem{You3}
Friedland, O. and P. Youssef.
``Approximating Matrices and Convex Bodies.''
\textit{Int. Math. Res. Not. IMRN} 2019, no. 8 (2019): 2519-2537.

\bibitem{FKV04}
Frieze, A., R. Kannan, and S. Vempala.
``Fast Monte-Carlo algorithms for finding low-rank approximations.''
\textit{J. ACM} 51, no. 6 (2004): 1025-1041.

\bibitem{Gol65}
Golub, G. H.
``Numerical methods for solving linear least squares problems.''
\textit{Numer. Math.} 7 (1965): 206-216.

\bibitem{GE96}
Gu M. and S. C. Eisenstat.
``Efficient algorithms for computing a strong rank-revealing QR factorization.''
\textit{SIAM J. Sci. Comput.} 17, no. 4 (1996): 848-869.

\bibitem{GE03}
Guyon I. and A. Elisseeff.
``An introduction to variable and feature selection.''
\textit{J. Mach. Learn. Res.} 3, no. Mar (2003): 1157-1182.

\bibitem{HP92}
Hong Y. P. and C. T. Pan.
``Rank-revealing QR factorizations and the singular value decomposition.''
\textit{Math. Comp.} 58, no. 197 (1992): 213-232.

\bibitem{KS59}
Kadison R. V. and I. M. Singer.
``Extensions of pure states.''
\textit{Amer. J. Math.} 81, no. 2 (1959): 383-400.

\bibitem{KG85}
Keller-Gehrig, W.
``Fast algorithms for the characteristics polynomial.''
\textit{Theoret. Comput. Sci.} 36 (1985): 309-317.

\bibitem{inter1}
Marcus, A. W., D. A. Spielman, and N. Srivastava.
``Interlacing families I: Bipartite Ramanujan graphs of all degrees.''
\textit{Ann. of Math. (2)} 182, (2015): 307-325.

\bibitem{inter2}
Marcus, A. W., D. A. Spielman, and N. Srivastava.
``Interlacing families II: mixed characteristic polynomials and the Kadison-Singer problem.''
\textit{Ann. of Math. (2)} 182, (2015): 327-350.

\bibitem{inter3}
Marcus, A. W., D. A. Spielman, and N. Srivastava.
``Interlacing Families III: Sharper restricted invertibility estimates.''
\textit{Israel J. Math.} 247, (2022), 519-546.

\bibitem{inter5}
Marcus, A. W., D. A. Spielman, and N. Srivastava.
``Finite free convolutions of polynomials.''
\textit{Probab. Theory Related Fields} 182, no. 3-4 (2022): 807-848.

\bibitem{Mar66}
Marden, M.
Geometry of polynomials.
No. 3. American Mathematical Soc., 1949.

\bibitem{MP07}
McGown, K. J. and H. R. Parks.
``The generalization of Faulhaber's formula to sums of non-integral powers.''
\textit{J. Math. Anal. Appl.} 330, no. 1 (2007): 571-575.


\bibitem{You2}
Naor, A. and P. Youssef.
``Restricted invertibility revisited.''
In \textit{A Journey Through Discrete Mathematics,} 657-691. Springer, 2017

\bibitem{ravi1}
Ravichandran, M.
``Principal submatrices, restricted invertibility and a quantitative Gauss-Lucas theorem.''
\textit{Int. Math. Res. Not. IMRN} 2020, no. 15 (2020): 4809-4832.

\bibitem{ravi2}
Ravichandran, M. and J. Leake.
``Mixed determinants and the Kadison-Singer problem.''
\textit{Math. Ann.} 377, no. 1-2 (2020): 511-541.

\bibitem{ravi3}
Ravichandran, M. and N. Srivastava.
``Asymptotically Optimal Multi-Paving.''
\textit{Int. Math. Res. Not. IMRN} 2021, no. 14 (2021): 10908-10940.

\bibitem{RV07}
Rudelson, M. and R. Vershynin.
``Sampling from large matrices: An approach through geometric functional analysis.''
\textit{J. ACM} 54, no. 4 (2007): 21-es.

\bibitem{SS12}
Spielman, D. A. and N. Srivastava.
``An elementary proof of the restricted invertibility theorem.''
\textit{Israel J. Math.} 190, no. 1 (2012): 83-91.


\bibitem{Ver01}
Vershynin, R. 
``John's decompositions: selecting a large part.''
\textit{Israel J. Math.} 122 (2001): 253-277.


\bibitem{WS18}
Wang, Y. and A. Singh.
``Provably Correct Algorithms for Matrix Column Subset Selection with Selectively Sampled Data.''
\textit{J. Mach. Learn. Res.}  18, no. 1 (2017): 5699-5740.

\bibitem{Weaver}
Weaver, N.
``The Kadison-Singer problem in discrepancy theory.''
\textit{Discrete Math.} 278, no. 1-3 (2004): 227-239.

\bibitem{XXZ21}
Xu, Z., Z. Xu, and Z. Zhu.
``Improved bounds in Weaver's $\rm{KS}_r$ conjecture for high rank positive semidefinite matrices.''
\textit{J. Funct. Anal.} 285, no. 4 (2023): 109978

\bibitem{YTT11}
Yanai, H., K. Takeuchi, and Y. Takane.
Projection Matrices, Generalized Inverse Matrices, and Singular Value Decomposition.
Statistics for Social and Behavioral Sciences, Springer New York, 2011.

\bibitem{You14}
Youssef, P. 
``Restricted invertibility and the Banach-Mazur distance to the cube.''
\textit{Mathematika,} 60, no. 1 (2014): 201-218.

\end{thebibliography}

\Addresses

\end{document}